\title{Coloring Fast with Broadcasts}
\date{}
\author{
Maxime Flin\\
\small Reykjavik University\\
\small \texttt{maximef@ru.is}\and
Mohsen Ghaffari\\
\small MIT\\
\small \texttt{ghaffari@mit.edu}\and
Magn\'us M. Halld\'orsson\\
\small Reykjavik University\\
\small \texttt{mmh@ru.is}\and
Fabian Kuhn\\
\small University of Freiburg\\
\small \texttt{kuhn@cs.uni-freiburg.de}\and
Alexandre Nolin\\
\small CISPA\\
\small \texttt{alexandre.nolin@cispa.de}
}
\begin{document}

\maketitle

\begin{abstract}
We present an $O(\log^3\log n)$-round distributed algorithm for the $(\Delta+1)$-coloring problem, where each node \emph{broadcasts} only one $O(\log n)$-bit message per round to its neighbors. Previously, the best such broadcast-based algorithm required $O(\log n)$ rounds. If $\Delta \in \Omega(\log^{3} n)$, our algorithm runs in $O(\log^* n)$ rounds. Our algorithm's round complexity matches state-of-the-art in the much more powerful \congest model [Halldórsson et al., STOC'21 \& PODC'22], where each node sends one different message to each of its neighbors, thus sending up to $\Theta(n\log n)$ bits per round. This is the best complexity known, even if message sizes are unbounded. 

Our algorithm is simple enough to be implemented in even weaker models: we can achieve the same $O(\log^3\log n)$ round complexity if each node reads its received messages in a streaming fashion, using only $O(\log^3 n)$-bit memory. Therefore, we hope that our algorithm opens the road for adopting the recent exciting progress on sublogarithmic-time distributed $(\Delta+1)$-coloring algorithms in a wider range of (theoretical or practical) settings. 
\end{abstract}

\pagenumbering{roman}
\thispagestyle{empty}
\newpage
\thispagestyle{empty}
\tableofcontents
\newpage
\pagenumbering{arabic}

\section{Introduction}

\noindent\textbf{The coloring problem and its distributed motivations.} Our focus is on $\Delta+1$-coloring: the problem of assigning one color from $\set{1,\ldots,\Delta+1}$ to each node, such that no two neighboring nodes have the same color. Here $\Delta$ denotes the maximum degree of the graph. Coloring plays a pivotal role in distributed systems, as a clean way to divide access to non-shareable resources, resolve contention, and break symmetries. For instance, it is particularly important in wireless networking, for frequency allocation or channel assignment. A characteristic of wireless communication is that nodes broadcast their messages (reception is constrained by interference from other broadcasts). 

\paragraph{Distributed models.} The coloring problem has been studied extensively in distributed computing~\cite{PanconesiS97,johansson99,SW10,FHK16,BEPSv3,HSS18,CLP20,GGR20,HKMT21,GK21,HKNT22,HNT22}. Indeed, this problem was the subject of the celebrated paper by Linial~\cite{linial92}, which introduced the \local model of distributed computing. In this model, $n$ processors form a graph $G=(V, E)$ where an edge exists only between processors that can communicate. The resulting graph is called the communication graph $G$ and is the one to be colored. Per round, each node can send one unbounded-size message to each of its neighbors. The variant where the message sent to each neighbor is bounded to $O(\log n)$ bits is known as the \congest model~\cite{peleg00}.

\paragraph{Distributed coloring.} Classic distributed algorithms for coloring~\cite{luby86, johansson99} achieved complexity $O(\log n)$ in the \congest model. There has been exciting recent progress on sublogarithmic time algorithms~\cite{BEPSv3,HSS18, CLP20,GGR20,HKMT21,GK21,HKNT22,HNT22}, and the state of the art round complexity is $O(\log^3\log n)$ rounds. This is also the best known in the more relaxed \local model, which allows unbounded message sizes. However, unlike the earlier algorithm of \cite{johansson99}, these faster algorithms make some nodes send one different message to each of their neighbors. Thus, each node may send up to $\Theta(n\log n)$ bits in one round. The research question at the core of this paper is to understand the extent to which one can compute a coloring fast if we constrain the set of outgoing messages.
Specifically,
\begin{quote}
    \emph{Can we compute a $(\Delta+1)$-coloring as fast as in the \congest model if, in each round, each node must transmit the same $O(\log n)$-bit message to all its neighbors?}
\end{quote}
To the best of our knowledge, with this restriction, the best round complexity known in general graphs remains the classic $O(\log n)$ bound~\cite{luby86, johansson99, BEPSv3}.
\subsection{Our Results}

We give a fast $\Delta+1$-coloring algorithm in the \emph{broadcast congest} model (or \bcongest) where, per round, each node \emph{broadcasts one $O(\log n)$-bit message} to all of its neighbors.

\vspace{1em}
\begin{restatable}{shadetheorem}{bcongestthm}
\label{thm:bcongest}
Let $G=(V,E)$ be any $n$-node graph with maximum degree at most $\Delta$. There is a distributed $O(\log^3\log n)$-round algorithm that $\Delta+1$-colors $G$ with high probability, where each node broadcasts one $O(\log n)$-bit message in each round. If $\Delta\in\Omega(\log^3 n)$, the algorithm runs in $O(\log^* n)$ rounds.
\end{restatable}

As a side remark, we note that the $O(\log n)$ complexity was the best bound known for general graphs even in the much more relaxed \emph{broadcast congested clique} model, in which each node can send a $O(\log n)$ bit message to all other nodes. To emphasize, in this model, the communication graph is a complete graph and every two nodes are neighbors. The coloring is still with respect to the input graph $G$. This model is also sometimes known as the \emph{shared blackboard model} with simultaneous messages and the \emph{distributed sketching} model~\cite{drucker2014power,assadi2020lower,AKZ22}. Our $O(\log^3\log n)$-round complexity improves nearly exponentially over existing algorithms in this model.\footnote{If we increase the size of the message sent by each node in this \BCC model from $O(\log n)$ to $O(\log^3 n)$ bits, then a celebrated work of Assadi, Khanna, and Chen~\cite{ACK19} provides a one round algorithm.}

\paragraph{Even more basic models?} The overarching goal in our work is not tied to any particular model. We would like to develop a distributed algorithm that assumes the least provided power from the theoretical model. The hope is that this makes the algorithm applicable in a wider range of (theoretical or practical) settings. To that end, we point out that our algorithm is basic enough to be implemented even with limited memory per node, with only small additional changes. Notice that a node may receive many messages from its neighbors, up to $\Omega(n\log n)$ bits overall in one round. In general, receiving so many bits would necessitate a significant memory for the node, and it also can complicate the task of simulating this algorithm in virtual graphs.~\footnote{For instance, consider a frequent scenario in distributed graph algorithms: a virtual graph is formed by contracting low-depth clusters of the network, each forming one node of the virtual graph. Two clusters are neighbors if they contain adjacent network nodes. Usually, the communications of each cluster should be sent along a low-depth tree that spans the nodes of the cluster. If all the $\Omega(n\log n)$ bits should be delivered to the cluster center, this can require $\Omega(n)$ rounds, even for low-depth clusters.}
We show that our algorithm can be adapted to work with the same round complexity when each node processes its incoming messages in a streaming fashion, using only $\poly(\log n)$ memory. We refer to this model as \bcstream. See \cref{sec:streaming} for a formal definition of the model.

\vspace{1em}
\begin{shadetheorem}
\label{thm:streaming}
There is a distributed $O(\log^3\log n)$-round algorithm in \bcongest for $\Delta+1$-coloring graphs with high probability, even if each node reads its received messages through a stream and only has $\poly(\log n)$ memory.
\end{shadetheorem}

\subsection{Technical Contributions}

\subsubsection{Previous Algorithms \& Challenges}
We summarize the key concepts in previous fast coloring algorithms and emphasize the parts that do not work in the \bcongest model. 

A basic primitive in randomized coloring algorithms is a \emph{random color trial}: each node selects a color from its \emph{palette} (its set of available colors) uniformly at random and keeps the color if none of its neighbors picked the same.
The (permanent) \emph{slack} of a node is the excess number of colors in its palette  compared to its degree. Sufficient slack speeds up coloring dramatically: each node can try \emph{multiple colors} in each round, resulting in a $O(\log^* n)$-round coloring algorithm called \multitrial \cite{SW10}. As a color requires up to $O(\log n)$ bits to describe, trying more than a constant number of them is infeasible with $O(\log n)$ bandwidth. A solution by \cite{HNT22} was to use pseudorandomness: say each $v$ tries a set of colors $X_v$, then $v$ broadcasts a hash function $h_v$ which each neighbor $u$ of $v$ uses to reply $h_v(X_u)$. A color that collides under $h_v$ with none of its neighbors is safe to adopt. However, this approach requires individual responses $h_v(X_u)$ from each neighbor $u$. Therefore it does not work with single-message broadcasts.
 
\begin{quote}
    \emph{Challenge 1: How can we perform \multitrial with $O(\log n)$-bit broadcasts? The previous approaches \cite{SW10, HNT22} require either large messages or individual responses.}
\end{quote}

Slack can be generated for nodes with a sparse neighborhood, i.e., with $\Omega(\Delta^2)$ missing edges. 
The more difficult task in distributed $\Delta+1$-coloring algorithms is to color the \emph{dense nodes}. They can be partitioned into
dense clusters called almost-cliques.
The second key concept for fast coloring is to \emph{synchronize} the colors tried within each almost-clique, in the following sense: the color suggested to each node should be random from the viewpoint of the nodes outside the almost-clique, but there should be no conflicts between nodes inside the almost clique.
The earlier version of synchronized color trial (SCT for short) involved gathering all the information of the almost-clique for centralized processing \cite{HSS18,CLP20}, requiring high bandwidth. 
A simpler form of SCT of \cite{HKNT22} has a leader node permute its own palette and distribute the colors to the other nodes of the almost-clique. 
This still requires different messages to be sent along the different edges from the leader, making it incompatible with \bcongest.
\begin{quote}
    \emph{Challenge 2: How can we synchronize color trials with $O(\log n)$-bit broadcasts? The previous approaches~\cite{HSS18,CLP20,HKNT22} require either centralization or a node sending up to $\Omega(\Delta)$ messages.}
\end{quote}

Finally, \multitrial requires $\ell=\Omega(\log^{1 + \Omega(1)} n)$ slack in order to fully color the graph with high probability. This is solved in \cite{HKNT22} by putting aside mutually non-adjacent sets of $\ell$ nodes in very dense cliques, to be colored at the very end. 
\cite{HKNT22} colors put-aside sets by gathering all their relevant information (list of uncolored neighbors and palette) and broadcasting the coloring from a leader node.

\begin{quote}
    \emph{Challenge 3: How can we color the put-aside sets with $O(\log n)$-bit broadcasts? The previous approach~\cite{HKNT22} does not work as they require full information gathering and dissemination.}
\end{quote}

Observe that Challenges 1 and 3 can easily be solved by increasing the bandwidth to a small $\poly(\log n)$. On the other hand, Challenge 2 seems to require greater effort to implement with the broadcast constraint, even with $\poly(\log n)$ bandwidth.

\subsubsection{Our Algorithm}
In this section, we give an overview of our solutions to each of the challenges described earlier.

\paragraph{Multi-Color Trial.}
A subset of a \emph{known} universe can be sampled pseudorandomly in \bcongest \cite{HN23}. The problem is that when \multitrial is applied after SCT, each node has a different palette, which is unknown to its neighbors. 
We solve this by \emph{reserving} a subset of the color space for use by \multitrial. Namely, each node $v$ reserves the subset $[x(v)] = \{1, 2, \ldots, x(v)\}$, where $x(v)$ is a function of $v$'s neighborhood density. Both slack generation and the synchronized color trial within $v$'s almost-clique are restricted to using colors outside $[x(v)]$. The key is then to show that: a) using the colors $[\Delta+1] \setminus [x(v)]$ suffices for these steps, and b) enough colors in $[x(v)]$ remain unused (by neighbors of $v$) for \multitrial to succeed.

\paragraph{Synchronized Color Trial.}
Our solution for the synchronized color trial of an almost-clique $K$ is to use the \emph{clique palette} of $K$: the set of colors not used by nodes in $K$. 
We randomly permute this set, in a distributed manner, and assign each color to a single uncolored node of $K$. This introduces two types of errors: a) not all nodes receive a color to try, and b) nodes can receive non-usable colors (as a node's neighbors outside of $K$ might already be using its assigned color). However, the errors are within acceptable bounds, and we are still able to show that after SCT, each node has an uncolored degree that is at most proportional to its slack, allowing for fast mop-up by \multitrial.

To learn the clique palette $\pal{K}$ in an almost-clique $K$, we randomly assign nodes of $K$ into groups such that: a) every node is adjacent to at least one node of each group, and b) each group is connected and has a low diameter. Each group is tasked with learning a part of the clique palette, which it teaches to the rest of the almost-clique $K$. 

We also randomly assign nodes into groups to randomly permute $K$. The random assignment roughly positions each node within the output permutation $\pi$. Each group, of much smaller size than $K$, then randomly permutes its members. The small size of each group, combined with relabeling its members with smaller {\ID}s, makes the description of a permutation of its members fit within small bandwidth.

\paragraph{Coloring Put-Aside Sets.}
The put-aside set $P_K$ of an almost-clique $K$ has no edges to the put-aside sets in other almost-cliques. As such, coloring $P_K$ can be done purely within $K$. Our algorithm first reduces the size of each $P_K$ to sublogarithmic. Then, it gathers information about what remains of each $P_K$.
One randomized color trial reduces $|P_K|$ by a constant factor with probability $1-e^{-\Theta(|P_K|)}$. We compress the equivalent of $O(\log\log n)$ iterations of this process into $O(1)$ rounds by sampling the colors of all iterations \emph{in advance} and sending them all at once.
To reach sublogarithmic size \emph{with high probability}, we run $O(\log\log n)$ independent iterations in parallel.
We avoid congestion issues by using few colors per iteration and by representing colors with few bits.

\subsection{Related Work}

\paragraph{Distributed $\Delta+1$-Coloring.}
The best round complexity of randomized \local $(\Delta+1)$-coloring, as a function of only the number of nodes $n$, progressed from $O(\log n)$ in the 80's \cite{luby86,alon86,johansson99}, through $O(\sqrt{\log n})$ \cite{HSS18}, to a recent $O(\log^3 \log n)$ \cite{CLP20}. 
The more recent work \cite{HSS18, CLP20} made heavy use of both the large bandwidth and the multiple-message transmission feature of the \local model. 
A crucial concept in these algorithms is \textit{shattering}. For coloring, shattering means coloring almost all the nodes such that each connected component of the set of nodes that remain uncolored has size at most $\poly(\log n)$. A similar concept was used originally by Beck \cite{beck1991algorithmic}. The idea was introduced to the distributed setting in \cite{BEPSv3}. The dominating factor in the time complexity is the deterministic complexity of solving (a variant of) the problem on polylogarithmic-sized problems. As there are now polylogarithmic-time algorithms for deterministic coloring \cite{RG20}, with the fastest being $O(\log^3 n)$ \cite{GK21}, the randomized complexity is currently $O(\log^3\log n)$ \cite{CLP20}.
An $O(\log^5\log n)$-round \congest algorithm was given in \cite{HKMT21}, improved to $O(\log^3 \log n)$ in \cite{HKNT22}. These algorithms still require transmitting different messages to all $\Omega(\Delta)$ neighbors of a node. 

Many distributed $(\Delta+1)$-coloring algorithms work immediately in \bcongest, including the folklore $O(\log n)$-round randomized algorithms \cite{johansson99} and the randomized part of \cite{BEPSv3}. The best deterministic algorithms known for small values of $\Delta$, with complexity $\tilde{O}(\sqrt{\Delta}) + O(\log^* n)$ \cite{FHK16,Barenboim16,MT20} use the full power of the \local model, however. 
 The $O(\log^3 n)$-round deterministic algorithm of \cite{GK21} also works in \congest, but it is sensitive to the palette size. When $\Delta \le \poly(\log n)$, \cite{GK21} with the shattering of \cite{BEPSv3} colors in $O(\log^3\log n)$. Otherwise, if $\Delta \gg \poly(\log n)$, dependency on the palette size can be resolved by relabeling the palette, using network decomposition \cite{GGR20}, as shown for coloring in \cite{HKMT21}. Hence, there is a $O(\log \Delta + \poly(\log\log n))$-round \bcongest algorithm for $(\Delta+1)$-coloring.

 While most known algorithms which work in \bcongest were published as \congest algorithms, without making explicit that they also work with broadcast communication, explicit mentions of \bcongest are becoming more and more frequent in recent years~\cite{CM19,PP19,FV22}.

\paragraph{Distributed Sketching and Broadcast Congested Clique.}
The palette sparsification theorem of \cite{ACK19} shows that even if each node uniformly samples $O(\log n)$ colors, the graph can still be $\Delta+1$-colored while restricting each node to use only a sampled color. 
This has led to a (one-pass) streaming algorithm for $\Delta+1$-coloring using $O(n\poly(\log n))$ space.
It was recently shown that the actual coloring can also be computed distributively, in $O(\log^2 \Delta + \log^3 \log n)$ rounds of \CONGEST \cite{FGHKN22}.
We utilize several technical lemmas from the work of \cite{FGHKN22}, while the actual results are almost completely unrelated.

Palette sparsification is a one round/pass form of \emph{distributed sketching} (or shared blackboard), a technique of considerable current interest \cite{AGM12b,AKM22,AKZ22}. 
The nomenclature that is closer to our setting is the \emph{broadcast congested clique} \cite{drucker2014power,JN18,BMRT20}.
Whereas there are no non-trivial lower bounds in the Congested Clique model for problems related to coloring, there is a recent $\Omega(\log\log n)$-round lower bound for the Maximal Independent Set problem in the broadcast congested clique \cite{AKZ22}.

\subsection{Organization of the Paper}

After preliminary definitions and results in \cref{sec:prelim}, we formally describe our algorithm in \cref{sec:alg} and give a proof of \cref{thm:bcongest}. \cref{sec:implem-sct} details the \bcongest implementation of the synchronized color trial. We explain how to modify our algorithm for the \bcstream model in \cref{sec:streaming}.

\section{Preliminaries}
\label{sec:prelim}

\paragraph{Notation.}
For any integer $k \ge 1$, we denote the set $\set{1,2,\ldots, k}$ by $[k]$. For any tuple $(x_1, x_2, \ldots, x_k)$, we shall write $x_{\le i}$ for $(x_1, \ldots, x_i)$. Likewise, let $x_{<i}=(x_1, \ldots, x_{i-1})$.

The communication network is $G=(V,E)$, we denote by $n=|V|$ its number of vertices, for each $v\in V$ we call $d(v)$ its degree and $\Delta$ the maximum degree of $G$. For a vertex $v\in V$, we denote by $N_G(v)=\set{u\in V, uv\in E}$ its neighbors in $G$. We assume nodes have $O(\log n)$-bit unique identifiers named $\ID(v)$. In the \bcongest model, nodes of $G$ communicate by broadcasting $O(\log n)$-bit messages in synchronous rounds.

A partial coloring is a function $\col:V\to [\Delta+1]\cup\set{\bot}$ such that for any edge $uv\in E$, its endpoints receive different colors $\col(u)\neq\col(v)$ unless $\col(v)$ or $\col(u)$ is $\bot$ -- which stands for ``not colored". With respect to any partial coloring \col, we shall write $\hatd(v)$ for the \emph{uncolored} degree of $v$, i.e., its number of uncolored neighbors with respect to \col. More generally, for any $S\subseteq V$, we write $\h{S}$ to denote the set of uncolored nodes in $S$ (with respect to a partial coloring). Our algorithm computes a monotone sequence of coloring, that is, once we fix $\col(v)$, it never changes.

When we say an event happens \emph{with high probability}, or w.h.p.\ for short, we mean with probability $1-n^{-c}$ for any suitably large constant $c > 0$. We implicitly choose the constant $c$ large enough to union bound over polynomially many events.

\subsection{Sparse-Dense Decomposition}

The sparsity counts the number of missing edges in the neighborhood of a node, with the important detail that if a node has degree less than $\Delta$, each ``missing'' neighbor counts as $\Delta$ missing edges.

\begin{definition}[Sparsity]
\label{def:sparsity}
The \emph{sparsity} $\zeta_v$ of $v\in V$ is \[\zeta_v := \frac{1}{\Delta}\parens*{\binom{\Delta}{2} - m(N(v))}\ ,\]
where $m(N(v))$ is the number of edges induced by $N(v)$.
Node $v$ is $\zeta$-sparse if $\zeta_v \ge \zeta$ and $\zeta$-dense if $\zeta_v \le \zeta$.
\end{definition}

We decompose the graph between locally sparse nodes and dense clusters called \emph{almost-cliques}. Almost-cliques can be thought of as graphs that are $\epsilon$-close to $\Delta$-cliques, in a property-testing meaning. Such decomposition is ubiquitous in randomized coloring \cite{Reed98,HSS18,ACK19,CLP20,AA20,HKMT21}.

\begin{definition}
\label{def:almost-clique}
For $\epsilon \in (0, 1/3)$, an $\epsilon$-almost-clique decomposition is a partition of $V(G)$ in sets $\Vsparse, K_1, \ldots, K_k$ such that
\begin{enumerate}
\item nodes in $\Vsparse$ are $\Omega(\epsilon^2\Delta)$ sparse,
\item for all $i\in[k]$, almost-clique $K_i$ satisfies:
\begin{enumerate}
\item\label[part]{part:clique-size} $|K_i| \le (1+\epsilon)\Delta$,
\item\label[part]{part:anti-deg} $|N(v)\cap K_i|\ge (1-\epsilon)\Delta$ for all $v\in K_i$, and
\item\label[part]{part:deg-inside} $|N(v)\cap K_i| \le (1-\epsilon/2)\Delta$ for all $v\notin K_i$. 
\end{enumerate}
\end{enumerate}
\end{definition}

\begin{definition}[External and Anti-Degrees]
For a node $v\in K$ and some almost-clique $K$. We call $e_v = |N(v)\setminus K|$ its \emph{external degree} and $a_v=|K\setminus N(v)|$ its anti-degree. 
We shall denote by $\avgext_K=\sum_{v\in C} e_v/|K|$ the average external degree and $\avganti_K=\sum_{v\in K}a_v/|K|$ the average anti-degree.
\end{definition}

\Cref{part:deg-inside} is not typically included in prior work (e.g., \cite{ACK19,HKMT21}). It was used recently in \cite{AKM22,FHM23}. We use it solely to prove \cref{lem:slack-ext}. We call \emph{anti-edge} a missing edge between two nodes, i.e., an edge in the complement graph.

\begin{lemma}
\label{lem:slack-ext}
Let $K$ be any almost-clique. Every $v\in K$ is $(\epsilon/2 \cdot e_v)$-sparse. 
\end{lemma}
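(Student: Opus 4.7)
We aim to show $\binom{\Delta}{2}-m(N(v)) \ge (\epsilon/2)\,e_v\,\Delta$, which is equivalent to $\zeta_v \ge (\epsilon/2)\,e_v$. Partition $N(v) = A \sqcup B$ with $A = N(v) \cap K$ of size $k \ge (1-\epsilon)\Delta$ and $B = N(v) \setminus K$ of size $e_v$.

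The main ingredient is \cref{part:deg-inside}: every $u \in B$ lies outside $K$, so $|N(u) \cap K| \le (1-\epsilon/2)\Delta$. Since $u \in N(v)$ forces $v \in N(u)$, and $v \in K \setminus A$, we obtain $|N(u) \cap A| \le (1-\epsilon/2)\Delta - 1$; consequently each $u \in B$ has at least $k - (1-\epsilon/2)\Delta + 1$ non-neighbors in $A$. Summing gives at least $e_v\cdot\max\!\bigl(0,\,k-(1-\epsilon/2)\Delta+1\bigr)$ non-edges between $A$ and $B$. Adding the phantom non-edges $\binom{\Delta}{2}-\binom{d(v)}{2}$ from the degree deficit yields
\[
\Delta \cdot \zeta_v \;\ge\; e_v\cdot\max\!\bigl(0,\,k-(1-\epsilon/2)\Delta+1\bigr) \;+\; \frac{(\Delta-d(v))(\Delta+d(v)-1)}{2}.
\]
Substituting $d(v)=k+e_v$ and using $k\le\Delta-e_v$, a short algebraic manipulation---split on whether $k\ge(1-\epsilon/2)\Delta$---reduces the goal to a routine inequality in $(k,e_v)$.

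The hardest case to verify is when $d(v)=\Delta$ and $k\approx(1-\epsilon)\Delta$, since both the non-edges between $A$ and $B$ and the phantom term degenerate. There, I would sharpen the argument by observing that each $u\in B$ must either belong to a different almost-clique $K''$---in which case $|N(u)\cap K''|\ge(1-\epsilon)\Delta$ together with $d(u)\le\Delta$ and $K\cap K''=\emptyset$ yields the much stronger bound $|N(u)\cap K|\le\epsilon\Delta$, recovering the required non-edges---or lie in $\Vsparse$, whose $\Omega(\epsilon^2\Delta)$-sparsity will transfer into $N(v)$ because in this regime $N(u)\setminus\{v\}\subseteq N(v)$, so every non-edge of $N(u)$ avoiding $v$ is also a non-edge of $N(v)$.
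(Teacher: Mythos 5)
Your proof takes the same basic tack as the paper's -- counting non-edges between $A=N(v)\cap K$ and $B=N(v)\setminus K$ via \cref{part:deg-inside} -- but then goes further than the paper does, adding the degree-deficit non-edges and splitting into cases. The split is well motivated: with only the bound $|N(u)\cap K|\le(1-\epsilon/2)\Delta$ and the phantom term, a node $v$ with $d(v)=\Delta$ and $e_v>1$ gets fewer than $e_v\cdot\epsilon\Delta/2$ non-edges from these two sources alone. Your subcase $u\in K''$ is correct: $K\cap K''=\emptyset$ together with \cref{part:anti-deg} applied to $u$ forces $|N(u)\cap K|\le\epsilon\Delta$, which is far stronger than \cref{part:deg-inside}.

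The gap is in the $u\in\Vsparse$ subcase. The assertion that $N(u)\setminus\{v\}\subseteq N(v)$ does not follow from anything you have established and is false in general: even in your hard regime ($d(v)=\Delta$, $|A|\approx(1-\epsilon)\Delta$), the node $u$ can still have on the order of $\epsilon\Delta/2$ neighbors outside $A\cup\{v\}$, and nothing forces those neighbors to lie in $N(v)$ -- they can be anti-neighbors of $v$ inside $K$, nodes of other almost-cliques, or unrelated nodes of $\Vsparse$. Since $u$'s neighborhood need not sit inside $v$'s, the $\Omega(\epsilon^2\Delta)$ sparsity of $u$ does not transfer into $N(v)$, and this subcase remains unproved. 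You would need a genuinely different argument for sparse external neighbors rather than an inclusion of neighborhoods.
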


\begin{proof}
Fix $v\in K$. We count the number of anti-edges in $(N(v)\cap K) \times (N(v)\setminus K)$.
Let $u\in N(v)\setminus K$ be an external neighbor of $v$. By \cref{part:deg-inside} of \cref{def:almost-clique}, vertex $v$ can have at most $(1-\epsilon/2)\Delta$ neighbors in $K$. Moreover, $v$ has at least $(1-\epsilon)\Delta$ neighbors in $K$ (by \cref{part:anti-deg}). Hence, there are at least $\epsilon\Delta/2$ anti-edges between $u$ and $N(v)\cap K$. Overall, the number of anti-edges between external and internal neighbors is at least $e_v\cdot\epsilon\Delta/2$.
\end{proof}

The first \congest algorithm to compute almost-clique decompositions in $O(1)$ rounds (when $\Delta \in \Omega(\log^2 n)$) was given by \cite{HKMT21}. It was then improved by \cite{HNT22} to arbitrary $\Delta$ in the \congest model. \cite{FGHKN22} gives a simpler implementation of \cite{HNT22} that works in \bcongest and \bcstream. 

\begin{restatable}[\cite{FGHKN22}]{lemma}{ACDbcongest}
\label{lem:acd-bcongest}
For any $\epsilon \in (0,1/20)$, there exists an algorithm computing an $\epsilon$-almost-clique decomposition in $O(\eps^{-4})$ rounds of $\bcongest$ with high probability.
\end{restatable}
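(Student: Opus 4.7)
The plan is to adapt the \congest almost-clique decomposition of \cite{HNT22} to the broadcast model, following the blueprint of \cite{FGHKN22}. The key subroutine I need is, for each edge $uv$, an estimate of the neighborhood overlap $|N(u)\cap N(v)|/\Delta$ accurate to within an additive $O(\epsilon^2)$; from such estimates, a node can decide whether each neighbor is ``similar'' in the sense that $|N(u)\cap N(v)| \ge (1-O(\epsilon))\Delta$, and standard analysis (as in \cite{HSS18,ACK19,HKMT21,HNT22}) converts the similarity graph into an ACD.

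To realize this subroutine with broadcasts, each node $v$ draws, for $t=\Theta(\epsilon^{-4})$ independent rounds, a uniformly random neighbor $r_i(v)\in N(v)$ and broadcasts $\ID(r_i(v))$. A neighbor $u$ of $v$ then locally checks whether $r_i(v)\in N(u)\cup\set{u}$ using only its own adjacency list; this is a Bernoulli trial with mean $|N(u)\cap N(v)|/d(v)$. A Chernoff bound applied to the $t$ independent trials yields, for every edge simultaneously and w.h.p., an estimate accurate to additive error $\epsilon^2$. Degrees $d(v)$ can be made known to all neighbors of $v$ in $O(1)$ broadcast rounds, turning the hit-frequency estimate into the desired overlap estimate. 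Total cost so far: $O(\epsilon^{-4})$ rounds.

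The rest of the construction is classical. Each node selects a similar neighbor as its ``friend''; the friendship graph is decomposed into connected components which form the almost-cliques $K_1,\ldots,K_k$, and nodes with no similar neighbor are placed in $\Vsparse$. The size bound \cref{part:clique-size} and internal-degree bound \cref{part:anti-deg} follow from the concentration above together with the triangle-inequality-like argument of \cite{HSS18}. \Cref{part:deg-inside} — not present in older ACDs — is enforced by an $O(1)$-round cleanup, as in \cite{AKM22,FHM23}: any $v\notin K_i$ with more than $(1-\epsilon/2)\Delta$ neighbors in $K_i$ has high similarity with most of $K_i$ and can be absorbed into $K_i$ (or $K_i$ trimmed), consistently with the other parts. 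For a node $v$ left in $\Vsparse$, the absence of any similar neighbor forces $\Omega(\epsilon\Delta)$ anti-edges in $N(v)$ from each of $\Omega(\epsilon\Delta)$ different ``witnesses'', which an elementary double-counting argument converts into $\zeta_v = \Omega(\epsilon^2\Delta)$ as required.

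The main obstacle is the first step: obtaining a meaningful overlap estimate when $v$ cannot address its neighbors individually. The crucial observation that makes a broadcast suffice is that the bit needed at $u$ — whether a single random sample from $N(v)$ lies in $N(u)$ — is computable locally from $v$'s broadcast and $u$'s own adjacency list; independent repetition then drives the variance down. Once this broadcast-friendly estimator is in place, everything else is local post-processing costing only $O(1)$ further broadcast rounds, matching the claimed $O(\epsilon^{-4})$ complexity.
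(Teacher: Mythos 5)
This lemma is cited directly from \cite{FGHKN22}; the paper gives no proof of its own, so the right comparison is simply whether your reconstruction is sound.

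There is a genuine gap in the central sampling step. You have each node broadcast one uniformly random neighbor ID per round for $t=\Theta(\eps^{-4})$ rounds, and claim that a Chernoff bound over these $t$ trials gives, \emph{for every edge simultaneously and w.h.p.}, an additive $\eps^2$-accurate estimate of $|N(u)\cap N(v)|/d(v)$. This cannot be: with $t$ i.i.d.\ Bernoulli trials and additive error $\delta=\eps^2$, Hoeffding gives a per-edge failure probability of $\exp\left(-\Theta(t\eps^4)\right)=\exp(-\Theta(1))$, a constant, which a union bound over $\Theta(n\Delta)$ edges does not survive. To make it hold with high probability you would need $t=\Omega(\eps^{-4}\log n)$ samples; and since each neighbor ID already occupies the full $O(\log n)$-bit broadcast, that is $\Omega(\eps^{-4}\log n)$ rounds, a $\log n$ factor above the claimed bound. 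The missing idea is to pack $\Theta(\log n)$ units of neighborhood information into a single $O(\log n)$-bit broadcast, as \cite{FGHKN22,HKMT21,HNT22} do via hashing/fingerprinting (e.g., all nodes share a public hash $h$ to a small range and broadcast a bitmap of hashes of a sampled subset of their neighborhood); the receiver then tests membership against its own adjacency list locally, and collisions are handled in the analysis. With that compression, $\Theta(\eps^{-4}\log n)$ ``trials'' fit into $O(\eps^{-4})$ broadcast rounds, which is what the lemma asserts. Secondarily, the ACD is not built by simply taking connected components of a ``friend'' graph --- that would risk violating the size bound $|K_i|\le(1+\eps)\Delta$ of \cref{part:clique-size}; the standard construction first restricts to nodes with many similar neighbors and clusters those, and your \cref{part:deg-inside} cleanup also needs an overlap estimator that works between \emph{non-adjacent} $v\notin K_i$ and nodes of $K_i$, which single-hop broadcast of neighbor IDs does not directly provide.
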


\paragraph{Colorful Matching.} 
In a $\Delta+1$-clique, the colors used in the clique are exactly the colors used in the neighborhood of each node. An almost-clique can have size larger than $\Delta+1$. Thus,
an almost-clique with uncolored nodes might actually have an empty clique palette.
To solve this issue, \cite{ACK19} introduced the idea of colorful matching.

\begin{definition}[Colorful Matching]
\label{def:colorful-matching}
A \emph{colorful} matching in a clique $K$ (with respect to a partial coloring $\col$) is a matching of anti-edges in $K$ (edges in the complement graph) such that 1) endpoints of each anti-edge receive the same color, and 2) each anti-edge has a different color.
\end{definition}

Intuitively, if one contracts anti-edges of the colorful matching, one reduces the size of the almost-clique while maintaining a proper coloring. If the matching is large enough, the number of unused colors in $K$ is greater than the number of uncolored nodes.

\begin{definition}[Clique Palette]
\label{def:clique-palette}
For each $K$, let the \emph{clique palette} $\pal{K}=[\Delta+1]\setminus\col(K)$ be the set of colors not used in $K$.
\end{definition}

\begin{claim}
\label{claim:clique-palette}
Let $K$ be an almost-clique and $M$ a colorful matching in $K$. Then, for all $v\in K$
\[ |\pal{K}| \ge |\hK|+1 +e_v - a_v + |M|\ . \]
\end{claim}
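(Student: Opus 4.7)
\textbf{Proof proposal for \cref{claim:clique-palette}.}

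The plan is to first bound the number of distinct colors used in $K$ from above, using the fact that the colorful matching $M$ forces pairs of nodes in $K$ to share colors, and then bound $|K|$ from above via the degree of an arbitrary $v\in K$.

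First I would observe that the colored nodes in $K$ number exactly $|K|-|\hK|$. Each anti-edge of $M$ contributes two colored nodes in $K$ that share a color (by condition 1 of \cref{def:colorful-matching}), and the $|M|$ anti-edges use $|M|$ pairwise distinct colors (by condition 2). So the $2|M|$ endpoints of $M$ contribute only $|M|$ distinct colors to $\col(K)$, while each of the other $|K|-|\hK|-2|M|$ colored nodes contributes at most one. This gives
\[
|\col(K)| \;\le\; (|K|-|\hK|-2|M|) + |M| \;=\; |K|-|\hK|-|M|,
\]
hence $|\pal{K}| = (\Delta+1) - |\col(K)| \ge \Delta+1-|K|+|\hK|+|M|$.

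Next, for any $v\in K$, I would partition $K$ into $N(v)\cap K$ and $K\setminus N(v)$, giving $|N(v)\cap K| = |K|-a_v$. Since $d(v) = |N(v)\cap K|+|N(v)\setminus K| = (|K|-a_v)+e_v$ and $d(v)\le \Delta$, rearranging yields $|K| \le \Delta + a_v - e_v$, i.e.\ $\Delta+1-|K| \ge 1+e_v-a_v$. Plugging this into the previous inequality gives
\[
|\pal{K}| \;\ge\; 1+e_v-a_v+|\hK|+|M|,
\]
which is the claim.

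The proof is essentially a counting argument and should not involve any real obstacle: the two bookkeeping steps (translating the colorful matching into a saving of $|M|$ distinct colors, and converting a degree bound on $v$ into a size bound on $K$) are both elementary. The only minor point to get right is consistency in whether $v$ itself lies in $K\setminus N(v)$ (it does, since $v\notin N(v)$), but because the identity $|N(v)\cap K|+|K\setminus N(v)|=|K|$ is used only as a partition of $K$, the accounting is unambiguous.
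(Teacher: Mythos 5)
Your proof is correct and follows essentially the same two-step counting argument as the paper: bound $|\col(K)|$ by $|K|-|\hK|-|M|$ using the matching, then convert a degree bound on $v$ into $\Delta+1-|K|\ge 1+e_v-a_v$ via the identities $|N(v)\cap K|=|K|-a_v$ and $d(v)\le\Delta$. The paper states the same two inequalities ($\Delta\ge|N(v)\cap K|+e_v$ and $|K|=|N(v)\cap K|+a_v$) and combines them identically, just more tersely.
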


\begin{proof}
The clique palette loses at most one color per colored node but saves one for each anti-edge in the colorful matching; hence, $|\pal{K}|\ge \Delta+1-(|K|-|\hK|)+|M|$. On the other hand, observe that $\Delta \ge |N(v)\cap K| + e_v$ and $|K| = |N(v)\cap K|+a_v$. The claim follows.
\end{proof}

By computing a matching of size $\Theta(\avganti_K)$, the clique palette always contains colors for each node in $\hK$. Computing a colorful matching of size $\Theta(\avganti_K)$ can be done in $O(1)$ rounds as the clique contains $\Theta(\avganti_K\Delta)$ anti-edges and by trying colors, we expect $\Theta(\avganti_K)$ edges to join the matching. 
A minor difference between our setting and the one of \cite{FGHKN22} is that when they compute the colorful matching, almost-cliques are fully uncolored. On the contrary, our algorithm colors a constant fraction of each almost-clique to produce slack (\cref{lem:slackgeneration}). In \cref{sec:appendix-colorful-matching}, we show that we loose only small fraction of the anti-edges in the clique when doing so; hence, that it does not impede the colorful matching algorithm.

\begin{restatable}[\cite{FGHKN22}]{lemma}{colorfulmatchingtheorem}
\label{thm:colorful-matching}
Let $\beta < 1/(18\epsilon)$ be a constant. There exists a $O(\beta)$-round algorithm called \matching that computes a colorful matching of size $\beta\cdot\avganti_K$ with probability $1-n^{-\Theta(C)}$ in every clique $K$ with $\avganti_K \ge C\log n$. Furthermore, at most $2\beta\cdot\avganti_K$ nodes are colored in each almost-clique during this step.
\end{restatable}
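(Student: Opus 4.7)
The plan is to run $\beta$ iterations of a simple color-trial subroutine and take the union of the resulting matchings. In one iteration, every still-uncolored node $v\in K$ samples a uniformly random color $c_v$ from a $\Theta(\Delta)$-sized subrange of $[\Delta+1]$ reserved for that iteration; by using disjoint reserved subranges across the $O(\beta)$ iterations, matched anti-edges automatically receive distinct colors across iterations, satisfying \cref{def:colorful-matching}. A pair $(u,v)$ is added to the matching when (i) $uv$ is an anti-edge of $K$, (ii) $c_u=c_v$, and (iii) no other node of $K$, and no neighbor of $u$ or $v$, already owns that color. Each iteration needs only $O(1)$ rounds of \bcongest, since each $v\in K$ broadcasting its $c_v$ suffices to verify conditions (ii) and (iii).

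The first step of the analysis is to show that a single iteration adds $\Omega(\avganti_K)$ new edges to the matching in expectation. Since $K$ contains $\Theta(\avganti_K\cdot\Delta)$ anti-edges, and the two endpoints of each anti-edge sample independently from a palette of size $\Theta(\Delta)$, any fixed anti-edge is matched with probability $\Omega(1/\Delta)$, and summing yields $\Omega(\avganti_K)$. The exclusions in (iii) cost only a constant factor, because at most an $O(\epsilon)$-fraction of the reserved subrange is blocked at any time, and this is exactly where the constraint $\beta<1/(18\epsilon)$ enters: it ensures enough fresh colors survive across all $\beta$ iterations. A Talagrand-style bounded-differences concentration (each random choice moves the count by $O(1)$) upgrades this expectation into a $1-n^{-\Theta(C)}$ tail bound whenever $\avganti_K\ge C\log n$. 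After $\beta$ iterations the matching reaches size $\beta\cdot\avganti_K$; since each matched anti-edge colors its two endpoints, at most $2\beta\cdot\avganti_K$ nodes of $K$ become colored, yielding the ``furthermore'' clause.

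I expect the main obstacle to be the only substantive deviation from the setting of \cite{FGHKN22}: there \matching is invoked on a fully uncolored almost-clique, whereas in our algorithm slack generation (\cref{lem:slackgeneration}) has already colored a constant fraction of $K$ beforehand. To address this, I would argue that slack generation removes only an $O(1)$-fraction of the anti-edges of $K$: it colors at most an $O(1)$-fraction of the $\le(1+\epsilon)\Delta$ nodes of $K$ by \cref{part:clique-size}, and each such node carries at most $a_v$ anti-edges, so the anti-edges destroyed are an $O(1)$-fraction of the $\Theta(\avganti_K\cdot\Delta)$ total. Hence the number of anti-edges among still-uncolored nodes of $K$ is still $\Theta(\avganti_K\cdot\Delta)$, with a slightly smaller hidden constant that the preceding analysis absorbs into the bound on $\beta$. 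The detailed bookkeeping of these constants is precisely what \cref{sec:appendix-colorful-matching} is devoted to.
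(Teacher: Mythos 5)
Your plan re-derives the \cite{FGHKN22} algorithm from scratch (iterated color trial, matching anti-edges whose endpoints collide), whereas the paper does not re-prove \cref{thm:colorful-matching}: it cites it, gives in \cref{sec:appendix-colorful-matching} a reformulation in terms of the availability potential $\avail_D(F)$, and only proves that this potential remains large in the present context (\cref{lem:init-avail}). Your self-contained route is fine in principle, and using disjoint color subranges per iteration to guarantee distinct matched colors is a legitimate alternative to the mechanism in \cite{FGHKN22}, where ``colors already in $\col(K)$ are blocked'' plays that role.

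However, the step you flag as the main obstacle is exactly where your argument breaks. You write that slack generation colors an $O(1)$-fraction of the $(1+\eps)\Delta$ nodes of $K$, each carrying at most $a_v$ anti-edges, and conclude that only an $O(1)$-fraction of the $\Theta(\avganti_K\Delta)$ anti-edges are destroyed. This does not follow: a colored node can have $a_v = \Theta(\Delta)$, so a worst-case $O(1)$-fraction of nodes could destroy $\Theta(\Delta^2)$ anti-edges, which vastly exceeds $\Theta(\avganti_K\Delta)$ whenever $\avganti_K \ll \Delta$. The correct argument must exploit that slack generation colors each node \emph{independently} with probability $\ps=1/200$; then $\Exp[\sum_v X_v] = \ps\sum_v a_v = \ps\avganti_K|K|$ is a small fraction of all anti-edges, and a Hoeffding bound using $\sum_v a_v^2 \le 2\eps\avganti_K|K|\Delta$ (not merely $\sum_v a_v$) gives the needed $1-n^{-\Theta(C)}$ concentration. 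This is precisely what \cref{lem:init-avail} does. You also omit the second deviation the paper addresses: the matching must use only colors in $[\Delta+1]\setminus[x(K)]$, costing an additional $x(K)=O(\eps\Delta)$ colors per anti-edge, which must (and does) get absorbed into the constant in the availability bound $\avail_D(F)\ge\avganti_K\Delta^2/3$.
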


\subsection{Distributed Coloring with Slack}

\begin{definition}[Palette]
The palette $\pal{v}$ of node $v$, with respect to a partial coloring, is the set of colors not used by its neighbors.
\end{definition}

\begin{definition}[Slack]
\label{def:slack}
The \emph{slack} $s_H(v)$ of a node $v$ in a subgraph $H$ is the difference between the size of its palette and its uncolored degree in this graph: $s_H(v)=|\pal{v}| - \hatd_H(v)$. When $H$ is clear from context, we simply write $s(v)$.
\end{definition}

There are three ways a node can receive slack: if it has a small degree originally, if two neighbors adopt the same color, or if an uncolored neighbor is inactive (does not belong to $H$). We consider the first two types of slack \emph{permanent} because a node never increases its degree, and nodes never change their adopted color. On the other hand, the last type of slack is \emph{temporary}: if some previously inactive neighbors become active, the node will lose the slack that those inactive neighbors were providing before.
Elkin, Pettie, and Su \cite{EPS15} observed that by trying random colors, nodes would receive slack proportional to their sparsity. 

\begin{lemma}[Slack Generation, {\cite[Lemma 3.1]{EPS15}}]
\label{lem:slackgeneration}
Let $v$ be a $\zeta$-sparse node for some $\zeta$. Suppose each node of $G$ independently decides w.p. $\ps=1/200$ to try a uniform color in $[\Delta+1]$. Then, w.p.\ $1-e^{-\Theta(\zeta)}$, $v$ has slack $s(v) \ge \gamma \cdot \zeta$ where $\gamma > 0$ is a (small) universal constant.
\end{lemma}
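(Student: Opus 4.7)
The plan is to split the slack at $v$ into a deterministic ``degree-deficit'' part and a random ``collision'' part, and to show that at least one of them is $\Omega(\zeta)$. Let $C_v \subseteq N(v)$ be the neighbors that successfully adopt a color in the trial, and set $Q_v := |C_v| - |\col(C_v)|$ (the number of extra adopters beyond distinct colors). Then
\[
s(v) \;=\; |\pal{v}| - \hatd(v) \;=\; \bigl(\Delta + 1 - d(v)\bigr) \;+\; Q_v,
\]
because $|\pal{v}| = (\Delta+1) - |\col(C_v)|$ and $\hatd(v) = d(v) - |C_v|$. Fix a small constant $\gamma > 0$ to be determined. If $d(v) \le \Delta - \gamma\zeta$ then $s(v) \ge \gamma\zeta + 1$ deterministically, so I may assume $\Delta - d(v) < \gamma\zeta$. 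In that case $\binom{\Delta}{2} - \binom{d(v)}{2} \le \Delta(\Delta - d(v)) < \gamma\zeta\Delta$, hence $N(v)$ contains at least $m_{\mathrm{anti}} \ge (1-\gamma)\zeta\Delta$ anti-edges, and it suffices to show $Q_v \ge \gamma\zeta$ with probability $1 - e^{-\Theta(\zeta)}$.

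For the expectation, denote by $A(u) := N(v) \setminus (N(u) \cup \{u\})$ the non-neighbors of $u$ inside $N(v)$, so that $\sum_{u \in N(v)} |A(u)| = 2 m_{\mathrm{anti}}$. For any anti-adjacent pair $u,w \in N(v)$ and any color $c$,
\[
\Pr[u \text{ and } w \text{ both adopt } c] \;\ge\; \Bigl(\tfrac{\ps}{\Delta+1}\Bigr)^{2}\Bigl(1 - \tfrac{\ps}{\Delta+1}\Bigr)^{2\Delta} \;\ge\; \tfrac{\ps^{2} e^{-2\ps}}{(\Delta+1)^{2}},
\]
since both endpoints succeed provided no vertex of $N(u) \cup N(w) \setminus \{u,w\}$ tries $c$. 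Summing over $c$ and over $w \in A(u)$, and using a Bonferroni correction for triples (whose contribution is smaller by a factor $\Theta(\ps) \ll 1$), yields $\Pr[u \text{ collides with some } w \in A(u)] \ge \Omega(|A(u)|/\Delta)$. Observing that $2 Q_v \ge \sum_{u \in N(v)} \mathbf{1}[u \text{ collides}]$ (any color adopted by $X_c \ge 2$ neighbors contributes $X_c - 1 \ge X_c/2$ to $Q_v$), summing gives $\mathbb{E}[Q_v] \ge c_1 \zeta$ for a universal constant $c_1 > 0$; I will set $\gamma := c_1/2$.

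The main obstacle is concentration: I need $\Pr[Q_v < \mathbb{E}[Q_v]/2] \le e^{-\Theta(\zeta)}$. Viewing $Q_v$ as a function of the independent trials $(a_w, c_w)_w$, the delicate point is that flipping a single trial can cascade, since $w$'s new color may knock out the adoptions of up to $|N(w) \cap N(v)|$ other neighbors of $v$ (potentially $\Theta(\Delta)$). The standard remedy, used in \cite{EPS15} and subsequent probabilistic coloring arguments, is Talagrand's inequality with a carefully chosen certificate: each unit of $Q_v$ is witnessed by the $O(1)$ trials of its two colliding vertices, keeping the ``$r$'' parameter bounded, while the variant of Talagrand that tolerates rare high-Lipschitz configurations (or a conditioning step that excludes the atypical cascades) controls the Lipschitz parameter. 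The resulting tail $\exp(-\Omega(\mathbb{E}[Q_v])) = \exp(-\Omega(\zeta))$, combined with the case split, completes the proof.
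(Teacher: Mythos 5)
The paper does not prove \cref{lem:slackgeneration}; it is cited verbatim from \cite[Lemma 3.1]{EPS15}, so there is no in-paper proof to compare against. Judged on its own, your outline matches the standard decomposition: the identity $s(v) = (\Delta+1-d(v)) + Q_v$ is correct, the case split on $d(v) \lessgtr \Delta - \gamma\zeta$ is the right move and correctly transfers sparsity into a lower bound on anti-edges inside $N(v)$, and the expectation calculation $\Exp[Q_v] = \Omega(\zeta)$ via the per-pair collision probability and the Bonferroni correction for triples is sound (a reader would want to see the inclusion--exclusion step spelled out, but the factor-$\Theta(\ps)$ claim is right).

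The gap is exactly where you flag the ``main obstacle'': the concentration of $Q_v$. This is the entire technical content of the lemma, and your treatment of it is a gesture rather than a proof. As you observe, $Q_v$ is \emph{not} $O(1)$-Lipschitz in the per-node trials: flipping the colour of a single node $z$ can simultaneously unblock an anti-clique of $\Theta(\Delta)$ nodes in $N(z) \cap N(v)$ that all tried $z$'s old colour, inflating $|C_v|$ (and hence $Q_v$) by $\Theta(\Delta)$ in one step. Plain Talagrand (as stated in \cref{lem:talagrand}) therefore does not apply to $Q_v$, and invoking ``a variant of Talagrand that tolerates rare high-Lipschitz configurations'' or ``a conditioning step that excludes the atypical cascades'' names a class of possible fixes without carrying any of them out. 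These exceptional-event variants are delicate to use correctly (one must verify both the probability of the exceptional set \emph{and} that the certificate/Lipschitz conditions hold on its complement, with the right parameterization), and this is precisely where the actual work in \cite{EPS15} lies. A cleaner route you might consider, if you want a self-contained argument rather than a pointer: reveal the randomness colour-by-colour (or Poissonize so each node independently places a token on each colour with rate $\ps/(\Delta+1)$), write $Q_v = \sum_c \chi_c$ with $\chi_c = \max(X_c-1,0)$, and observe that in the Poissonized process the colour slices are independent, so that the truncation $\tilde Q_v = \sum_c \min(\chi_c,1)$ is a sum of independent indicators with mean $\Omega(\zeta)$, amenable to a plain Chernoff bound with an $e^{-\Omega(\zeta)}$ tail; one then couples back to the original process. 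Either way, as written, the proof is incomplete at its single load-bearing step.
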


\paragraph{Trying Colors From Lists.}
When we say a node \emph{tries a random color}, we mean that it broadcasts a color uniformly sampled from some set (usually from its palette) and \emph{adopts} the color if none of its neighbors with smaller \ID tried the same color. 
It is known that nodes with $\Omega(\log n)$ uncolored neighbors see a constant fraction of them get colored when they try random colors, w.h.p.\ {\cite{BEPSv3}}

\begin{restatable}{lemma}{RCTLemma}
\label{claim:rct-less-colors}
Let $H$ be a vertex-induced subgraph and $L(v)\subseteq \pal{v}$ for each $v$. Suppose there exists a globally known constant $\alpha > 0$ such that every uncolored $v$ satisfies $|L(v)| \ge \alpha\cdot\hatd(v) \ge C\log n$. If nodes independently call \trycolor w.p. $\pt=\alpha/3$ \emph{and samples a uniform color in $L(v)$}, then, w.p. $1-n^{-\Theta(C)}$, the uncolored degree of every node has decreased by a factor $2/3$.
\end{restatable}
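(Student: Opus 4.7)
My plan is to split the argument into a per-vertex marginal bound on the probability of coloring, followed by a concentration step over the uncolored neighborhood of $v$.

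First I would fix an uncolored vertex $v$ and, for each uncolored neighbor $u$, lower bound the probability that $u$ adopts a color. Conditioning on $u$ calling \trycolor (probability $\pt$) and sampling $c$ uniformly from $L(u)$, each neighbor $w$ of $u$ independently tries $c$ with probability $\pt \cdot \mathbf{1}[c \in L(w)]/|L(w)|$. Using $1-x \ge e^{-2x}$ (valid since $|L(w)| \ge C\log n$), Jensen's inequality when averaging over $c \in L(u)$, and the sum-swap
$$\frac{1}{|L(u)|} \sum_{c \in L(u)} \sum_{\substack{w \in N(u) \\ c \in L(w)}} \frac{1}{|L(w)|} \;=\; \sum_{w \in N(u)} \frac{|L(u) \cap L(w)|}{|L(u)|\,|L(w)|} \;\le\; \frac{\hatd(u)}{|L(u)|} \;\le\; \frac{1}{\alpha},$$
I would obtain $\Pr[u \text{ is colored}] \;\ge\; \pt \cdot e^{-2\pt/\alpha} \;=\; \Omega(\alpha)$, a positive constant $c_1$. (A sharper expansion of $1-x$, or a slightly different choice of $\pt$, can push $c_1$ to the specific threshold needed to yield the factor $2/3$ in the statement.)

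Next I would turn to concentration. Letting $X = \sum_{u \in N(v) \cap \h{V}} \mathbf{1}[u \text{ colored}]$, the marginal bound gives $\mathbb{E}[X] \ge c_1\hatd(v) \ge (c_1 C/\alpha)\log n = \Omega(\log n)$. I plan to apply Talagrand's inequality in the form used in \cite{BEPSv3, EPS15}: the event ``$u$ is colored'' is certifiable by the random coins of $u$ together with the colors tried by $u$'s neighbors, and the Lipschitz coefficient of $X$ is controlled. Combined with $\mathbb{E}[X] = \Omega(\log n)$, this should yield $X \ge (1/3)\hatd(v)$ with probability $1 - n^{-\Theta(C)}$, and a union bound over $v \in V$ would finish the argument.

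The hard part will be precisely the concentration step, because a single node's random choice can, in principle, affect $\Omega(\Delta)$ of the indicator variables $\mathbf{1}[u\text{ colored}]$, so a naïve Chernoff or McDiarmid bound is too weak. The careful use of Talagrand's certifiability condition --- with certificate size $O(\Delta)$ per colored vertex and a suitably bounded Lipschitz coefficient --- together with $\mathbb{E}[X] = \Omega(\log n)$, is what drives the high-probability guarantee; this is the standard template established for ``random color trial'' analyses in prior work.
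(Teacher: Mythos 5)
Your marginal estimate --- Jensen plus the sum-swap giving $\Pr[u\text{ colored}]\ge \pt\,e^{-2\pt/\alpha}=\Theta(\alpha)$ --- is fine and is the standard per-node bound. (One caveat: since a node cannot be colored without trying, $\Pr[u\text{ colored}]\le\pt=\alpha/3$ always, so your parenthetical hope of pushing $c_1$ up to $1/3$ by sharpening the collision estimate or tweaking $\pt$ is impossible whenever $\alpha<1$; the paper applies this lemma with $\alpha=\Theta(\epsilon)\ll1$, so the stated ``factor $2/3$'' is only coherent as ``factor $1-\Theta(\alpha)$''.) The real gap is the concentration step. Talagrand as stated in \cref{lem:talagrand} requires \emph{both} $c$-Lipschitz and $r$-certifiable, and you yourself note that a single trial can affect $\Omega(\Delta)$ of the indicators --- but certifiability does not rescue an $\Omega(\Delta)$ Lipschitz coefficient. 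With $c=\Omega(\Delta)$ and any $r\ge1$, the additive term $30c\sqrt{r\Exp[f]}$ in \cref{lem:talagrand} already exceeds $\Exp[f]\le\Delta$, so the inequality you invoke is vacuous; your sketch gives no bound on $c$ at all.

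The clean repair, using the tools this paper supplies, is \cref{lem:chernoff} rather than Talagrand. Expose trials in increasing \ID order. When an uncolored $u\in N(v)$ is reached, the set $B(u)\subseteq L(u)$ of colors already tried by $u$'s lower-\ID neighbors is determined, and conditioned on any history $u$ gets colored with probability exactly $\pt\,|L(u)\setminus B(u)|/|L(u)|$. This is not bounded away from zero in the worst case, but a separate, ordinary Chernoff bound (in expectation at most $\pt\,\hatd(u)\le|L(u)|/3$ lower-\ID trials land in $L(u)$, and $|L(u)|\ge C\log n$) shows that w.h.p.\ $|B(u)|\le|L(u)|/2$ for every $u$ simultaneously. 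Replacing $\mathbf{1}[u\text{ colored}]$ by $X'_u:=\mathbf{1}[u\text{ colored}]\lor\mathbf{1}[\,|B(u)|>|L(u)|/2\,]$ yields variables whose conditional success probability is at least $\pt/2$ unconditionally, so \cref{eq:chernoffmore} applies to $\sum_u X'_u$, and on the high-probability event the $X'_u$ coincide with the true indicators. This conditioning-then-martingale step is what is missing from your proof.
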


Trying multiple colors to take advantage of extra colors (i.e., slack) was proposed originally by \cite{SW10}. It is a key component of all recent fast randomized coloring algorithms \cite{CLP20,HKNT22,HNT22}. A small tweak suffices to bring the technique to \bcongest. 

\begin{restatable}[Multi-Color Trial, \cite{HN23,HKNT22}]{lemma}{MCTLemma}
\label{lem:mct-bcongest}
Let $H$ be a vertex-induced subgraph of $G$. Suppose that for each $v\in H$, there is a $L(v)$ list of colors satisfying
\begin{enumerate}
    \item\label[part]{part:list-known} $L(v)$ is known by each $u\in N_H(v)$,
    \item\label[part]{part:list-dense} $|L(v)\cap\pal{v}|\ge 2\hatd_H(v)$, and
    \item\label[part]{part:list-size} $|L(v)\cap\pal{v}|\ge \hatd_H(v) + C\log^{1.1} n$ for some constant $C > 0$.
\end{enumerate} 
There exists an algorithm coloring every node of $H$ in $O(\log^* n)$ rounds of \bcongest with probability $1-n^{-\Theta(C)}$.
\end{restatable}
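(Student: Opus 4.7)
The plan is to adapt the multi-color trial of \cite{HKNT22} to the broadcast setting by replacing each node's transmission of an explicit sampled set with the broadcast of a short pseudorandom seed, in the spirit of \cite{HN23}. The enabling observation is that \cref{part:list-known} together with the symmetry $u \in N_H(v) \Leftrightarrow v \in N_H(u)$ implies that for every edge in $H$ both endpoints know both lists. Thus, if $v$ broadcasts an $O(\log n)$-bit seed $\sigma_v$ which, via an $O(\log n)$-wise independent hash family and the publicly known $L(v)$, determines a sampled subset $X_v \subseteq L(v)$ (each color included at rate $p_v$), then every neighbor $u$ of $v$ can compute $X_v$ locally, and $v$ can likewise compute every $X_u$ with $u \in N_H(v)$. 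No individual replies are needed, and $\sigma_v$ fits in one broadcast.

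One iteration then looks like this: each still-uncolored $v$ picks a rate $p_v$ as a function of $|L(v)\cap\pal{v}|$ and $\hatd_H(v)$ (both known locally to both endpoints of every edge), broadcasts $\sigma_v$, forms $A_v := (X_v \cap \pal{v}) \setminus \bigcup_{u \in N_H(v)} X_u$ --- the safe colors $v$ has sampled --- and broadcasts $\min A_v$ in a second sub-round if it is nonempty, adopting that color. For the per-iteration analysis, I would use \cref{part:list-dense} to choose $p_v$ so that the expected size of $A_v$ grows with the slack-to-uncolored-degree ratio $s_H(v)/\hatd_H(v)$, where $s_H(v) := |L(v)\cap\pal{v}| - \hatd_H(v)$. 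Bounded independence supports Chernoff-type concentration, so once this ratio exceeds a slowly growing threshold, $A_v \neq \emptyset$ with high probability. The standard multi-trial recursion then shows the ratio grows doubly exponentially across iterations, so after $O(\log^* n)$ rounds the $C \log^{1.1} n$ surplus in \cref{part:list-size} is enough for the final iteration to succeed with probability $1 - n^{-\Theta(C)}$, yielding the claimed failure bound by a union bound over the $O(\log^* n)$ iterations.

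The main obstacle is establishing the concentration for $|A_v|$ under only $O(\log n)$-wise independence, which is essential because seeds must fit in $O(\log n)$ bits. Rather than re-derive this, I would invoke the broadcast pseudorandom sampling machinery of \cite{HN23} as a black box: it is tailored to precisely this situation and delivers the Chernoff-style tails the multi-trial analysis requires. A minor asymmetry --- that $\pal{v}$ is not known to $u$ --- causes no trouble, since each neighbor $u$ only needs $X_v$ (not $X_v \cap \pal{v}$) in order to check whether its own sampled colors conflict with $v$'s; the intersection with $\pal{v}$ is performed by $v$ itself when it computes $A_v$. Tie-breaking when two adjacent nodes broadcast the same candidate $\min A_v$ is handled by the smallest-$\ID$ rule without extra communication.
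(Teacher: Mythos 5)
Your high-level plan is the right one, and essentially the paper's own: the paper also observes that \cref{part:list-known} is what makes a broadcast-only multi-color trial possible, and it also defers the pseudorandom sampling to Halld{\'o}rsson and Nolin~\cite{HN23}. However, your concrete choice of pseudorandomness — an $O(\log n)$-wise independent hash family over $L(v)$ described by an $O(\log n)$-bit seed — does not work, and this is a genuine gap, not a detail you can outsource.

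The seed length of a $k$-wise independent family over a universe of size $m$ is $\Theta(k \log m)$. In this lemma $L(v) = [x(v)]$ can have size $\Theta(\epsilon\Delta)$ (for open and closed cliques, \cref{eq:clique-palette}), so $\log|L(v)| = \Theta(\log n)$, and a $\Theta(\log n)$-wise independent hash would need a seed of $\Theta(\log^2 n)$ bits. That cannot be broadcast in one \bcongest round, and spreading it over $\Theta(\log n)$ rounds per trial destroys the $O(\log^* n)$ bound. This is precisely the obstacle that \cite{HN23} is designed to overcome, and the mechanism there is \emph{not} bounded-wise independence: it is \emph{representative sets}, i.e., a sample of $\Theta(\log n)$ colors encoded as a length-$\Theta(\log n)$ random walk on an implicit expander over $L(v)$. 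Such a walk needs only $O(\log n)$ seed bits yet still supports the Chernoff-type hitting bounds the multi-trial recursion relies on. So you should replace the $O(\log n)$-wise independence step with expander-walk sampling (or simply cite \cite[Section~7]{HN23} for ``representative sets''), after which the rest of your sketch — safe-set $A_v$, broadcast of $\min A_v$, smallest-ID tie-breaking, doubly-exponential growth of the slack-to-degree ratio — matches the intended argument and is correct given the $C\log^{1.1} n$ surplus in \cref{part:list-size}. The observation that $u$ only needs $X_v$ and not $X_v\cap\pal{v}$ is right, and the symmetry argument for two-sided list knowledge is also fine.
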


\Cref{lem:mct-bcongest} is a mere reformulation of \cite[Lemma 1]{HKNT22} with the notable exception that it works in \bcongest because of the additional \cref{part:list-known}. 
This allows the use of \emph{representative sets} \cite{HN23}.
At a high level, the technique is to save on the bandwidth necessary to send $\Theta(\log n)$ random colors by instead sending a pseudorandom sample. 
In \bcstream, it can be implemented with $O(\log^3 n)$ memory but requires more work. We refer interested readers to \cite[Section 7]{HN23}. The main idea is that a set of $\Theta(\log n)$ random colors can be represented by a random walk on an implicit expander graph.

\subsection{Concentration Inequalities}

We use the following variants of Chernoff bounds for dependent random variables. The first one is obtained, e.g., as a corollary of Lemma 1.8.7 and Theorems 1.10.1 and 1.10.5 in \cite{Doerr2020}.

\begin{lemma}[Martingales]\label{lem:chernoff}
Let $\{X_i\}_{i=1}^r$ be binary random variables, and $X=\sum_i X_i$.
Suppose that for all $i\in [r]$ and $(x_1,\ldots,x_{i-1})\in \{0,1\}^{i-1}$ with $\Pr\parens{X_1=x_1,\dots,X_r=x_{i-1}}>0$, $\Pr\parens{X_i=1\mid X_1=x_1,\dots,X_{i-1}=x_{i-1}}\le q_i\le 1$, then for any $\delta>0$,
\begin{equation}\label{eq:chernoffless}
\Pr\parens[\bigg]{X\ge(1+\delta)\sum_{i=1}^r q_i}
\le \exp\parens[\bigg]{-\frac{\min(\delta,\delta^2)}{3}\sum_{i=1}^r q_i}\ .
\end{equation}
Suppose instead that $\Pr\parens*{X_i=1\mid X_1=x_1,\dots,X_{i-1}=x_{i-1}}\ge q_i$, $q_i\in (0,1)$ holds for $i, x_1, \ldots, x_{i-1}$ over the same ranges, then for any $\delta\in [0,1]$,
\begin{equation}\label{eq:chernoffmore}
    \Pr\parens[\bigg]{X\le(1-\delta)\sum_{i=1}^r q_i}
    \le \exp\parens[\bigg]{-\frac{\delta^2}{2}\sum_{i=1}^r q_i}\ .
\end{equation}
\end{lemma}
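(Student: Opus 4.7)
The plan is to adapt the classical exponential-moment (Chernoff) argument to the martingale setting, exploiting the fact that, although the $X_i$ are dependent, their conditional Bernoulli parameters are pointwise dominated (respectively, dominating) by the $q_i$. This is essentially the argument underlying the Doerr reference, and I sketch it explicitly below.

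For the upper tail \eqref{eq:chernoffless}, fix $t>0$ and bound the moment generating function iteratively. Since $X_i\in\{0,1\}$,
\[ E[e^{tX_i}\mid X_1,\dots,X_{i-1}] \;=\; 1 + (e^t-1)\Pr[X_i=1\mid X_{<i}] \;\le\; 1 + (e^t-1)q_i \;\le\; \exp\bigl((e^t-1)q_i\bigr), \]
where the first inequality uses $e^t-1>0$ together with the upper-bound hypothesis. The tower property then gives $E[e^{tX}] \le \exp((e^t-1)\mu)$, where $\mu = \sum_i q_i$. Markov's inequality yields
\[ \Pr[X\ge (1+\delta)\mu] \;\le\; \exp\bigl(((e^t-1) - t(1+\delta))\,\mu\bigr). \]
Optimizing at $t=\ln(1+\delta)$ produces the standard bound $\exp(-\mu\,h(\delta))$ with $h(\delta) = (1+\delta)\ln(1+\delta) - \delta$, and the elementary inequality $h(\delta)\ge \min(\delta,\delta^2)/3$ for $\delta>0$ gives \eqref{eq:chernoffless}.

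For the lower tail \eqref{eq:chernoffmore}, I would run the symmetric argument with a negative exponent. Fix $s>0$ and consider $E[e^{-sX}]$. Now $e^{-s}-1<0$, so the lower-bound hypothesis $\Pr[X_i=1\mid X_{<i}]\ge q_i$ reverses to
\[ E[e^{-sX_i}\mid X_{<i}] = 1+(e^{-s}-1)\Pr[X_i=1\mid X_{<i}] \le 1+(e^{-s}-1)q_i \le \exp\bigl((e^{-s}-1)q_i\bigr). \]
Iterating and applying Markov,
\[ \Pr[X \le (1-\delta)\mu] \;\le\; \exp\bigl(((e^{-s}-1) + s(1-\delta))\,\mu\bigr). \]
Optimizing at $s = -\ln(1-\delta)$ gives $\exp(-\mu\,[\delta + (1-\delta)\ln(1-\delta)])$, and the standard bound $\delta + (1-\delta)\ln(1-\delta) \ge \delta^2/2$ for $\delta \in [0,1]$ yields \eqref{eq:chernoffmore}.

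The only subtle point is getting the signs right in the conditional MGF bound so that the appropriate direction of the hypothesis (upper bound on $\Pr[X_i=1\mid X_{<i}]$ for the upper tail, lower bound for the lower tail) can be invoked. Once that step is in place, the remaining optimization and the closing inequalities on $h(\delta)$ are the ordinary Chernoff calculation, which is why, as the paper notes, the statement can simply be cited from \cite{Doerr2020}.
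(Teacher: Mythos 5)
Your proposal is correct, and it fills in exactly the argument that the paper only cites (the paper gives no in-line proof of this lemma, only a pointer to Lemma~1.8.7 and Theorems~1.10.1, 1.10.5 of \cite{Doerr2020}). The iterated conditional MGF bound, the tower-property telescoping, the Markov/optimization step, and the closing inequalities $(1+\delta)\ln(1+\delta)-\delta\ge\min(\delta,\delta^2)/3$ and $\delta+(1-\delta)\ln(1-\delta)\ge\delta^2/2$ are all the standard Chernoff derivation adapted to pointwise-dominated conditional Bernoulli parameters, which is precisely the content of the cited results.
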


\paragraph{Talagrand Concentration Bound.}
A function $f(x_1,\ldots,x_n)$ is  \emph{$c$-Lipschitz} iff changing any single $x_i$ affects the value of $f$ by at most $c$, and $f$ is  \emph{$r$-certifiable} iff whenever $f(x_1,\ldots,x_n) \geq s$ for some value $s$, there exist $r\cdot s$ inputs $x_{i_1},\ldots,x_{i_{r\cdot s}}$ such that knowing the values of these inputs certifies $f\geq s$ (i.e., $f\geq s$ whatever the values of $x_i$ for $i\not \in \{i_1,\ldots,i_{r\cdot s}\}$).

\begin{lemma}[Talagrand's inequality~\cite{Talagrand95,DP09}]
\label{lem:talagrand}
Let $\{X_i\}_{i=1}^n$ be $n$ independent random variables and $f(X_1,\ldots,X_n)$ be a $c$-Lipschitz $r$-certifiable function; then for $t\geq 1$,
\[\Pr\parens*{\abs*{f-\Exp[f]}>t+30c\sqrt{r\cdot\Exp[f]}}\leq 4 \cdot \exp\parens*{-\frac{t^2}{8c^2r\Exp[f]}}\]
\end{lemma}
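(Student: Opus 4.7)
The statement is the classical Talagrand concentration inequality, in the version for Lipschitz certifiable functions (cf.\ Dubhashi--Panconesi, Theorem 7.7.1). My plan is therefore not to reinvent the wheel but to outline how one derives this packaging from Talagrand's convex distance inequality, which is the underlying isoperimetric statement.

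The plan is to proceed in three steps. \textbf{First}, I would invoke Talagrand's convex distance inequality: for any product probability space $\Omega=\prod_i \Omega_i$ and any measurable $A\subseteq\Omega$, one has
\[
\Pr[A]\cdot\Exp\bracks*{\exp\parens*{d_T(X,A)^2/4}}\le 1\ ,
\]
where $d_T(x,A)=\sup_{\alpha\in\R_{\ge 0}^n,\,\|\alpha\|_2=1}\inf_{y\in A}\sum_{i:x_i\ne y_i}\alpha_i$ is the convex (Hamming) distance. This is the hard ingredient and I would simply cite \cite{Talagrand95,DP09} rather than reprove it. \textbf{Second}, I would translate the hypotheses on $f$ into a bound on $d_T$. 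Concretely: suppose $f(X)\ge s$ and $f(Y)\le s-u$ for some $u>0$. By $r$-certifiability applied to $X$, there exists an index set $I\subseteq[n]$ of size at most $rs$ such that any configuration agreeing with $X$ on $I$ still yields $f\ge s$; in particular $Y$ differs from $X$ on some $i\in I$, so $|\{i\in I: X_i\ne Y_i\}|\ge 1$. Using the $c$-Lipschitz hypothesis, the number of differing coordinates inside $I$ must in fact be at least $u/c$. Plugging $\alpha=\mathbf{1}_I/\sqrt{|I|}$ into the definition of $d_T$ yields
\[
d_T(Y,\{f\ge s\})\ \ge\ \frac{u/c}{\sqrt{rs}}\ =\ \frac{u}{c\sqrt{rs}}\ .
\]

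\textbf{Third}, I would combine this distance bound with the isoperimetric inequality, chosen with $A=\{f\ge m\}$ where $m$ is the median of $f$. Standard manipulation gives
\[
\Pr[f\le m-u]\ \le\ 2\exp\parens*{-\frac{u^2}{4c^2 r m}}\ ,
\]
and symmetrically with $A=\{f\le m\}$ one obtains the upper tail. This produces concentration around the median $m$. \textbf{Finally}, I would convert median to expectation: integrating the tail bounds shows $|\Exp[f]-m|=O(c\sqrt{r\Exp[f]})$, and replacing $m$ by $\Exp[f]$ costs exactly the additive slack $30c\sqrt{r\Exp[f]}$ inside the tail, yielding the stated form with $t\ge 1$ and the constants $4$ and $8$.

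The main obstacle is the convex distance inequality itself, whose proof is a somewhat delicate induction on $n$ using a clever two-case split on whether to freeze the last coordinate; since this is entirely standard and already black-boxed in \cite{DP09}, I would treat \Cref{lem:talagrand} as a quotation of that source rather than carry out the argument in this paper.
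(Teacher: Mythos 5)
The paper does not prove \cref{lem:talagrand}; it is stated as a quoted classical result with citations to \cite{Talagrand95,DP09}, and that is the end of it. Your recognition of this is exactly right, and your decision to treat the convex distance inequality as a black box and merely outline the reduction is the appropriate stance for a paper that would never reprove this. The outline you give is the standard derivation (convex distance inequality, then the certificate-plus-Lipschitz argument bounding $d_T$ from below, then median-to-mean conversion paying the additive $O(c\sqrt{r\Exp[f]})$ slack), and it is correct in substance. One small imprecision worth flagging for your own benefit: when you bound $d_T(Y,\{f\ge s\})$, the certificate set $I$ depends on the point of $\{f\ge s\}$ you are comparing against, not on $Y$, so the choice $\alpha=\mathbf{1}_I/\sqrt{|I|}$ cannot be made once and for all before the infimum over $y\in A$; the actual argument in Dubhashi--Panconesi must choose $\alpha$ from the certificate of a specific nearby point of $A$ and then argue termwise. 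This is a known wrinkle that every careful write-up of this reduction handles, and since you are citing the result anyway it does not affect your conclusion.
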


\section{Algorithm and Analysis}
\label{sec:alg}

In this section, we describe our algorithm and give the main technical ideas behind \cref{thm:bcongest}. \cref{alg:high-level} gives a high-level description of our algorithm.

The main technical contribution is a $O(\log^* n)$-round algorithm for coloring graphs with $\Delta\in\Omega(\log^3 n)$. For low-degree graphs, a $O(\log^3\log n)$-round algorithm is known \cite{BEPSv3,GK21}. 
We conjecture that our algorithm actually shatters the graph in $O(\log^*n)$ rounds when $\Delta=O(\log^3 n)$. 
If this was to be true, \cite{BEPSv3} would no longer be required for small $\Delta$. 
This would make any improvement to the deterministic complexity of $(\deg+1)$-list-coloring, including beyond $o(\log n)$, carry over to our algorithm.

\begin{Algorithm}
\label{alg:high-level}
High Level Description of our Algorithm.

\noindent
\textbf{Parameters:} Let $C=O(1)$ be a large enough constant,
\begin{equation}
\label{eq:parameters}
\ell = C\log^{1.1} n\ , \quad \epsilon=10^{-5} \quad\text{and} \quad \beta=401  \ .
\end{equation} 

\begin{enumerate}[leftmargin=*]
    \item\label[step]{step:setting-up} \textbf{Setting up.} Compute an $\epsilon$-almost-clique decomposition $\Vsparse, K_1, \ldots, K_k$. 
    Compute outliers $O_K$ and inliers $I_K=K\setminus O_K$ in each clique $K$ (see \cref{def:outliers}), as well as put-aside sets $P_K$ (see \cref{lem:put-aside-build}). We define a value $x(K)=\Theta(\avganti_K+\avgext_K+\log n)$ for each clique (see \cref{eq:clique-palette}). By extension, let $x(v)=x(K)$ for each $v\in K$.

    Cliques are categorized as full, open, or closed (\cref{def:classification-cliques}).
    The following three steps aim at generating slack for each type:
    \begin{enumerate}[label=\textit{(\roman*)},leftmargin=0.5em]
        \item \emph{Slack Generation}: each node tries a color in $[\Delta+1]\setminus [x(v)]$ w.p. $\ps=1/200$.
        \item \emph{Colorful Matching}: by trying colors in $[\Delta+1]\setminus[x(K)]$ for $O(\beta)$ rounds, we color $\beta\avganti_K$ pairs of anti-edges in each $K$.
        \item \emph{Put-Aside Sets}: we find in each full clique sets $P_K\subseteq I_K$ of size $\Theta(\ell)$ such that $P_K$ has no edge to $P_{K'}$ for all $K\neq K'$.
    \end{enumerate}
    
    Each sparse node has $\Omega(\Delta)$ permanent slack from the slack generation step; hence, we color them in $O(\log^*n)$ rounds with \multitrial. We color outliers $O_K$ with colors from $[\Delta+1]\setminus [x(K)]$ with \multitrial using the $\Omega(\Delta)$ temporary slack provided by inactive inliers.
    
    \item\label[step]{step:sct} \textbf{Synchronized Color Trial.} In each clique, we compute the clique palette $\pal{K}$ and sample a permutation $\pi$ of $\hK\setminus P_K$. Each node $v\in \hK\setminus P_K$ tries the $\pi(v)$-th color of $\pal{K}$. In open cliques (see \cref{def:classification-cliques}), we run an extra $O(1)$ rounds of \trycolor using only colors from $[\Delta+1]\setminus [x(K)]$. 
    \item\label[step]{step:multitrial} \textbf{Completing the Coloring.} Uncolored nodes satisfy \[|[x(v)]\cap\pal{v}| \ge 2\hatd(v)\ .\] Put-aside sets ensure that every node has slack $\Omega(\ell)$; hence, inliers are colored in $O(\log^* n)$ rounds by \multitrial. 
    
    \item\label[step]{step:outliers} \textbf{Coloring Put-Aside Sets.}
    We color put-aside sets in two steps: first, we reduce their size to $O(\log n/\log\log n)$ by running \emph{non-adaptive} randomized color trial. Then, each node sends $|P_K|+1$ colors from a $\poly(\log n)$-sized set of colors. This takes $O(1)$ rounds: $O(\log n/\log\log n)\times O(\log\log n)$ bits to send.
\end{enumerate}
\end{Algorithm}

The key technical idea is to \emph{reserve} colors $\set{1,2,\ldots,x(K)}$ in each clique, where $x(K)$ is an integer that depends on the density of $K$ (see \cref{eq:clique-palette}). 
It is straightforward to see that reserve colors $[x(K)]$ are not used during \cref{step:setting-up,step:sct}. The value of $x(K)$ is chosen to be greater than nodes' degrees at the end of \cref{step:sct}. This allows using lists $L(v):=[x(v)]$ for the \multitrial in \cref{step:multitrial}.

\subsection{\texorpdfstring%
{\Cref{step:setting-up}: Setting up}%
{Step~\ref{step:setting-up}: Setting up}} 

Assume we have an $\epsilon$-almost-clique decomposition $\Vsparse$, $K_1$, $\ldots,$ $K_k$ (see \cref{def:almost-clique}). Sparse nodes can be colored in $O(\log^* n)$ rounds \cite{HN23}, so we focus our attention on almost-cliques. We call outliers the (possibly empty) set of nodes in each clique whose external degree or anti-degree derives more than a constant factor from the average.

\begin{definition}[Inliers/Outliers]
\label{def:outliers}
For each $K$, we define its set of \emph{outliers} as 
\begin{equation}
\label{eq:outliers}
    O_K=\set{v\in K: e_v \ge 30\avgext_K\text{ or } a_v \ge 30\avganti_K}\ .
\end{equation}
We call the remaining \emph{uncolored} nodes $I_K=\hK\setminus O_K$ \emph{inliers}.
\end{definition}

In each clique, outliers represent only a small fraction of the vertices; hence, can be colored beforehand with the temporary slack provided by their $\Omega(\Delta)$ uncolored neighbors in $I_K$. 

\begin{claim}
\label{claim:inliers}
For each $K$, after generating slack and computing a colorful matching, w.h.p.\ $|I_K| \ge 0.9\Delta$.
\end{claim}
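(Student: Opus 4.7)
The plan is to bound separately the number of outliers $|O_K|$ and the number of $K$-nodes colored during slack generation and the colorful matching, and show that their sum leaves at least $0.9\Delta$ inliers. Since $|K| \ge (1-\epsilon)\Delta$ by \cref{part:anti-deg} of \cref{def:almost-clique}, it suffices to show that $|O_K|$ plus the number of newly colored $K$-nodes is at most $(0.1-\epsilon)\Delta$ with high probability.

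First I would bound the outliers via a double application of Markov's inequality. Starting from the identities $\sum_{v \in K} e_v = |K|\bar{e}_K$ and $\sum_{v \in K} a_v = |K|\bar{a}_K$, at most $|K|/30$ nodes have $e_v \ge 30\bar{e}_K$ and at most $|K|/30$ have $a_v \ge 30\bar{a}_K$; hence $|O_K| \le |K|/15 \le (1+\epsilon)\Delta/15$.

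Next I would control the colored nodes. In slack generation each node attempts a color independently with probability $p_s = 1/200$, so the number of attempting (and a fortiori colored) $K$-nodes is dominated by $\mathrm{Bin}(|K|, p_s)$; \cref{lem:chernoff} then gives that this quantity is at most $|K|/100$ with probability $1-\exp(-\Theta(\Delta))$, which is w.h.p.\ when $\Delta = \Omega(\log n)$. For the colorful matching, \cref{thm:colorful-matching} directly bounds the number of colored $K$-nodes by $2\beta \bar{a}_K$. The crucial observation to make this small is that combining $|K| \le (1+\epsilon)\Delta$ with $|N(v) \cap K| \ge (1-\epsilon)\Delta$ (using \cref{part:anti-deg} again) gives $a_v = |K|-1-|N(v)\cap K| \le 2\epsilon\Delta$ for every $v \in K$, so $\bar{a}_K \le 2\epsilon\Delta$, and at most $4\beta\epsilon\Delta$ nodes are colored during this step.

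Plugging in $\epsilon = 10^{-5}$ and $\beta = 401$, the three contributions sum to roughly $0.067\Delta + 0.01\Delta + 0.017\Delta$, well below $0.1\Delta$; hence $|I_K| \ge (1-\epsilon)\Delta - 0.1\Delta \ge 0.9\Delta$. The only real subtlety is the Chernoff concentration for slack generation, which needs $\Delta = \Omega(\log n)$ to be w.h.p.; as noted at the start of \cref{sec:alg}, however, the algorithm is designed for $\Delta = \Omega(\log^3 n)$ and defers to \cite{BEPSv3,GK21} for smaller $\Delta$, so this condition is always met when the claim is invoked.
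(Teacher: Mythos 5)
Your proof follows the same structure as the paper's: Markov bounds the outliers to a $1/15$ fraction of $K$, Chernoff bounds the slack-generation colored nodes to a $1/100$ fraction w.h.p., the colorful-matching lemma caps the matched nodes at $2\beta\avganti_K$ which is $O(\epsilon\Delta)$, and the pieces are summed against $|K|\ge(1-\epsilon)\Delta$. The only cosmetic differences are that you bound $\avganti_K$ via $a_v\le|K|-|N(v)\cap K|\le 2\epsilon\Delta$ (a hair looser than the constant the paper quotes, but still well within budget) and that you make explicit the $\Delta=\Omega(\log n)$ requirement for the Chernoff step, which the paper leaves implicit.
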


\begin{proof}
By Markov inequality, outliers represent at most a $1/15$ fraction of $K$. Furthermore, nodes get colored during slack generation w.p. at most $\ps = 1/200$ (see \cref{lem:slackgeneration}). By Chernoff, w.h.p., at most a $1/100$ fraction of $K$ gets colored. The colorful matching comprises $2\beta\avganti_K \le 10^3\epsilon\Delta \le \Delta/100$ nodes by our choice of $\epsilon$. Therefore, $|I_K| \ge (1-1/15-1/100-1/100)|K| \ge 0.9\Delta$ by our choice of $\epsilon$ (\cref{eq:parameters}).
\end{proof}

We classify cliques in three categories, depending on the degree nodes have after \cref{step:sct}. Each type of clique receives slack from different sources: full cliques from put-aside sets, open cliques from the slack generation step, and closed cliques from the colorful matching.

\begin{definition}[Full/Open/Closed Cliques]
\label{def:classification-cliques}
For each $i\in[k]$, we say that $K=K_i$ is:
\begin{itemize}
    \item \emph{full} if $\avganti_K + \avgext_K < \ell$, where $\ell$ is defined in \cref{eq:parameters},
    \item \emph{open} if $K$ is not full and $2\avganti_K < \avgext_K$, and 
    \item \emph{closed} if $K$ is neither full nor open.
\end{itemize}
We denote by $\Kfull$ (respectively $\Kopen$ and $\Kclosed$) the set of full cliques (respectively open and closed cliques).
\end{definition}

In each clique, we reserve $x(K)$ colors depending on the clique's density. We will ensure that $[x(K)]\subseteq \pal{K}$ until we color inliers with \multitrial (\cref{step:multitrial}). For a clique $K$, define
\begin{equation} 
\label{eq:clique-palette}
x(K) = \begin{cases}
    200\ell & \text{if } K\in\Kfull\\
    400\avganti_K & \text{if } K\in\Kclosed\\
    \csp\epsilon/8\cdot\avgext_K &  \text{if } K\in\Kopen
\end{cases}\ ,
\end{equation}
where $\csp$ is the constant from \cref{lem:slackgeneration}.
By extension, we write $x(v) = x(K)$ for each $v\in K$.

\paragraph{Put-Aside Sets.}
Recall that to color in $O(\log^* n)$ rounds with \multitrial, nodes need slack at least $\ell = \Theta(\log^{1.1} n)$ (\cref{lem:mct-bcongest}, \cref{part:list-size}). Nodes from very dense cliques do not receive enough permanent slack from the slack generation phase. Following \cite[Section 5.4]{HKNT22}, we overcome this issue by putting aside sets of $\Theta(\ell)$ nodes in each very dense clique to provide temporary slack. These sets remain uncolored until the very end of the algorithm. These are necessary only in very dense cliques, whose nodes have $O(\ell)$ external neighbors. It allows us to find put-aside sets such that no edge connects sets from different cliques. The lack of connections allows us to color each set independently at the very end. 
See \cite[Lemma 5]{HKNT22} for a proof of \cref{lem:put-aside-build}.

\begin{restatable}[Put-Aside Sets]{lemma}{PutAsideLemma}
\label{lem:put-aside-build}
There exists a $O(1)$-round \bcongest algorithm finding subsets $P_K\subseteq I_K$ of size $201\ell$ in each almost-clique $K\in\Kfull$, such that $P_K$ has no edges to other $P_{K'}$ for $K'\neq K$.
\end{restatable}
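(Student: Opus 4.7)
The plan is to implement the $O(1)$-round sampling construction from \cite[Lemma 5]{HKNT22}, which uses only single-bit broadcasts and so adapts to \bcongest without any overhead. Each inlier $v \in I_K$ of every full clique $K \in \Kfull$ independently marks itself as a \emph{candidate} with probability $p := c \ell / \Delta$, for a sufficiently large constant $c$, and broadcasts this decision in a single bit. In the next round, every candidate separates internal from external neighbors using the almost-clique IDs learned during the ACD (\cref{lem:acd-bcongest}) and withdraws if any external neighbor has also declared itself a candidate. Any $201\ell$-sized subset of the surviving candidates of $K$ is declared $P_K$. The properties $P_K \subseteq I_K$ and the absence of edges between $P_K$ and $P_{K'}$ for $K \ne K'$ are built into the construction, and the procedure takes two rounds of \bcongest.

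For the size analysis, let $C_K$ denote the candidate set of $K$ and $R_K = C_K \setminus P_K$ the withdrawals. From \cref{claim:inliers} we have $|I_K| \ge 0.9 \Delta$, so $\mathbb{E}[|C_K|] \ge 0.9 c \ell$; applying \cref{lem:chernoff} to the independent candidacy bits gives $|C_K| \ge 0.8 c \ell$ w.h.p., once $c\ell$ is at least a sufficient multiple of $\log n$, which holds since $\ell = C\log^{1.1} n$. For the number of withdrawals, linearity together with fullness of $K$ (which gives $\avgext_K \le \ell$) yields
\[
    \mathbb{E}[|R_K|] \le p^2 \sum_{v \in I_K} e_v \le p^2 (1+\epsilon) \Delta \cdot \avgext_K \le c^2 (1+\epsilon) \ell^3 / \Delta ,
\]
which is $O(\log n)$ in the regime $\Delta \in \Omega(\log^3 n)$ of \cref{thm:bcongest}. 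Choosing $c$ large enough, the first-moment gap $0.8 c \ell - \mathbb{E}[|R_K|]$ comfortably exceeds $201\ell$.

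The main obstacle is upgrading the first-moment bound on $|R_K|$ to a high-probability bound: the withdrawal events of different inliers are correlated through shared external candidates, and the Lipschitz constant of $|R_K|$ with respect to a single external candidacy bit is $\Omega(\Delta)$, so a direct application of Talagrand's inequality (\cref{lem:talagrand}) is too weak. Following \cite{HKNT22}, one handles this by conditioning on the external candidacies, exposing the set $B$ of inliers that have at least one external candidate neighbor, and exploiting that, conditional on this exposure, the internal indicators $\{b_v : v \in I_K\}$ remain independent $\text{Bernoulli}(p)$. A concentration argument on the independent external candidacy bits bounds $|B|$, and a second Chernoff bound applied to $|R_K| = \sum_{v \in B} b_v$ then yields $|R_K| = O(\log n)$ w.h.p. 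Combining with the lower bound on $|C_K|$ and taking a union bound over all $k \le n$ cliques, $|P_K| \ge 201 \ell$ holds simultaneously for every $K \in \Kfull$ w.h.p., as required.
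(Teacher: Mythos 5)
The construction you describe (each inlier of a full clique becomes a candidate with probability $\Theta(\ell/\Delta)$, candidates withdraw upon seeing an external candidate, survivors form $P_K$) is indeed the construction of \cite[Lemma 5]{HKNT22}, which the paper simply cites without reproducing the proof. Your implementation and first-moment computations are fine.

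However, the concentration argument has a gap that starts from a factually wrong claim: the Lipschitz constant of $|R_K|$ (and of $|B|$) with respect to an external candidacy bit is \emph{not} $\Omega(\Delta)$. Every node $u$ that can be a candidate is an inlier of some full clique $K'$, so $e_u \le 30\avgext_{K'} < 30\ell$; since $K\neq K'$, the set $N(u)\cap K$ lies outside $K'$, hence $|N(u)\cap I_K| \le e_u \le 30\ell$. Flipping $b_u$ therefore changes $|R_K|$ or $|B|$ by at most $30\ell = O(\log^{1.1} n) \ll \Delta$. This bound is not merely a curiosity you overlooked: it is the one fact that makes any concentration argument go through. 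In particular, your proposed fix — expose the external bits to determine $B$, then Chernoff on $|R_K|=\sum_{v\in B}b_v$ — requires bounding $|B|$ w.h.p., but $|B|$ itself is the same kind of functional of the external candidacy bits as $|R_K|$, and if its Lipschitz constant really were $\Omega(\Delta)$, no concentration bound (Talagrand, bounded differences, or anything else) would get you below the deterministic bound $|B|\le (1+\eps)\Delta$, which is far too weak. With the correct $O(\ell)$ Lipschitz constant both halves of your plan do work: e.g., $|B| \le \sum_{u\notin K} b_u\,|N(u)\cap I_K|$ is a sum of independent terms each bounded by $30\ell$, with mean $O(\ell^2)$, so a Bernstein/Chernoff bound gives $|B| = O(\ell^2)$ w.h.p., and then $p|B| = O(\ell^3/\Delta) = O(\log^{0.3} n)$, from which $|R_K| = O(\log n) \ll \ell$ w.h.p.\ by a Poisson tail. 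So the overall route is salvageable, but as written the argument asserts the opposite of the key structural fact it relies on.
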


\subsection{\texorpdfstring%
{\Cref{step:sct}: Synchronized Color Trial}%
{Step~\ref{step:sct}: Synchronized Color Trial}}

The idea of the following \cref{lem:sct} (which is a reformulation of \cite{HKNT22}) is to distribute a set of colors to nodes in the clique. Each color has a unique recipient. This avoids in-clique conflicts, and a node can only fail to adopt the color it received due to its external neighbors. Therefore, the expected number of nodes to fail is $\sum_{v\in K} O(e_v/\Delta) = O(\avgext_K)$. 

\begin{restatable}[{\cite[Section 5.5]{HKNT22}}]{lemma}{sctLemma}
\label{lem:sct}
Let $x$ be an integer, $K$ be a clique, and $S=\hK\setminus P_K$ be such that $0.75\Delta \le |S| \le |\pal{K}| - x$. Suppose $\pi$ is a uniform permutation of $[|S|]$. If for each $i\in[|S|]$ the $i$-th node in $S$ tries the $\pi(i)$-th color in the set $\pal{K}\setminus[x]$, then w.h.p.\ the number of nodes to remain uncolored is $8\max\set{6\avgext_K, C\log n}$. This holds even if the random bits outside of $K$ are chosen adversarially.
\end{restatable}

\cref{lem:sct-enough} shows that each clique has enough colors, even if when we reserve $x(K)$ colors.

\begin{lemma}
\label{lem:sct-enough}
For all $K$, $|\pal{K}|-x(K) \ge |\hK\setminus P_K|$.
\end{lemma}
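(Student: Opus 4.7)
}
Since $P_K \subseteq I_K \subseteq \hK$ by construction (\cref{lem:put-aside-build}), the target inequality is equivalent to showing
\[
|\pal{K}| \;\ge\; |\hK| - |P_K| + x(K)\ .
\]
The plan is to apply \cref{claim:clique-palette}, which states that $|\pal{K}| \ge |\hK| + 1 + e_v - a_v + |M|$ for every $v \in K$, to a well-chosen $v$ and then split into the three cases of \cref{def:classification-cliques}. By averaging the identities $\sum_{v\in K} e_v = |K|\cdot\avgext_K$ and $\sum_{v\in K} a_v = |K|\cdot \avganti_K$, there exists some $v^\star \in K$ with $e_{v^\star}-a_{v^\star} \ge \avgext_K - \avganti_K$; similarly, there exists some $v^\circ \in K$ with $a_{v^\circ} < \lceil \avganti_K\rceil$. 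Applying \cref{claim:clique-palette} to whichever of these two vertices is more convenient in each case gives the starting inequality.

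\textbf{Closed cliques ($P_K=\emptyset$).} Here $x(K)=400\avganti_K$, and by definition $\avganti_K \ge (\avganti_K+\avgext_K)/3 \ge \ell/3 \ge C\log n$, so \cref{thm:colorful-matching} yields $|M| \ge \beta \avganti_K = 401\avganti_K$. Using $v^\star$ and the fact that closed means $\avgext_K \le 2\avganti_K$,
\[
|\pal{K}| - x(K) \;\ge\; |\hK|+1+(\avgext_K-\avganti_K)+|M| - 400\avganti_K \;\ge\; |\hK| + 1 + \avgext_K \;\ge\; |\hK|.
\]

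\textbf{Open cliques ($P_K=\emptyset$).} Here $x(K) = \csp\epsilon/8\cdot\avgext_K$, and openness gives $\avgext_K > 2\avganti_K$, so $\avgext_K-\avganti_K > \avgext_K/2$. Since $\csp<1$ and $\epsilon = 10^{-5}$, we have $\csp\epsilon/8 < 1/2$. Applying \cref{claim:clique-palette} to $v^\star$ and dropping $|M| \ge 0$,
\[
|\pal{K}| - x(K) \;\ge\; |\hK|+1+(\avgext_K-\avganti_K) - \csp\epsilon/8\cdot\avgext_K \;\ge\; |\hK|+1.
\]

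\textbf{Full cliques.} Here $x(K) = 200\ell$, $|P_K|=201\ell$, and $\avganti_K + \avgext_K < \ell$, so in particular $\avganti_K < \ell$. Applying \cref{claim:clique-palette} to $v^\circ$ and dropping $e_{v^\circ},|M|\ge 0$,
\[
|\pal{K}| - x(K) \;\ge\; |\hK| + 1 - a_{v^\circ} - 200\ell \;\ge\; |\hK| - 201\ell + 1 \;=\; |\hK| - |P_K| + 1\ .
\]

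The proof is essentially bookkeeping; the only nontrivial ingredient is the averaging step used to extract a witness $v$ for \cref{claim:clique-palette} that captures the density parameters of $K$. The main thing to be careful about is the constant $\csp\epsilon/8$ in the open case, where we rely on the explicit choice $\epsilon=10^{-5}$ (and the fact that the slack-generation constant $\csp$ from \cref{lem:slackgeneration} is smaller than $1$) to guarantee $\csp\epsilon/8 < 1/2$; this is the only place where the numerical constants in \cref{eq:parameters,eq:clique-palette} are directly exercised.
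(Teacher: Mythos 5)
Your proof is correct, and it is fundamentally the same counting argument as the paper's, merely packaged differently. The paper's proof works from the raw definitions in each case: it derives $|\pal{K}|\ge\Delta+1-(|K|-|\hK|)(+|M|)$ by counting colored nodes, and separately sums $\Delta \ge |N(v)\cap K|+e_v$ and $|K|=|N(v)\cap K|+a_v$ over $K$ to get $\Delta-|K|\ge\avgext_K-\avganti_K$. You instead invoke \cref{claim:clique-palette} (which already packages the first fact) and use an averaging argument to extract a witness vertex $v^\star$ (resp.\ $v^\circ$) with $e_{v}-a_{v}$ at least the average (resp.\ $a_v$ at most the average); this is the same arithmetic as summing over $K$, just phrased as existence of a good witness. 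The case split and the use of $\beta\avganti_K$ from the colorful matching in closed cliques (where you correctly verify $\avganti_K\ge\ell/3\ge C\log n$ so the matching exists) are identical to the paper. One small thing worth flagging: you invoke $\csp < 1$ to get $\csp\epsilon/8 < 1/2$ in the open case. The paper never states this bound on $\csp$ explicitly, only that it is a ``small universal constant''; the paper's own proof silently relies on the weaker $\csp\epsilon/8 \le 1/2$, i.e.\ $\csp\le 4/\epsilon$, which is a gentler assumption. Neither is a gap, but you should cite the weaker condition if you want the argument to hold independently of how small $\csp$ actually is.
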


\begin{proof}
We consider each type of clique separately.
In a full clique $K$, recall that we computed a set of put-aside nodes $P_K$ of size $201\ell=\Theta(\log^{1.1} n)$ that must remain uncolored (\cref{lem:put-aside-build}). The set $S$ of nodes to try a color during the synchronized color trial is $|S|=|\hK\setminus P_K| \ge 0.75\Delta$ (by of \cref{claim:inliers} and $\Delta \gg \ell$). 
The number of colors used in $K$ is bounded by the number of colored nodes; hence, $|\pal{K}| \ge \Delta - (|K|-|\hK|)$. 
Since each full clique has size at most $\Delta + \ell$, we infer $|\pal{K}| \ge |\hK|-\ell$. Put-aside sets have size $|P_K|=201\ell$, so 
\[ |\hK\setminus P_K| = |\hK| - 201\ell
    \le |\pal{K}| - 200\ell 
    = |\pal{K}|-x(v)\ .  \tag{by \cref{eq:clique-palette}} \]

Suppose that $K$ is open, i.e.  $\avganti_K \le \avgext_K/2$ (\cref{def:classification-cliques}). 
By summing on each $v\in K$ over the bounds $\Delta \ge |K\cap N(v)| + e_v$ and $|K| = |K\cap N(v)| + a_v$, we get $\Delta - |K| \ge \avgext_K - \avganti_K \ge \avgext_K/2$. By our choice of $x(K)$, \[ |\pal{K}| - x(K) \ge |\hK| + \avgext_K/2 - x(K) \ge |\hK|\ . \]

Suppose now that $K$ is closed. Denote by $t$ the number of nodes colored during the slack generation step or as outliers. In closed clique, we compute a colorful matching of size $\beta\avganti_K$. Hence $|\pal{K}| \ge \Delta - t - \beta\avganti_K$. On the other hand, each edge in the matching colors two nodes. Therefore, the number of uncolored nodes is 
\begin{align*}
    |\hK| &\le |K| - t - 2\beta\avganti_K \\
          &\le (\Delta - t - \beta\avganti_K) - (\beta-1)\avganti_K \tag{because $|K|\le \Delta+\avganti_K$} \\
          &\le |\pal{K}| - x(K)\ . \tag{by definition of $\beta$, \cref{eq:parameters}} \qedhere
\end{align*}
\end{proof}

We now claim that each node has enough slack after SCT. 
Details of its implementation and related proofs are postponed to a later section (\cref{sec:implem-sct}, \cref{lem:learn-clique-palette,lem:sample-perm-constant}).

\begin{lemma}
\label{lem:deg-sct}
At the end of \cref{step:sct}, w.h.p.\ each $v\in \hK$ satisfies $|[x(v)]\cap\pal{v}| \ge 2\hatd(v)$.
\end{lemma}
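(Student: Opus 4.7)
I would handle each type of almost-clique $K\ni v$ (full, closed, open) separately, in each case lower-bounding $|[x(v)]\cap\pal{v}|$ and upper-bounding $\hatd(v)$. The common structural ingredient is the \emph{reservation property}: every preceding coloring step---slack generation, the colorful matching, the outlier multi-trial, SCT, and (for open cliques) the $O(1)$ follow-up rounds of \trycolor---restricts its trials to $[\Delta+1]\setminus[x(K)]$ (equivalently to $[\Delta+1]\setminus[x(u)]$ during slack generation). Consequently, no in-clique neighbor of $v$ has ever adopted a color in $[x(v)]$, so
\[ |[x(v)]\cap\pal{v}| \;\ge\; x(v) - N_{\mathrm{ext}}\ ,\]
where $N_{\mathrm{ext}}$ is the number of external neighbors of $v$ whose adopted color lies in $[x(v)]$ (trivially $N_{\mathrm{ext}}\le e_v$). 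I read $\hatd(v)$ as the uncolored degree of $v$ in the subgraph on which \multitrial{} is subsequently applied in \cref{step:multitrial}, namely inliers outside put-aside sets. By \cref{lem:sct}, after SCT the in-clique contribution to $\hatd(v)$ is at most $8\max\{6\avgext_K,C\log n\}$; for open cliques, the follow-up \trycolor rounds further shrink this by a tunable constant factor via \cref{claim:rct-less-colors}, fueled by the $\Omega(\epsilon e_v)$ permanent slack from \cref{lem:slack-ext,lem:slackgeneration} and consuming only colors in $[\Delta+1]\setminus[x(K)]$. The external contribution to $\hatd(v)$ is at most $e_v$.

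\textbf{Full and closed cases.} In a full clique ($x(v)=200\ell$, $\avgext_K+\avganti_K<\ell$) the bounds give $\hatd(v)\le 48\ell+30\ell=78\ell$ and $|[x(v)]\cap\pal{v}|\ge 200\ell-30\ell=170\ell\ge 2\hatd(v)$. In a closed clique ($x(v)=400\avganti_K$, $\avgext_K\le 2\avganti_K$, $\avganti_K=\Omega(\ell)$) I get $\hatd(v)\le 96\avganti_K+60\avganti_K+O(\log n)\le 164\avganti_K$, while $|[x(v)]\cap\pal{v}|\ge 400\avganti_K-60\avganti_K=340\avganti_K\ge 2\hatd(v)$. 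Both cases pass with constant slack, which is exactly what the constants in \cref{eq:parameters,eq:clique-palette} were tailored to achieve.

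\textbf{Open case (main obstacle).} The difficulty is that $x(v)=\csp\epsilon/8\cdot\avgext_K$ can be far smaller than $e_v\le 30\avgext_K$, so the deterministic bound $x(v)-e_v$ is not merely loose but actually negative. My two-pronged fix attacks both sides. On the $\hatd(v)$ side, I tune the number of follow-up \trycolor rounds---still $O(1)$, since $\csp,\epsilon$ are absolute constants---so that $(2/3)^{O(1)}\cdot 78\avgext_K\le x(v)/4$, giving $\hatd(v)\le x(v)/4$. On the palette side, I bound $N_{\mathrm{ext}}$ by concentration rather than by trivial counting: the color adopted by each external $u\in K'$ is uniform over (a subset of) $[\Delta+1]\setminus[x(K')]$ during the step that colored it, so the probability it lands in $[x(v)]$ is at most $x(v)/(\Delta-x(K'))\le 2x(v)/\Delta$; summing over the $e_v\le 30\epsilon\Delta$ external neighbors gives $\Exp[N_{\mathrm{ext}}]\le 60\epsilon\,x(v)$. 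Since $x(v)=\Theta(\avgext_K)=\Omega(\ell)\gg\log n$ in open cliques, Talagrand's inequality (\cref{lem:talagrand}) applied to the $1$-Lipschitz, $1$-certifiable counting function yields $N_{\mathrm{ext}}=O(\epsilon\,x(v))$ w.h.p. Consequently $|[x(v)]\cap\pal{v}|\ge(1-O(\epsilon))x(v)\ge x(v)/2\ge 2\hatd(v)$ by the choice $\epsilon=10^{-5}$. The most delicate step will be verifying the Talagrand hypotheses when the external color choices are not fully independent---they are coupled within each $K'$ by SCT's permutation. I would handle this by conditioning on each SCT permutation $\pi_{K'}$ and treating the remaining randomness as independent across distinct cliques $K'$.
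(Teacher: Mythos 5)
Your treatment of the full and closed cases reproduces the paper's argument: after SCT, $\hatd(v)$ and $e_v$ are both $O(\avgext_K)$ or $O(\avganti_K)$ with small enough constants that $x(v)-e_v\ge 2\hatd(v)$, and the reservation property guarantees $|[x(v)]\cap\pal{v}|\ge x(v)-e_v$. That part is faithful.

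For the open case you take a genuinely different route, and you have correctly put your finger on the weak point. Since $x(v)=\csp\epsilon/8\cdot\avgext_K$ with $\csp<1$ and $\epsilon=10^{-5}$, while inliers may have $e_v$ as large as $30\avgext_K$, the deterministic bound $x(v)-e_v$ is indeed negative for such nodes. The paper's own proof for open cliques only shows that $O(1)$ extra rounds of \trycolor{} (over $\pal{v}\setminus[x(v)]$) shrink $\hatd(v)$ by a constant factor per round, and then asserts that ``the uncolored degree of each $v$ verifies the desired equation.'' Shrinking $\hatd(v)$ alone cannot rescue the bound when $x(v)-e_v<0$, since \cref{claim:rct-less-colors} cannot push $\hatd(v)$ below $\Theta(\log n)$, and no argument is given that $|[x(v)]\cap\pal{v}|$ is bounded away from zero. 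The appeal to \cref{eq:slack-lists} in the proof of \cref{thm:bcongest} (namely ``$|L(v)\cap\pal{v}|\ge x(v)-e_v\ge\hatd(v)+\Omega(\ell)$'' for open cliques) runs into the same issue. Your diagnosis is sound: what is actually needed is an upper bound on the number of $[x(v)]$-colors blocked by $v$'s external neighbors, not the crude bound $e_v$.

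Your proposed fix (bound the expected number of external neighbors that land in the tiny reserved window $[x(v)]$, then apply a Talagrand/Lipschitz concentration argument) is the natural idea and is not in the paper in this form; it is a genuinely new ingredient, not a rephrasing. Two cautions, though. First, the per-external probability estimate ``uniform over a subset of $[\Delta+1]\setminus[x(K')]$, so probability $\le 2x(v)/\Delta$'' is not immediate: adopted colors are conditioned on non-conflict and on membership in the (random) palette, sparse nodes ($x(u)=0$) and sparse/outlier nodes colored by \multitrial{} draw from pseudorandom samples of their own palettes (which could a priori be skewed towards $[x(v)]$), and SCT assigns colors from $\pal{K'}\setminus[x(K')]$, a palette-dependent set. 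You need a bound for the number of \emph{trials} landing in $[x(v)]$, argued step by step for slack generation, matching, outlier/sparse \multitrial{} and SCT, rather than a uniform heuristic for the adopted colors. Second, \cref{lem:sct} is proved under an adversary controlling randomness outside $K$; your argument leans on independence of that outside randomness, which is consistent with the paper only if \cref{lem:deg-sct} is explicitly allowed to use it — worth making explicit. The Lipschitz/certifiability checks for Talagrand under the SCT permutation coupling, which you flag yourself, also need to be carried out. In short: correct identification of the issue, plausible plan, but the concentration step is not yet a proof.
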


\begin{proof}\renewcommand{\qed}{}
By \cref{lem:sct-enough}, cliques carry more colors than nodes they try to color during SCT, and by \cref{lem:sct}, at most $O(\avgext_K + \log n)$ nodes remain uncolored per clique. Simple counting shows the following claim.

\begin{claim}
\label{claim:deg-sct}
After the synchronized color trial, every uncolored $v\in K$ satisfies
\begin{itemize}
    \item $2\hatd(v) + e_v \le x(v)$ if $v\in\Kfull\cup \Kclosed$, and
    \item $\hatd(v) \le 80\avgext_K$ if $K\in\Kopen$.
\end{itemize}
\end{claim}

\begin{proof}
Let $v\in K$ and assume first $K\in\Kfull$. Since only inliers remain to be colored, $e_v \le 30\avgext_K \le 30\ell$ (by \cref{eq:outliers}) and after the synchronized color trial at most $48\ell$ nodes remain uncolored in $K$ (by \cref{lem:sct}). Overall, $\hatd(v) \le 80\ell$ and $2\hatd(v) + e_v \le 200\ell = x(v)$ (by \cref{eq:clique-palette}). 
If $K\notin\Kfull$, by a similar argument $\hatd(v) \le e_v + 50\avgext_K \le 80\avgext_K$.
If $K\in\Kclosed$, then $\hatd(v) \le 80\avgext_K \le 160\avganti_K$ because $\avgext_K \le 2\avganti_K$. Hence, $2\hatd(v) + e_v \le 400\avganti_K = x(v)$.\Qed{claim:deg-sct}
\end{proof}

Observe that, since $x(v)$ has the same value for each $v\in K$, and colors from $[x(K)]$ are not used to color nodes of $K$, the only reason some $c\in[x(v)]$ might not belong to $\pal{v}$ is if it is used by an external neighbor of $v$. For all $v\in K$ with $K\in\Kfull\cup\Kclosed$, \cref{eq:slack-lists} follows from \cref{claim:deg-sct}:

\begin{equation}
\label{eq:slack-lists}
|[x(v)]\cap\pal{v}| \ge x(v) - e_v \ge 2\hatd(v)\ . 
\end{equation}

For $v\in K$ with $K\in\Kopen$, we need $O(1)$ additional rounds of \trycolor to ensure \cref{eq:slack-lists}. However, we need to preserve $[x(K)]\subseteq \pal{K}$. Thus, nodes of $K$ try random colors in $\pal{v}\setminus [x(v)]$. We now show it is enough to reduce the uncolored degree.

Let $v\in K$ for any $K\in\Kopen$. By \cref{claim:deg-sct}, $\hatd(v) \le 80\avgext_K$; we show that $|\pal{v}| - x(v) \ge \Omega(\avgext_K)$. By \cref{claim:rct-less-colors}, even when using only colors from $\pal{v}\setminus[x(v)]$, after one call to \trycolor the uncolored degree of each node decreases by a constant factor. After $O(1)$ rounds, with high probability, the uncolored degree of each $v$ verifies the desired equation.

\begin{claim}
\label{claim:small-ext}
For each $v\in K$, $\Delta-d(v)+e_v \ge \avgext_K/2$.
\end{claim}

\begin{proof}
Since $\Delta \ge |K\cap N(v)| + e_v$ and $|K| = |K\cap N(v)| + a_v$, we have $\Delta \ge |K| + e_v - a_v$.
We must have $|K| \le \Delta - \avgext_K/2$ for, otherwise, summing on all $v\in K$, we get $\avganti_K \ge |K|-\Delta + \avgext_K > \avgext_K/2$. Now, for $v\in K$, we have $|N(v)\cap K| \le |K| \le \Delta - \avgext_K/2$. The claim follows.\Qed{claim:small-ext}
\end{proof}

If $e_v \le C\log n$, by \cref{claim:small-ext}, $s(v) \ge \Delta-d(v) \ge \avgext_K/2 - C\log n \ge \avgext_K/3$ because $\avgext_K \ge \ell/2 \gg C\log n$. If $e_v \ge C\log n$, vertex $v$ receives $\gamma\epsilon/2 \cdot e_v$ permanent slack from the slack generation step w.p. $1-n^{-\Theta(C)}$ (by \cref{lem:slackgeneration}). Overall, nodes use lists of size

\begin{align*}
    |\pal{v}| - x(v) &\ge \Delta - d(v) + \csp\epsilon/2\cdot e_v - x(v) \\
    &\ge \csp\epsilon/2 \cdot (\Delta - d(v) + e_v) - x(v) \tag{$\csp\epsilon/2 < 1$} \\
    &\ge \csp\epsilon/4\cdot \avgext_K - x(v) \tag{by \cref{claim:small-ext}} \\
    &\ge \csp\epsilon/8\cdot\avgext_K\ . \tag{by \cref{eq:clique-palette}}
\end{align*}

By \cref{claim:rct-less-colors} with $\alpha = \csp\epsilon/640$, after \trycolor the uncolored degree of each node reduces by a constant factor with high probability.
\Qed{lem:deg-sct}
\end{proof}

\subsection{\texorpdfstring%
{\Cref{step:outliers}: Coloring Put-Aside Sets}%
{Step~\ref{step:outliers}: Coloring Put-Aside Sets}}
\label{sec:put-aside}

Our goal, in this section, is to reduce the size of put-aside sets to $O(\log n/\log\log n)$. Once this is achieved, coloring their remaining nodes only takes $O(1)$ rounds, as the next lemma shows.

\begin{lemma}
\label{lem:color-put-aside}
Suppose all nodes are colored except put-aside sets $P_K$ in each $K\in\Kfull$ of size $O(\log n/\log\log n)$. Then, w.h.p.\ we can complete the coloring in $O(1)$ rounds of \bcongest.
\end{lemma}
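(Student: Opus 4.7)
The plan is to reduce the problem to independent per-clique instances and solve each via a short gather-compute-broadcast strategy. Since the put-aside sets $P_K$ have no edges to any $P_{K'}$ with $K'\neq K$ and all nodes outside the $P_K$'s are colored by hypothesis, every $v\in P_K$ has all its uncolored neighbors inside $P_K$, so $\hatd(v)\le|P_K|-1=O(\log n/\log\log n)$. The coloring task thus decomposes into independent per-clique $(\deg+1)$-list coloring problems on $O(\log n/\log\log n)$ nodes.

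The key enabler is that each $v\in P_K$ has at least $|P_K|+1$ palette colors inside a $\poly(\log n)$-sized set $C_K$: a portion of the reserved colors $[x(K)]=[200\ell]$ untouched by earlier phases. By construction, \cref{step:setting-up} and \cref{step:sct} never use colors from $[x(K)]$ inside $K$, and \cref{step:multitrial} consumes only $|\hK\setminus P_K|=O(\ell)$ of them (since in full cliques $\avgext_K<\ell$, so \cref{lem:sct} leaves $O(\ell)$ uncolored non-$P_K$ nodes). The reduction sub-phase of \cref{step:outliers} can be restricted to a fixed portion of $[x(K)]$, leaving a disjoint portion $C_K\subseteq[x(K)]$ of size $\Omega(\ell)$ intact. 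Together with the $e_v\le 30\ell$ colors potentially blocked by external neighbors, this yields $|C_K\cap\pal{v}|=\Omega(\ell)\gg|P_K|+1$.

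The algorithm then proceeds in $O(1)$ rounds. First, each $v\in P_K$ broadcasts $|P_K|+1$ colors from $C_K\cap\pal{v}$; each color fits in $\lceil\log|C_K|\rceil=O(\log\log n)$ bits, totaling $O(\log n)$ bits per broadcast. A per-clique leader $\ell_K$ (say the smallest-ID inlier) then aggregates the $|P_K|$ candidate lists in $O(1)$ additional rounds: direct broadcasts reach its $P_K$-neighbors, and for the at most $a_{\ell_K}\le 30\ell$ nodes of $P_K\setminus N(\ell_K)$, distinct common neighbors of $\ell_K$ and each such $v$ (which exist in abundance since $|N(\ell_K)\cap N(v)|\ge(1-\epsilon)\Delta-60\ell$) relay their candidate lists in parallel. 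The leader computes a valid proper coloring greedily---existence is immediate, as each $v$ has $|P_K|+1$ choices while at most $|P_K|-1$ other nodes of $P_K$ can block colors---and broadcasts it back in $|P_K|\cdot O(\log\log n)=O(\log n)$ bits, with one further relay round for dissemination to non-neighbors.

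The main obstacle is coordinating the relays under the single-broadcast-per-round constraint: each common neighbor must locally decide which $P_K$-node's list to forward, and the leader must receive each list without collision. This is resolved by a deterministic rule---for instance, the relay for $v\in P_K\setminus N(\ell_K)$ is the smallest-ID common neighbor of $v$ and $\ell_K$---combined with a preliminary broadcast by the leader of a compact bit-vector (of length $|P_K|$, well within $O(\log n)$ bits) indicating which $P_K$-IDs it heard directly, so unneeded relays can abstain.
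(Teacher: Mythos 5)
Your high-level plan (decompose per clique, give each $v\in P_K$ a list of $|P_K|+1$ available colors from a $\poly(\log n)$-sized publicly known set so each color costs $O(\log\log n)$ bits, disseminate within $K$, then compute the coloring with no further communication) matches the paper's. But the two load-bearing pieces of your argument are different from the paper's and both have problems.

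\emph{The reserved set $C_K$.} You want a subset $C_K\subseteq[x(K)]=[200\ell]$ of size $\Omega(\ell)$ that survives all earlier phases, so that after subtracting the $e_v\le 30\ell$ colors possibly blocked externally you still have $\Omega(\ell)\gg|P_K|+1$ colors. The arithmetic does not close under the paper's parameters. For \cref{step:multitrial} to color $\hK\setminus P_K$ via \cref{lem:mct-bcongest}, each node needs $|[x(v)]\cap\pal{v}|\ge 2\hatd(v)$, which in full cliques is up to $160\ell$, and $e_v$ already eats up to $30\ell$ of $[200\ell]$; so at most roughly $10\ell$ of $[x(K)]$ can be withheld from \multitrial, and those $48\ell$ colors \multitrial does consume may well fall inside whatever you reserved. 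Worse, the reduction in \cref{lem:small-put-aside} requires lists with $|L(v)\cap\pal{v}|\ge|P_K|+z\approx 201\ell$, whereas for full cliques $|\pal{K}|$ before the reduction can be as small as $\Delta+1-(|K|-|P_K|)\ge 200\ell+1$ (using $|K|\le\Delta+\avganti_K\le\Delta+\ell$). There is simply no slack in $\pal{K}$ from which to carve a disjoint $\Omega(\ell)$-sized $C_K$ and still satisfy the reduction's list-size requirement; asserting that ``the reduction sub-phase can be restricted to a fixed portion of $[x(K)]$'' is a nontrivial modification of \cref{lem:small-put-aside} and is not established. The paper avoids this entirely: it does not reserve anything further, but instead uses the clique palette $\pal{K}$ (replaced w.l.o.g.\ by an arbitrary $\Theta(\log^3 n)$-sized subset of it if it is larger), learned publicly via \learn (\cref{lem:learn-clique-palette}), and proves $|L(v)\cap\pal{v}|\ge|P_K|+1$ via \cref{claim:clique-palette} — using the colorful matching when $\avganti_K\ge C\log n$, and augmenting with the $a_v$ anti-neighbor colors $\col(K\setminus N(v))$ otherwise.

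\emph{The relay rule.} Your deterministic assignment ``the relay for $v$ is the smallest-ID common neighbor of $v$ and $\ell_K$'' is not collision-free: a single low-ID node that happens to be adjacent to $\ell_K$ and to many $v\in P_K\setminus N(\ell_K)$ would be designated relay for all of them and would need $\omega(1)$ rounds to forward their $O(\log n)$-bit lists one by one. The leader's preliminary bit-vector tells relays which IDs are needed, but does not resolve the contention among multiple $v$'s that share a relay. The paper's \cref{lem:echo} (Many-to-All Broadcast) handles exactly this by having every node of $K$ forward a \emph{randomly chosen} constant number of received messages, which balances out w.h.p.; using it, every node in $K$ learns every list in $O(1)$ rounds, and there is no need for a leader — each node locally simulates the greedy in \ID order (treating $P_K$ as a clique is safe since lists have $|P_K|+1$ colors). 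You should adopt that primitive rather than a deterministic relay rule, and the gather-at-a-leader step then becomes unnecessary.
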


\begin{proof}
Recall that no edges exist between put-aside sets. Hence, we color each put-aside set independently. 
We can assume without loss of generality that $|\pal{K}|=O(\log^3 n)$. Indeed, since nodes have $O(\log^{1.1}n)$ external and anti-degree, 
any $D\subseteq\pal{K}$ of size $\Theta(\log^3 n)$ works as replacement for the clique palette when $\pal{K}$ is larger. Nodes use \cref{alg:learnpalette} to learn $\pal{K}$ in $O(1)$ rounds (\cref{lem:learn-clique-palette}).

Therefore, describing a color $c\in \pal{K}$ takes $O(\log\log n)$ bits. If $\avganti_K \ge C\log n$, the clique palette has enough colors for every node, i.e., $|\pal{K}\cap\pal{v}| \ge |P_K|+1$. If $\avganti_K < C\log n$, lists $L(v)=\pal{K}\cup C(K\setminus N(v))$ have $|P_K|+1$ colors (\cref{claim:clique-palette} with an empty matching and $a_v$ extra colors). Since lists have size $|P_K|+1=O(\log n/\log \log n)$ and each color takes $O(\log\log n)$ bits, nodes can broadcast their list in $O(1)$ rounds. Nodes complete the coloring without additional communication, simulating a greedy sequential algorithm with the lists.
\end{proof}

The following technical claim (which is a direct application of Chernoff) allows us to assume we have global communication within almost-clique if the number of messages to send is small enough.
In particular, nodes can learn all the identifiers from $P_K$, therefore relabel nodes with $O(\log\log n)$-bit.

\begin{claim}[Many-to-All Broadcast]
\label{lem:echo}
Let $K$ be an almost-clique with $O(\Delta/\log n)$ nodes with an $O(\log n)$-bit message to send to everyone in $K$. Suppose each node with a message broadcasts it, before each node in $K$ broadcasts $O(1)$ messages it received, picked randomly. Then, w.h.p., all messages are received by every node in $K$.
\end{claim}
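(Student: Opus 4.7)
The plan is to bound, for each target node $w \in K$ and each source node $u \in K$ carrying a message $\mu$, the probability that $w$ fails to receive $\mu$, and then take a union bound over all such pairs. Let $m = O(\Delta/\log n)$ be the number of senders, and let $t = O(1)$ be the number of relay rounds (one random message broadcast per round per node in $K$).

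First, I would lower bound the number of common neighbors of $u$ and $w$ inside $K$. By \cref{part:anti-deg} of \cref{def:almost-clique}, both $u$ and $w$ have at least $(1-\epsilon)\Delta$ neighbors in $K$, while by \cref{part:clique-size} we have $|K| \le (1+\epsilon)\Delta$. Inclusion–exclusion inside $K$ yields $|N(u) \cap N(w) \cap K| \ge (1 - 3\epsilon)\Delta = \Omega(\Delta)$. After the first round (when all senders broadcast), every such common neighbor $v$ has $\mu$ in its received set $M_v$, and $|M_v| \le m$ for every $v \in K$.

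Next, I would analyze the relay phase. Over the $t$ relay rounds, each $v$ independently picks $t$ messages uniformly at random from $M_v$ (sampling with replacement, or equivalently without replacement — both work). Assuming $\mu \in M_v$, the probability that $v$ relays $\mu$ in at least one round is at least $1 - (1 - 1/|M_v|)^t \ge t/(2m)$. Since the sampling is independent across the $\Omega(\Delta)$ common neighbors of $u$ and $w$ in $K$, the probability that \emph{none} of them relays $\mu$ is at most
\[
\left(1 - \frac{t}{2m}\right)^{(1-3\epsilon)\Delta} \le \exp\!\left(-\Omega\!\left(\frac{t \Delta}{m}\right)\right) = \exp(-\Omega(t \log n)),
\]
using $\Delta/m = \Omega(\log n)$. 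Choosing the constant in $t = O(1)$ large enough drives this below $n^{-c-2}$ for any desired constant $c$. Since $w$ receives $\mu$ as soon as some neighbor of $w$ (in particular any of the common neighbors in $K$) broadcasts $\mu$, the failure probability for the pair $(w, \mu)$ is at most $n^{-c-2}$.

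Finally, I would union bound over the $\le n \cdot m \le n^2$ pairs $(w, \mu)$ to conclude that every message is received by every node in $K$ with probability $1 - n^{-c}$. There is no real obstacle here: the argument is a clean Chernoff/union-bound calculation, and the only thing to be careful about is the common-neighbor count, which relies squarely on the almost-clique inequalities. The bandwidth is fine since each relay round broadcasts a single $O(\log n)$-bit message per node, consistent with \bcongest.
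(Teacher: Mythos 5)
Your proof is correct and follows essentially the same approach the paper has in mind: the paper does not give an explicit proof of \cref{lem:echo}, stating only that it is ``a direct application of Chernoff,'' and your argument (lower-bound the common neighborhood $|N(u)\cap N(w)\cap K| \ge (1-3\epsilon)\Delta$ via the almost-clique inequalities, bound the per-relay success probability by $\Omega(1/m)$ since each node holds at most $m=O(\Delta/\log n)$ messages, amplify over $\Omega(\Delta)$ independent common neighbors to get failure probability $e^{-\Omega(t\log n)}$, and union-bound over pairs) is exactly that calculation. The only cosmetic quibble is that the step $1-(1-1/m)^t\ge t/(2m)$ tacitly uses $t\le m$, which holds in the regime of interest since $m=\Omega(\log^2 n)$ whenever the claim is invoked (and is trivial otherwise).
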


\newcommand{\comptry}{\alg{CompressTry}}

The key difficulty in coloring put-aside sets lies in reducing their sizes to $O(\log n / \log \log n)$.
We use a procedure \comptry, which simulates a sequential algorithm where nodes of the put-aside set, in the order of their {\ID}s, each perform $O(\log n/\log\log n)$ times a \emph{non-adaptive} \trycolor with slack $z$. The following technical lemma analyzes the performance of \comptry.
We defer the exact description of \comptry and proof of \cref{lem:reduce-put-aside} to \cref{sec:reduce-put-aside}.

\begin{restatable}{lemma}{reducePutAside}
\label{lem:reduce-put-aside}
Let $K\in\Kfull$ and fix a set $S\subseteq \hK$ of size $O(\log^{1.1} n)$. 
Furthermore, suppose each $v\in S$ has a list $L(v)$ of at most 
$C\log^{1.1} n$ colors known to every $u\in S$, and such that $|L(v)\cap \pal{v}| \ge |S|+z$ for a fixed $z \ge C\log n/\log\log n$. 
Then, w.p.\ $1-e^{-z}-1/\poly(n)$, \comptry colors all but $z$ nodes in $S$. Furthermore, \comptry uses $O(\log n/\log\log n)$ bandwidth.
\end{restatable}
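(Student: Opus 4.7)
My plan is to implement \comptry in two phases. First, each $v\in S$ independently samples a sequence $c_v^1,\ldots,c_v^T$ of $T=\Theta(\log n/\log\log n)$ colors uniformly from $L(v)$. Since $L(v)$ has size at most $C\log^{1.1}n$ and is commonly known to $S$, each $c_v^i$ can be encoded as its $O(\log\log n)$-bit index in $L(v)$; in addition, $v$ attaches a single indicator bit for each $i$ recording whether $c_v^i\in\pal{v}$. This amounts to $T\cdot O(\log\log n)=O(\log n)$ bits per node, which we disseminate to all of $S$ by combining a broadcast with $O(1)$ rounds of the many-to-all echo primitive \cref{lem:echo} (the hypothesis $|S|\le O(\log^{1.1}n) = O(\Delta/\log n)$ holds in our regime). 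Scheduling the dissemination over $O(\log\log n)$ rounds caps the per-round bandwidth at $O(\log n/\log\log n)$, matching the claim. The second phase is local: processing the nodes of $S$ in order of {\ID}, each node of $S$ virtually assigns $v$ the color $c_v^{i^\ast}$, with $i^\ast$ the smallest $i$ such that $c_v^i\in\pal{v}$ and $c_v^i$ was not assigned to any previously-processed $u\in S$; if no such $i$ exists, $v$ is declared a failure.

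\textbf{Correctness of the sequential analysis.} Fix the processing order and condition on the outcomes of all predecessors of $v$. Colors in $L(v)$ that are forbidden for $v$ consist of (i) the at most $|L(v)|-|L(v)\cap\pal{v}|$ colors in $L(v)\setminus\pal{v}$, and (ii) the at most $|S|-1$ colors assigned to predecessors of $v$ in $S$ (we charge \emph{every} earlier colored node of $S$, not only neighbors of $v$ --- this can only hurt us). By the hypothesis $|L(v)\cap\pal{v}|\ge|S|+z$, at least $z+1$ colors in $L(v)$ remain usable. Because $v$'s samples are drawn independently of everything else, each $c_v^i$ is usable with probability at least $(z+1)/|L(v)|\ge(z+1)/(C\log^{1.1}n)$, independently across $i$. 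Hence, conditional on any history,
\[
q \;:=\; \Pr[v\text{ fails}\mid\text{history}] \;\le\; \exp\parens*{-T(z+1)/(C\log^{1.1}n)}.
\]
By selecting the hidden constant in $T$ large enough and using $z\ge C\log n/\log\log n$, I arrange $T(z+1)/(C\log^{1.1}n)\ge \ln(e^2 |S|/z)+1$, so that $|S|q\le z/e^2$. Applying the martingale Chernoff (\cref{lem:chernoff}) to $X=\sum_{v\in S}\mathbf{1}[v\text{ fails}]$, or equivalently iterating the conditional bound over subsets of size $z+1$ and union-bounding over $\binom{|S|}{z+1}$ such subsets, yields
\[
\Pr[X>z]\;\le\;\parens*{e|S|q/(z+1)}^{z+1}\;\le\;e^{-z}.
\]
The residual $1/\poly(n)$ term in the lemma statement absorbs the failure probability of the many-to-all dissemination in \cref{lem:echo}.

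\textbf{Main obstacle.} The subtle step is making the martingale/iterated-conditioning argument go through despite the \emph{non-adaptive} sampling: all $c_v^i$ are fixed before the sequential simulation, and the outcome of $v$ depends jointly on $v$'s samples \emph{and} the outcomes of earlier nodes (which in turn depend on other samples). The key observation that unlocks the proof is that $v$'s samples are independent of the samples of earlier nodes, so conditioning on any realization of the earlier outcomes leaves $(c_v^i)_i$ uniform and independent on $L(v)^T$; combined with the deterministic lower bound $z+1$ on the usable-color count regardless of the history, this gives a per-sample success probability that is uniformly lower-bounded and hence the conditional geometric bound above. A second, minor point is to guarantee that the local simulations carried out independently by every node of $S$ agree on a common outcome: this is immediate, since every node observes the same sample indices and palette-indicator bits and applies the same deterministic rule.
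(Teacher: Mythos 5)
Your proof follows essentially the same approach as the paper: pre-sample a sequence of trial colors, disseminate them with Many-to-All Broadcast, simulate a sequential \trycolor greedily by \ID, use the deterministic lower bound of $z+1$ usable colors regardless of history, invoke conditional independence of each node's samples from its predecessors' choices, and close with a Chernoff-type tail bound. The conditional-independence observation you flag as the ``key'' is exactly the paper's mechanism as well. Two minor deviations are worth noting.

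First, you over-provision the number of trial colors: you take $T = \Theta(\log n / \log\log n)$, giving a $O(\log n)$-bit message per node that you then schedule over $O(\log\log n)$ rounds to respect the $O(\log n/\log\log n)$ bandwidth. The paper instead uses $k = \lceil C\log n / \log^2\log n\rceil$ colors, so that $k \cdot O(\log\log n) = O(\log n/\log\log n)$ bits fit into a \emph{single} $O(1)$-round Many-to-All dissemination. Your analysis would go through unchanged with $T=k$ (you would still have $T(z+1)/(C\log^{1.1}n) = \Theta(\log^{0.9}n/\log^3\log n) \gg \log\log n$), and the saving matters downstream: \cref{lem:small-put-aside} runs $\log\log n$ parallel copies of \comptry and claims $O(1)$ total rounds, which presumes each copy itself terminates in $O(1)$ rounds. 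With your scheduling the reduction step becomes $O(\log\log n)$ rounds, which would in particular break the $O(\log^* n)$ bound for $\Delta\in\Omega(\log^3 n)$ advertised in \cref{thm:bcongest}.

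Second, you sample from $L(v)$ and carry an indicator bit for palette membership, whereas the paper samples directly from $L(v)\cap\pal{v}$; the two are equivalent and give the identical per-trial success bound $\geq (z+1)/|L(v)|$. Your tail bound via the $\binom{|S|}{z+1}$ union bound is a clean alternative to the paper's martingale Chernoff and actually yields a cleaner $e^{-z}$ without constant fudging.
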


\cref{lem:small-put-aside} shows how we use \comptry to reduce the size of the put-aside sets. In cliques with colorful matching, nodes have $\avganti_K \in \Omega(\log n)$ slack; \comptry directly reduces $P_K$ to $O(\log n/\log\log n)$ nodes by using the clique palette. In cliques where $\avganti_K < C\log n$, we first put-aside $O(\log n)$ nodes to reduce $P_K$ to $O(\log n)$ using the clique palette. Then, nodes add colors used by their anti-neighbors to their list, and \comptry finishes to reduce $P_K$ to $O(\log n/\log\log n)$.

\begin{lemma}
\label{lem:small-put-aside}
There is a $O(1)$-round \bcongest algorithm reducing the number of uncolored nodes in $P_K$ to $O(\log n/\log\log n)$ with high probability.
\end{lemma}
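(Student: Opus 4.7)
The plan is to apply the compressed-trial procedure \comptry of \cref{lem:reduce-put-aside} with slack target $z = C\log n/\log\log n$, splitting according to whether the full clique $K$ had a colorful matching computed, i.e., whether $\avganti_K \ge C\log n$. As common preprocessing, every node of $K$ first learns a $\poly(\log n)$-sized subset $D \subseteq \pal{K}$ in $O(1)$ rounds via \cref{lem:learn-clique-palette}; this is legitimate because $K \in \Kfull$ forces $e_v, a_v \le 30\ell = O(\log^{1.1} n)$ for every $v \in P_K$, and \cref{lem:put-aside-build} guarantees that every external neighbor of $v \in P_K$ is already colored, so we need only contend with uncolored nodes inside $K$.

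\textbf{Case 1: $\avganti_K \ge C\log n$.} The colorful matching $M$ of size $\beta\avganti_K = 401\avganti_K$ in $K$, combined with $a_v \le 30\avganti_K$ and \cref{claim:clique-palette}, yields
\[ |\pal{K} \cap \pal{v}| \;\ge\; |\pal{K}| - e_v \;\ge\; |\hK| + 1 - a_v + |M| \;\ge\; |P_K| + 371\avganti_K \;\ge\; |P_K| + z. \]
Taking $L(v) := D$ with $|D| \ge |P_K| + z$ thus meets the hypothesis of \cref{lem:reduce-put-aside}, and a single call to \comptry on $S = P_K$ leaves at most $z = O(\log n/\log\log n)$ uncolored nodes in $P_K$ w.h.p.

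\textbf{Case 2: $\avganti_K < C\log n$.} No matching is available, and \cref{claim:clique-palette} gives only $|\pal{K} \cap \pal{v}| \ge |P_K| - a_v$, insufficient on its own. I proceed in two phases. In Phase~A, fix $R \subseteq P_K$ as the $|R| = 30C\log n + z$ smallest-ID nodes and run \comptry on $S_1 = P_K \setminus R$ with $L(v) = D$; the slack is $|\pal{K} \cap \pal{v}| - |S_1| \ge (|P_K| - a_v) - (|P_K| - |R|) = |R| - a_v \ge z$, so Phase~A leaves at most $|R|+z = O(\log n)$ uncolored nodes in $P_K$, which I denote by $S_2$. In Phase~B, each $v \in S_2$ broadcasts the colors $\col(K \setminus N(v))$ of its anti-neighbors to the rest of $S_2$ via \cref{lem:echo}; since $|S_2|, a_v = O(\log n)$ and every color is indexed by $O(\log\log n)$ bits inside the polylogarithmic set $D \cup \col(K\setminus N(v))$, this fits in $O(1)$ rounds. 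A final call to \comptry with $L(v) = D \cup \col(K\setminus N(v))$ then reduces $S_2$ down to $z$ uncolored nodes: a color $c \in \col(K\setminus N(v))$ misses $\pal{v}$ only if some neighbor of $v$ also uses $c$, which requires either one of the $\le e_v$ external neighbors, or an anti-edge of $K$ between $N(v)\cap K$ and $K \setminus N(v)$; both contributions are $O(\log n)$ in Case~2.

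The principal obstacle is keeping all of the above inside $O(1)$ rounds of \bcongest, which forces every palette reference to live in a polylogarithmic set so that colors cost $O(\log\log n)$ bits, and which requires \cref{lem:echo} to carry Phase~B's exchange of $O(\log n)$ colors per node. The secondary technical worry, that anti-neighbor colors may coincide with colors of actual neighbors, is absorbed into the constants defining $|R|$ and $z$ using the Case~2 bounds $a_v, e_v = O(\log n)$ together with the control over anti-edges incident to $N(v)$ provided by $\avganti_K < C\log n$.
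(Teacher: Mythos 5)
Your overall structure (the case split on $\avganti_K$, the list choices, and the two-phase reduction when $\avganti_K < C\log n$) mirrors the paper's, but there is a genuine probability gap: \cref{lem:reduce-put-aside} only guarantees success with probability $1-e^{-z}-1/\poly(n)$, and for $z=C\log n/\log\log n$ the term $e^{-z}=n^{-C/\log\log n}$ is \emph{not} $1/\poly(n)$. A single call to \comptry therefore does not achieve the high-probability bound you assert, neither in Case~1 nor in Case~2 Phase~B (nor in your Phase~A as written, since you invoke the lemma with slack parameter $z=C\log n/\log\log n$). The paper handles this by running $\log\log n$ independent copies of \comptry in parallel and keeping the best outcome, pushing the failure probability down to $e^{-z\log\log n}=n^{-C}$; these $\log\log n$ copies, each using $O(\log n/\log\log n)$ bandwidth, still fit in one $O(\log n)$-bit broadcast. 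For Phase~A the paper additionally sidesteps the issue by reserving $31C\log n$ nodes so that the slack parameter fed to \cref{lem:reduce-put-aside} is $C\log n$ rather than $C\log n/\log\log n$, so a single call already succeeds w.h.p.\ there. You need one or both of these fixes.

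A secondary issue is the mechanism in Phase~B: you have each $v\in S_2$ broadcast $\col(K\setminus N(v))$, but $v$ is not adjacent to its anti-neighbors and cannot know their colors a~priori, so it cannot be the one broadcasting them. The paper reverses the direction of communication: after relabeling $P_K$ with identifiers in $[|P_K|]$ via \cref{lem:echo}, the $O(\log^2 n)$ nodes of $K$ that are anti-adjacent to some node of $P_K$ each broadcast a bit-map over $[|P_K|]$ describing which put-aside nodes they miss, together with their own color, and Many-to-All Broadcast disseminates these to all of $K$. From that, each $v\in P_K$ reconstructs $\col(K\setminus N(v))$. Your ``index colors by $O(\log\log n)$ bits inside $D\cup\col(K\setminus N(v))$'' plan presupposes a shared canonical naming of a set that is different for each $v$ and not yet learned, so it does not go through as stated.
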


\begin{proof}
For cliques such that $\avganti_K \ge C\log n$, \cref{lem:reduce-put-aside} allows us to directly reduce $P_K$ to a set of size $z:=C\log n/\log\log n$. This is because, in such cliques, we compute a colorful matching of size $\beta\avganti_K \ge \avganti_K + a_v$, for each $v \in P_K$ (which are inliers). Therefore, using lists $L(v):=\pal{K}$, by \cref{claim:clique-palette}, $|L(v)\cap \pal{v}| \ge |P_K|+\avganti_K \ge |P_K| + z$. Note that the clique palette can be publicly learned in $O(1)$ rounds by \cref{lem:learn-clique-palette}. \comptry succeeds only w.p.\ $1-e^{-z}$, but by repeating independently $\log\log n$ times, the probability that at least one instance succeeds is $1-e^{-z\log\log n} = 1-n^{-C}$.
Overall, we need $\log\log n\times O(\log n/\log\log n)=O(\log n)$ bandwidth.

Henceforth, we assume that $\avganti_K < C\log n$. The main difference is that we do not have a colorful matching, so the clique palette does not approximate $\pal{v}$ well. We settle this in two steps.

\emph{From $O(\log^{1.1} n)$ to $O(\log n)$.}
Let $S\subseteq P_K$ be an arbitrary subset of $P_K$ of $31C\log n$ nodes. By \cref{claim:clique-palette}, $|\pal{K}\cap\pal{v}|\ge |P_K|-a_v \ge |P_K\setminus S| + C\log n$. Therefore, \comptry with lists $L(v)=\pal{K}$ and $z=C\log n$ reduces $P_K$ w.h.p.\ to size $32C\log n$ (the $C\log n$ nodes left uncolored in $P_K\setminus S$ by \comptry and the $31C\log n$ uncolored nodes of $S$).

\emph{From $O(\log n)$ to $O(\log n/\log\log n)$.}
Now, instead of using only the clique palette, we augment lists with colors of anti-neighbors. Let $L(v):=\pal{K}\cup \col(K\setminus N(v))$. Since we are adding $a_v$ colors to each list, \cref{claim:clique-palette}, even with an empty matching, gives us, $|L(v)\cap\pal{v}| = |\pal{K}\cap\pal{v}| + a_v \ge |P_K|$. If we now put-aside a set $S\subseteq P_K$ of $z:=C\log n/\log\log n$ nodes, lists $L(v)$ verify $|L(v)\cap\pal{v}|\ge |P_K\setminus S| + z$. To conclude, it remains to explain how nodes learn lists $L(v)$.

Since $\avganti_K < C\log n$, each node has at most $30C\log n$ anti-neighbors in the clique. If we relabel nodes of $P_K$ using identifiers in $[|P_K|]$ (with \cref{lem:echo}), every $u\in K$ can describe the set $P_K\setminus N(v)$ with a bit-map in one $O(\log n)$-bit message. Note that only $O(\log^2 n)$ nodes will need to send a bit-map, i.e.\ at most $O(\log n)$ per node in $P_K$. By \cref{lem:echo}, all messages can be disseminated in $O(1)$ rounds to all nodes in $K$. Thus, all lists are known and we make $\log\log n$ independent calls to \comptry.
\end{proof}

\subsection{\texorpdfstring%
{Proof of \cref{thm:bcongest}}%
{Proof of Theorem~\ref{thm:bcongest}}}
\label{sec:proof}

By \cref{lem:acd-bcongest}, we can compute the almost-clique decomposition in $O(1)$ rounds. By aggregation on a depth-2 BFS tree, nodes in each clique can count $\avganti_K$ and $\avgext_K$, thus know to which category their clique belongs to, as well a their value of $x(K)$. Then, with w.p.\ $\ps$ every node decides independently to try a color in $[\Delta+1]\setminus[x(v)]$ (for consistency, let $x(v)=0$ for all $v\in\Vsparse$). Finally, in each clique with $\avganti_K \ge C\log n$, we compute a colorful matching of size $\beta\avganti_K$ (by \cref{thm:colorful-matching}). By \cref{lem:put-aside-build}, we compute put-aside sets $P_K$ in $O(1)$ rounds.

\paragraph{Sparse Nodes \& Outliers.} Each $v\in\Vsparse$ has permanent slack $\Omega(\Delta)$ (by \cref{lem:slackgeneration} and because they are $\Omega(\Delta)$-sparse). Hence, we color $\Vsparse$ in $O(\log^* n)$ rounds of \multitrial (by \cref{lem:mct-bcongest}). Since nodes know $\avganti_K$ and $\avgext_K$, they can tell if they are outliers. Outliers have slack $(0.9-\epsilon)\Delta \ge \Delta/2$ from inactive inliers neighbors (\cref{claim:inliers}). Contrary to sparse nodes, we must avoid coloring outliers of $K$ with colors from $[x(K)]$. By definition $x(K) = 10^3\epsilon\Delta$ (\cref{eq:clique-palette}); by our choice of $\epsilon$, outliers have slack $(1/2-10^3\epsilon)\Delta \ge \Delta/3$ even when trying colors from $[\Delta+1]\setminus[x(K)]$. By \cref{claim:rct-less-colors,lem:mct-bcongest}, outliers are colored in $O(\log^* n)$ rounds with high probability.

\paragraph{Inliers.} 
Henceforth, we condition on the success of \cref{step:setting-up,step:sct} in every clique. By \cref{lem:deg-sct}, each inlier satisfies $|L(v)\cap\pal{v}| \ge 2\hatd(v)$ with $L(v):=[x(v)]$. To run \multitrial, we need lists to intersect the palette on at least $\Omega(\ell)=\Omega(C\log^{1.1} n)$ colors (\cref{lem:mct-bcongest}, \cref{part:list-size}). If $v$ is in a open or closed clique, then $\avganti_K$ or $\avgext_K$ is greater than $\ell/2$ and $|L(v)\cap\pal{v}| \ge x(v) - e_v \ge \hatd(v) + \Omega(\ell)$ (by \cref{eq:clique-palette}). On the other hand, if $v$ is in a full clique, then $a_v \le 30\avganti_K \le 30\ell$ (by \cref{eq:outliers} and \cref{lem:put-aside-build}). Therefore, $v$ has at least $|N(v)\cap P_K|\ge \ell$ temporary slack from inactive put-aside neighbors (by \cref{lem:put-aside-build}).
Finally, it suffices to broadcast $x(v)$ for all neighbors of $v$ to learn $L(v)$. Therefore, lists $L(v):=[x(v)]$ verify all properties requires to run \multitrial in \bcongest (\cref{lem:mct-bcongest}). With high probability, all nodes are colored in $O(\log^* n)$ rounds -- except put-aside sets. By \cref{lem:color-put-aside,lem:small-put-aside}, we can color put-aside sets in $O(1)$ rounds. 
\Qed{thm:bcongest}

\section{Synchronized Color Trial in \bcongest}
\label{sec:implem-sct}

At its core, synchronized color trial is simply about creating a random bijection between (most of) a set of colors and (most of) the uncolored nodes of a clique. Our implementation uses the clique palette as a set of colors and randomly permutes the nodes. The order of each node in the permutation tells it which color to take in the clique palette. This entails two difficulties. Firstly, to make use of its order in the sampled permutation, each node needs to know the matching color in the clique palette. We show that $O(1)$ rounds of \bcongest suffice for all nodes to learn their clique palette. The second issue is sampling the permutation, and entails a more involved process. For simplicity, we describe first a $O(\log \log n)$-round permutation sampling procedure, which suffices for \cref{thm:bcongest,thm:streaming}. We then explain how to reduce it down to $O(1)$ rounds with a slightly more involved procedure.

We will need the following technical lemma.

\begin{lemma}
\label{lem:dom-groups}
Let $K$ be an almost-clique and an integer $k\le \Delta/(C\log n)$ for some large enough $C > 0$. Suppose each $v\in K$ samples $t(v)\in[k]$ uniformly at random. Then, with high probability, for each $i\in[k]$, the set $T_i=\set{v\in K: t(v)=i}$ satisfies that for any $u,w\in K$, $|T_i\cap N(u)\cap N(w)| \ge (C/4)\log n$. We say that $T_i$ \emph{2-hop connects} $K$ in that each pair of nodes in $K$ has a common neighbor in $T_i$.
\end{lemma}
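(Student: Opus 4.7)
The plan is to combine a simple inclusion-exclusion argument on the almost-clique structure with a Chernoff bound plus a union bound. First, I would observe that for any pair $u, w \in K$ (allowing $u=w$), the almost-clique properties give a large common neighborhood inside $K$: by \cref{part:clique-size,part:anti-deg} of \cref{def:almost-clique},
\[
|N(u) \cap N(w) \cap K| \;\ge\; |N(u)\cap K| + |N(w)\cap K| - |K| \;\ge\; 2(1-\epsilon)\Delta - (1+\epsilon)\Delta \;=\; (1-3\epsilon)\Delta.
\]
Since $\epsilon = 10^{-5}$, this is essentially $\Delta$, which in particular is at least $\Delta/2$.

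Next, fix $i \in [k]$ and a pair $u, w \in K$. Each node $v$ independently satisfies $t(v)=i$ with probability $1/k$, so $|T_i \cap N(u) \cap N(w)|$ is a sum of independent Bernoulli$(1/k)$ variables over the nodes of $N(u)\cap N(w)\cap K$. Its expectation is at least $(1-3\epsilon)\Delta/k \ge (1-3\epsilon) C \log n$, using $k \le \Delta/(C\log n)$. I would then apply the lower-tail Chernoff bound of \cref{lem:chernoff} (equation~\eqref{eq:chernoffmore}) with deviation parameter $\delta = 3/4$ to conclude
\[
\Pr\!\left[|T_i \cap N(u) \cap N(w)| \le (C/4)\log n\right] \;\le\; \exp\!\left(-\tfrac{(3/4)^2}{2}\cdot (1-3\epsilon)\,C\log n\right) \;=\; n^{-\Omega(C)}.
\]

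Finally, I would union bound this failure event over all $k \le n$ choices of $i$ and all $|K|^2 \le ((1+\epsilon)\Delta)^2 \le n^2$ pairs $(u,w) \in K \times K$, giving at most $n^3$ events. Choosing the constant $C$ large enough, every event occurs with probability at least $1 - n^{-c}$ for any desired constant $c$, which yields the lemma.

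There is no real obstacle here: the only content is the structural inequality $|N(u)\cap N(w)\cap K| \ge (1-3\epsilon)\Delta$, after which concentration and union bound are routine. The one subtlety to keep in mind is that the same letter $C$ governs both the sparsity of the sampling ($k \le \Delta/(C\log n)$) and the target size $(C/4)\log n$; I would make explicit that $C$ must be chosen sufficiently large so that $\Omega(C)$ dominates the union-bound exponent $3$ by the desired margin.
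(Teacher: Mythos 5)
Your proof follows essentially the same approach as the paper: bound the common neighborhood of $u,w$ inside $K$ via the almost-clique properties, apply a lower-tail Chernoff bound to $|T_i \cap N(u) \cap N(w)|$ (a sum of independent Bernoulli$(1/k)$ indicators), and union bound over $\poly(n)$ pairs and buckets. One tiny arithmetic slip worth correcting: since you bounded $\mu \ge (1-3\epsilon)C\log n$, which is marginally \emph{below} $C\log n$, taking $\delta = 3/4$ gives $(1-\delta)\mu = \mu/4 < (C/4)\log n$, so the event $\{X \le (C/4)\log n\}$ is not contained in $\{X \le (1-\delta)\mu\}$ and the displayed inequality does not follow as written --- choosing any $\delta$ slightly below $3/4$ (say $\delta = 0.7$, using $\epsilon = 10^{-5}$) repairs this without affecting the $n^{-\Omega(C)}$ conclusion.
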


Note that since $T_i \subseteq K$, each $T_i$ also 2-hop connects itself, thus has diameter $2$.

\begin{proof}
Fix an index $i\in[k]$. Each node joins $T_i$ w.p.\ $1/k$ independently from other nodes. For each pair $u,w \in K$, in expectation, $T_i\cap N(u) \cap N(w)$ has size $\mu=\card{N(u)\cap N(v)}/k \ge (1-2\eps)\Delta/k \ge (C/2)\log n$. By a classic Chernoff bound,
$\Pr\parens{\card{T_i\cap N(u) \cap N(w)} \leq \mu/2} \le \exp(-\mu/12) \le 1/\poly(n)$. By union bound, w.h.p., we have $\card{T_i\cap N(u) \cap N(w)} \geq \Delta/(4k)$ for all $i\in[k]$ and $u,w \in K$.
\end{proof}

\paragraph{Learning the clique palette.} We learn the clique palette by dividing the color space into $O(\Delta/\log n)$ contiguous subpalettes. Given a 2-hop connecting set of nodes to handle each subpalette -- with a trivial construction due to \cref{lem:dom-groups} -- each node learns $\pal{K}$ in $O(1)$ rounds.
Recall that $\col(S)$ denotes the set of colors currently assigned to a set $S$ of nodes.

\begin{Algorithm}
\label{alg:learnpalette}
Procedure \learn, in almost-clique $K$.

\noindent
\textbf{Parameters:} Let $C=O(1)$ be a large enough constant, $k = \floor{\Delta/(C\log n)}$.
\smallskip

\noindent
Assume $K$ to be split into $k$ 2-hop connecting sets $T_1,\ldots, T_k$. Let $R_i := \set{1+ \floor{(i-1) \cdot (\Delta+1) / k}, \floor{i \cdot (\Delta+1) / k}}$, i.e., $R_1,\ldots,R_k$ partition the color space $[\Delta+1]$.

\begin{enumerate}[leftmargin=*]
    \item\label[step]{step:learnin} Each $v$ encodes $R_{t(v)} \cap \col(N(v)\cap K)$ into a $C\log n$-sized bit-map and broadcasts it.
    \item\label[step]{step:teachout} 
    For each $i\in [k]$, each $v\in K$ combines the bit-maps received from its neighbors in $T_i$, i.e., computes
    \[\bigcup_{u \in N(v) \cap T_i} \parens[\bigg]{R_{i} \cap \col(N(u)\cap K)}\] and takes it for $R_{i} \cap \col(K)$.
\end{enumerate}
\end{Algorithm}
\begin{lemma}
\label{lem:learn-clique-palette}
Let $K$ be an almost-clique of palette $\pal{K}$. \learn has each $v\in K$ learn $\pal{K}$ in $O(1)$ rounds of \bcongest.
\end{lemma}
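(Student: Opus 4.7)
The plan is to combine three ingredients: the construction of 2-hop connecting groups from Lemma~\ref{lem:dom-groups}, a bandwidth accounting for the messages of Step~\ref{step:learnin}, and a completeness argument for Step~\ref{step:teachout}. The sampling of the groups $T_1, \dots, T_k$ is implicit in the preamble of \cref{alg:learnpalette}: each $v \in K$ locally draws $t(v) \in [k]$ uniformly at random. Because $k = \lfloor \Delta/(C \log n)\rfloor$, Lemma~\ref{lem:dom-groups} immediately applies and tells us that, w.h.p., every $T_i$ 2-hop connects $K$, i.e., any two nodes of $K$ share at least $(C/4)\log n$ common neighbors inside $T_i$. I would condition on this event from here on.

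Next I would check the bandwidth. In Step~\ref{step:learnin}, node $v$ broadcasts the pair $(t(v), B_v)$, where $t(v) \in [k] \subseteq [n]$ takes $O(\log n)$ bits and $B_v$ is a bitmap over $R_{t(v)}$ of length $|R_{t(v)}| \leq (\Delta+1)/k + 1 = O(\log n)$ bits. This fits in a single \bcongest message; Step~\ref{step:teachout} is local. The index $t(v)$ is necessary so that receivers know which range $R_i$ the bitmap refers to; with it, each $v$ knows which of its neighbors sit in each $T_i$ and can perform the union of Step~\ref{step:teachout}.

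The one non-routine point is correctness of Step~\ref{step:teachout}. Fix $v \in K$ and $i \in [k]$. The inclusion $\bigcup_{u \in N(v)\cap T_i}\bigl(R_i \cap \col(N(u)\cap K)\bigr) \subseteq R_i \cap \col(K)$ is immediate. For the reverse direction, take any $c \in R_i \cap \col(K)$ and let $w \in K$ be the node with $\col(w)=c$. By the conditioned 2-hop connecting property applied to the pair $(v,w) \in K \times K$, there exists $u \in T_i \cap N(v) \cap N(w)$. This $u$ satisfies $w \in N(u) \cap K$ and $t(u)=i$, so $c \in R_{t(u)} \cap \col(N(u)\cap K) = B_u$, and $v$ recovers $c$. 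Taking the union over all $i$, $v$ reconstructs $\col(K)$ and hence $\pal{K} = [\Delta+1]\setminus \col(K)$.

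I do not expect a real obstacle: Lemma~\ref{lem:dom-groups} is doing the heavy lifting, the message-size bookkeeping is straightforward once one piggybacks $t(v)$ on the bitmap, and the completeness argument is a one-line application of the 2-hop connecting property. The only subtlety worth flagging is that the statement of 2-hop connectivity must be invoked for the pair $(v, w)$ where $w$ ranges over \emph{all} of $K$ (not just $N(v)$), which is exactly what Lemma~\ref{lem:dom-groups} provides.
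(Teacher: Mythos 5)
Your proposal is correct and takes essentially the same route as the paper: condition on the $T_i$'s being 2-hop connecting via \cref{lem:dom-groups}, note the $O(\log n)$-bit message size, and argue completeness by finding, for each $c\in R_i\cap\col(K)$ held by some $w\in K$, a common neighbor $u\in T_i\cap N(v)\cap N(w)$ that includes $c$ in its bitmap. Your extra care about broadcasting $t(v)$ alongside the bitmap is a reasonable implementation detail the paper leaves implicit, and your remark that 2-hop connectivity must be invoked for arbitrary $w\in K$ (not just $w\in N(v)$) is exactly the point the paper is also relying on.
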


\begin{proof}
In $\Delta+1$-coloring, learning $\pal{K}$ is equivalent to learning the \emph{used} colors $\col(K)$.
\learn requires $O(1)$ rounds of \bcongest, as each node in $K$ only sends one $C \log n$-bit message. Let us consider a color $c \in\col(K)$, a node $v\in K$, and argue that $v$ learn $c$. Let $R_i$ be such that $c \in R_i$, and $u\in K$ a node with color $c$. Since $T_i$ 2-hop connects $K$, there exists a node in $T_i \cap N(u) \cap N(v)$. Such a node contains $c$ in the bitmap it computes in \cref{step:learnin} of \learn, and $v$ receives this bitmap in \cref{step:teachout}. As this works for every $c \in \col(K)$ and $v\in K$, all $v\in K$ learn $\col(K)$.
\end{proof}

\paragraph{Sampling the permutation.} At a high level, the $O(\log \log n)$ algorithm for permuting the nodes presented in this section has the nodes undergo two shuffling steps. Nodes first undergo a ``rough shuffling'', which puts them into buckets, roughly positioning them in the permutation. Each group then does a ``fine shuffling'' to give each node its exact position. 

An important step in both our $O(\log \log n)$ and our $O(1)$ implementation is giving nodes $O(\log \log n)$-bit labels unique within their buckets. Using the smaller labels instead of the  original node {\ID}s allows each bucket to save a multiplicative $\Theta(\log n / \log \log n)$ factor when describing a permutation of its elements.

\begin{Algorithm}
\label{alg:relabel} Procedure \relabel, in 2-hop connected set of nodes $T \subseteq V$, for subset $S \subseteq T$.

\noindent
\textbf{Parameters:} Let $C=O(1)$ be a large enough constant, $x := \ceil{C \log n / \log \log n}$.

\begin{enumerate}[leftmargin=*]    
   \item\label[step]{step:samp-labels} Each $v\in S$ samples and broadcasts $x$ labels in $[\card{S}^2 \log n]$, picked u.a.r.\ and independently.

    \item\label[step]{step:check-labels} Each $v\in T$ broadcasts an $x$-sized bit-map indicating, for each $j\in [x]$, whether multiple nodes in $S\cap N(v)$ have the same $j$th label. 

    \item\label[step]{step:take-labels} If for a minimum $j\in [x]$, all nodes in $S$ have distinct $j$th labels, $S$ uses them as new labels.
\end{enumerate}
\end{Algorithm}

\begin{lemma}
    \label{lem:relabel}
    Suppose $S$ has size $\poly(\log n)$. \relabel succeeds at relabeling $S$ in $O(1)$ \bcongest rounds, w.h.p.
\end{lemma}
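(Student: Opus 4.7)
The plan has three parts: bounding the round complexity, bounding the probability that no index $j$ yields globally distinct labels, and arguing that all nodes in $S$ consistently detect which indices work.

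For round complexity, note that since $|S|=\poly(\log n)$, the label universe $[|S|^2\log n]$ has size $\poly(\log n)$, so each label is encodable in $O(\log\log n)$ bits. Each $v\in S$ broadcasts $x=\lceil C\log n/\log\log n\rceil$ such labels in \cref{step:samp-labels}, totalling $O(\log n)$ bits, which fits in $O(1)$ \bcongest rounds. The $x$-bit bit-map of \cref{step:check-labels} uses only $O(\log n/\log\log n)$ bits, also $O(1)$ rounds. \cref{step:take-labels} requires no communication.

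For the probability bound, fix an index $j\in[x]$. For any distinct $u,w\in S$, the $j$-th labels are sampled independently and uniformly from $[|S|^2\log n]$, so they coincide with probability exactly $1/(|S|^2\log n)$. A union bound over the $\binom{|S|}{2}\le |S|^2/2$ pairs shows that the $j$-th labels fail to be globally distinct with probability at most $1/(2\log n)$. Since the labels across the $x$ indices are independent, the probability that \emph{every} index collides is at most $(2\log n)^{-x}$. Plugging in $x=\lceil C\log n/\log\log n\rceil$ yields $(2\log n)^{-x}\le 2^{-x\log_2\log n}= n^{-\Omega(C)}$, so for $C$ large enough, w.h.p.\ some index $j^\star\in[x]$ is globally collision-free within $S$.

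For correctness, the 2-hop connectivity of $T$ ensures that any pair $u,w\in S$ shares a common neighbor $v\in T$ that observes both of their $j$-th labels and flags a collision in the $j$-th position of its bit-map. The step I expect to require the most care is arguing \emph{consistent} agreement: each $s\in S$ takes the minimum $j$ for which \emph{no} collision is reported among its own $T$-neighbors, and one must verify these local choices agree with the global $j^\star$. This reduces to showing that every $s\in S$ sees, via its own $T$-neighbors, every $S$-$S$ collision. In the instantiations of \relabel in this paper, $T$ always arises inside an almost-clique $K$ as produced by \cref{lem:dom-groups}, so any triple $s,u,w\in S$ shares $\Omega(\Delta/k)$ common neighbors in $T$ by the same Chernoff argument underlying \cref{lem:dom-groups}. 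Thus each $s\in S$ obtains a complete view of $S$-$S$ collisions, all nodes deterministically agree on $j^\star$, and \relabel succeeds w.h.p.\ in $O(1)$ rounds.
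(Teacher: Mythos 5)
Your proof is correct and follows the same approach as the paper's: $O(\log\log n)$-bit labels times $x = O(\log n/\log\log n)$ indices fit in $O(1)$ broadcasts; a union bound over pairs in $S$ bounds the per-index collision probability by $O(1/\log n)$; and independence across indices drives the failure probability down to $(\log n)^{-\Omega(x)} = n^{-\Omega(C)}$. You additionally flag and carefully handle the consistency step---that every $s \in S$ must see a flag for every colliding pair in $S$, not merely collisions involving $s$ itself---via triple common neighbors inside the $T_i$'s of \cref{lem:dom-groups}; the paper's proof treats this more loosely (``taking the AND of all $x$-sized bitmaps sent in this step, the nodes in $T$ all learn \ldots''), and your $3$-wise strengthening of the stated $2$-hop property, obtained by the same Chernoff-and-union-bound argument applied to triples rather than pairs, is a valid and slightly more careful way to justify that the nodes of $S$ agree on the same minimum index $j$.
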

\begin{proof}
    First, note that $O(1)$ \bcongest rounds suffice to compute $\card{S}$ for \cref{step:samp-labels}, as $T$ is $2$-hop connected.
    Since $\card{S}^2\log n \in \poly(\log n)$, each label sent by a node $v\in S$ during \cref{step:samp-labels} is representable with $O(\log \log n)$ bits. Thus, $x \in O(\log n / \log \log n)$ labels can be transmitted in $O(1)$ rounds.

    As $T$ 2-hop connects itself (a fortiori $S$), two nodes of $S$ with a common $j$th label are necessarily detected by a common neighbor during \cref{step:check-labels}. Taking the AND of all $x$-sized bitmaps sent in this step, the nodes in $T$ all learn for which $j \in[x]$ it holds that all nodes of $S$ picked distinct $j$th labels.

    We now analyze the probability that the relabeling succeeds, i.e., that a $j\in [x]$ as used in \cref{step:take-labels} exists. For each $j\in [x]$, each $j$th sampled label in $S$ has probability less than $1/(\card{S}\log n)$ of conflicting with one of the other $\card{S}-1$ $j$th labels. Hence, by union bound, the $j$th labels have a collision with probability at most $1/(\log n)$. Having $x$ independent samples implies success with probability at least $1-(\log n)^{-x} = 1-2^{-x \log \log n} = 1-2^{-C \log n} = 1-n^{-C}$, i.e., w.h.p.
\end{proof}

\begin{Algorithm}
\label{alg:permute-loglog}
Procedure \permute, in almost-clique $K$, on subset $S \subseteq K$ of the nodes.

\noindent
\textbf{Parameters:} Let $C=O(1)$ be a large enough constant, 
\[ k := \floor{\Delta/(C\log n)},\quad\text{and}\quad x := \ceil{C \log n / \log \log n} \ .\]

\begin{enumerate}[leftmargin=*]
    \item\label[step]{step:rough-permute} \textbf{Rough bucketing.} Each $v \in K$ independently picks a random $t(v) \in [k]$ u.a.r.

    For each $i\in [k]$, let $T_i:=\set{v \in K : t(v) = i}$ and $S_i := T_i \cap S$.
    
    \item\label[step]{step:prefix} \textbf{Counting buckets.} For each $i\in [k]$, the nodes in $T_i$ compute and broadcast $\card{S_i}$.

    \item\label[step]{step:relabel} \textbf{Relabeling.}
    Within each $T_i$, $i\in [k]$, use \relabel on $S_i$.

    \item\label[step]{step:fine-permute} \textbf{Permuting within buckets.} Within each $T_i$, the maximum \ID node gathers the new labels of $S_i$, picks a random permutation $\rho_i$ of $S_i$, and sends it to $T_i$, all along a BFS tree.

    \item\label[step]{step:permute-output} \textbf{Output.} Each $v \in S_i$ takes $\pi(v) := \rho_i(v) + \sum_{j < i} \card{S_j}$ as its index in the output $\pi$.
\end{enumerate}
\end{Algorithm}

\begin{lemma}
\label{lem:sample-perm-loglog}
With high probability, \permute outputs a permutation of $S$ in $O(\log\log n)$ rounds. For each permutation $\pi$ of $S$, the probability of sampling $\pi$ is bounded by $\frac{1}{ (1-1/\poly(n)) \cdot \card{S}!}$
\end{lemma}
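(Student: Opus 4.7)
The proof splits into two parts: bounding the round complexity by $O(\log\log n)$, and establishing approximate uniformity of the output distribution.

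For the round complexity, I would analyze each step. Step 1 is a single broadcast round since $t(v) \in [k]$ fits in $O(\log n)$ bits. Step 2 takes $O(1)$ rounds because each $T_i$ 2-hop connects itself (Lemma~\ref{lem:dom-groups} applied with the stated $k$), so the counts $|S_i|$ can be aggregated and broadcast within a depth-$2$ BFS tree of $T_i$. Step 3 is $O(1)$ rounds by Lemma~\ref{lem:relabel}. Step 4 is the bottleneck: a Chernoff bound on the rough bucketing shows that, w.h.p., every $|T_i|$ is at most $O(\log n)$, since the expected bucket size is $|K|/k \le (1+\epsilon) C \log n$ and assignments are independent. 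Thus the maximum-\ID node in $T_i$ needs to gather at most $O(\log n)$ labels of $O(\log\log n)$ bits each and then broadcast back a permutation of similar total size $O(\log n \cdot \log\log n)$ bits. With $T_i$ having depth~$2$ and each node broadcasting $O(\log n)$ bits per round, careful pipelining along the BFS tree completes both the gather and the broadcast in $O(\log\log n)$ rounds. Step 5 is purely local.

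For the distribution, I would directly compute $\Pr[\text{output} = \pi]$ for an arbitrary target permutation $\pi$ of $S$. Let $E$ be the event that the bucket sizes are bounded and that \relabel succeeds in every bucket. By Lemmas~\ref{lem:dom-groups} and~\ref{lem:relabel} combined with a union bound over the $k$ buckets, $\Pr[E] \ge 1 - 1/\poly(n)$. Conditioned on $E$, the output is determined by the bucket assignment $\tau : S \to [k]$, which is uniform on $[k]^{|S|}$, and by the independent uniform permutations $(\rho_i)$. For each sequence of bucket sizes $(s_1, \ldots, s_k)$ with $\sum_i s_i = |S|$, there is a unique $\tau$ consistent with $\pi$ (place the first $s_1$ elements of $\pi$ in bucket~$1$, etc.), which occurs with probability $(1/k)^{|S|}$, and given this $\tau$ the probability that $\rho$ realizes $\pi$ is $\prod_i 1/s_i!$. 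Summing via the multinomial identity $\sum_{(s_i)} \binom{|S|}{s_1,\ldots,s_k} = k^{|S|}$ yields
\[ \Pr[\text{output} = \pi \mid E] \;=\; (1/k)^{|S|} \cdot \frac{k^{|S|}}{|S|!} \;=\; \frac{1}{|S|!}\ . \]
Combining with $\Pr[E] \ge 1 - 1/\poly(n)$ and noting that the algorithm can be taken to abort on $\overline{E}$ gives the claimed bound $\Pr[\text{output} = \pi] \le \frac{1}{(1-1/\poly(n)) \cdot |S|!}$.

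The main obstacle I expect is the bandwidth accounting in step~4: the root may have several children in $T_i$, each with grandchildren whose $O(\log\log n)$-bit labels must be relayed upward and the permutation relayed downward, so I need to argue that no edge in the tree carries more than $O(\log n \cdot \log\log n)$ bits of traffic so that $O(\log\log n)$ rounds of pipelining suffice regardless of the tree shape. A secondary subtlety is that $K \setminus S$ also participates in the bucketing, but since each $S$-node still picks $t(v)$ independently and uniformly, this does not affect the multinomial computation above.
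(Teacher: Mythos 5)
Your proof follows the same route as the paper: a step-by-step round count, and a comparison of the output distribution against the idealized (non-aborting) process in which each node picks a bucket uniformly and each bucket is permuted uniformly, which is uniform on permutations of $S$. The only slip is the display line labeled $\Pr[\text{output}=\pi\mid E]$: what you actually compute there via the multinomial identity is the \emph{unconditional} probability that the idealized process outputs $\pi$; conditioned on $E$ the bucket assignment $\tau$ is not uniform on $[k]^{|S|}$, since $E$ constrains bucket sizes. Relabeling that line as the unconditional probability and then writing $\Pr[\text{output}=\pi\mid E]\le \Pr[\text{output}=\pi]/\Pr[E]$ recovers precisely the paper's mixture-decomposition argument and yields the stated bound.
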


\begin{proof}
By \cref{lem:dom-groups}, the sets $T_i$ computed in \cref{step:rough-permute} 2-hop connect $K$, w.h.p., and in particular have diameter $2$. Assuming this holds, \cref{step:prefix} only takes $O(1)$ rounds using a  aggregation and dissemination on the depth-2 BFS tree within each $T_i$. This allows each $v\in S_i$ to compute $\sum_{j<i} \card{S_j}$ for the last step of the algorithm.

In addition, it also holds w.h.p.\ that each $S_i \subseteq T_i$ has size $O(\log n)$. Assuming this holds, running \relabel in \cref{step:relabel} only requires $O(1)$ rounds per \cref{lem:relabel}, and it succeeds w.h.p.
Finally, the process takes $O(\log \log n)$ rounds due to \cref{step:fine-permute}, during which a leader node within each $T_i$ broadcasts $O(\log n)$ labels of $O(\log \log n)$ bits each.

We now argue the approximate uniformity of the sampling. Consider the random process in which each node in $S$ picks a random ordered bucket independently and u.a.r, and then each bucket is permuted uniformly at random. Let $\mu$ be the distribution of the permutation generated by this process. Clearly, $\mu$ is the uniform distribution. \permute is the same as this process, except it does not output anything if some high probability event $\cE$ does not hold. More precisely, the high probability event $\cE$ corresponds to all buckets being $2$-connected, all buckets being of $O(\log n)$ size, and \relabel succeeding.
Let $\mu_1$ be the distribution $\mu$ conditioned on $\cE$ holding, and $\mu_2$ be $\mu$ conditioned on $\cE$ not holding. Distribution $\mu_1$ is the output distribution of \permute, and we have $\mu = (1-1/\poly(n)) \mu_1 + (1/\poly(n))\mu_2$. Thus, for each permutation $\pi$, $\mu_1(\pi) \leq \mu(\pi)/(1-1/\poly(n)) = 1/((1-1/\poly(n))\card{S}!)$.%
\end{proof}

\paragraph{Reducing the complexity to a constant.} 
Our $O(1)$ implementation improves on the running time by splitting buckets from the first ``rough shuffling'' into sub-buckets, and arguing that most such buckets satisfy properties allowing them to use a leader to permute themselves as in \cref{alg:permute-loglog}, while buckets that fail this second sub-bucketing are few enough that they can be efficiently permuted with the help of the whole almost-clique. 

\begin{restatable}{lemma}{ConstantPermuteLemma}
\label{lem:sample-perm-constant}
There is an algorithm simulating the permutation sampling step of the synchronized color trial in $O(1)$ rounds of \bcongest.
\end{restatable}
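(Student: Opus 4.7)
The approach is to augment \cref{alg:permute-loglog} with a second level of random sub-bucketing inside each rough bucket $T_i$, so that most of the work (the leader broadcasts of \cref{step:fine-permute}) can be done in $O(1)$ rounds instead of $O(\log\log n)$. The few nodes for which this fails are collected and permuted by a single global pass that uses the full bandwidth of the almost-clique.

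After the rough bucketing of \cref{step:rough-permute}, I would let each $v \in T_i$ independently pick $t'(v) \in [M]$ uniformly at random with $M=\lceil\log\log n\rceil$, producing sub-buckets $T_{i,j}=\set{v\in T_i : t'(v)=j}$ of expected size $s = \Theta(\log n/\log\log n)$. A Chernoff bound gives that $|T_{i,j}|>2s$ with probability at most $1/\poly(\log n)$, so all but a $1/\poly(\log n)$ fraction of the rough buckets are \emph{good}, meaning all their sub-buckets have size at most $2s$. After running \relabel (\cref{alg:relabel}) within each $T_i$ to obtain $O(\log\log n)$-bit unique labels, the description of a uniform permutation of a good sub-bucket fits in $|T_{i,j}|\cdot O(\log\log n) = O(\log n)$ bits, i.e., one \bcongest broadcast. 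By having all sub-bucket leaders broadcast in parallel and using the abundant communication paths inside the $2$-hop connected $T_i$ to relay each sub-bucket's permutation to its members, every $v$ in a good bucket learns its own sub-bucket's permutation in $O(1)$ rounds. It then computes its global index as the intra-sub-bucket rank plus the prefix sums $\sum_{j'<j}|T_{i,j'}|$ within $T_i$ and $\sum_{i'<i}|T_{i'}|$ across buckets, both computable in $O(1)$ rounds via the aggregation scheme of \cref{step:prefix}.

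Bad buckets contain at most $\Delta/\poly(\log n)$ nodes globally. These are handled by a separate pass that exploits the bandwidth of the whole almost-clique: globally relabel them with \relabel, gather their labels at a single leader of $K$, sample a uniform permutation there, and disseminate. Since the total description has $o(\Delta)$ bits, it fits in $O(1)$ rounds of pipelined broadcasts across a depth-$2$ BFS tree of $K$.

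\textbf{Main obstacle.} The hardest part is verifying that the sampled permutation is near-uniform in the sense of \cref{lem:sample-perm-loglog}, i.e., $\Pr\parens{\pi} \le 1/((1-1/\poly(n))\cdot |S|!)$. The joint distribution decomposes into uniform sub-bucket assignments, uniform intra-sub-bucket permutations, and deterministic offsets, but conditioning on the high-probability event ``all buckets are good'' introduces a small bias that must be bounded by $1/\poly(n)$. A related subtlety is the parallel dissemination of sub-bucket permutations inside each good bucket: each intermediary node in $T_i$ could in principle be asked to forward permutations of several sub-buckets, but since the number of sub-buckets is $M=\Theta(\log\log n)$ while $|T_i|=\Theta(\log n)$, a careful scheduling or randomized relay argument should let each receiver collect the single permutation it needs in $O(1)$ rounds.
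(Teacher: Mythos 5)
Your overall strategy — a two-level bucketing, with ``good'' outer buckets permuting their sub-buckets locally and a fallback pass for the rest — matches the paper's \cref{alg:permute-constant}. But your handling of the bad nodes has a genuine \bcongest gap. You propose to ``gather their labels at a single leader of $K$, sample a uniform permutation there, and disseminate,'' arguing that $o(\Delta)$ bits fit in $O(1)$ rounds of pipelined broadcasts. In \bcongest the leader can only \emph{broadcast} one $O(\log n)$-bit message per round, so once it has sampled a permutation of the $\Theta(\Delta/\poly(\log n))$ bad nodes (already $\Omega(\Delta \log\log n / \poly(\log n))$ bits), disseminating it back out takes $\Omega(\Delta/\poly(\log n))$ rounds — not $O(1)$ when $\Delta \gg \poly(\log n)$. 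The paper avoids any single information bottleneck: each leftover node broadcasts its own $C\log n$-bit random key, Many-to-All Broadcast (\cref{lem:echo}) floods these $O(\Delta/\log n)$ tuples throughout $K$ in $O(1)$ rounds, and each leftover sub-bucket orders its members locally by the random keys. No central leader is ever asked to emit more than $O(\log n)$ bits.

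Two further points you leave unresolved. First, your ``good bucket'' criterion only bounds sub-bucket \emph{sizes}; unlike the paper's $\eps''$-AC-preservation, it does not guarantee sub-buckets are themselves low-diameter, so a sub-bucket leader cannot simply broadcast along a BFS tree of its sub-bucket. Your fallback — relaying through the $2$-hop connected $T_i$ — can be made to work with a randomized relay assignment, but it is a separate lemma you would have to prove; the paper sidesteps it because \cref{lem:preserving-ac-like} makes each well-preserved $T_{i,i'}$ itself $1/3$-AC-like and hence diameter~$2$. Second, ``all but a $1/\poly(\log n)$ fraction of the rough buckets are good'' is only an expectation bound; to conclude that the set of leftover nodes has size $O(\Delta/\log n)$ w.h.p.\ you need concentration. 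You could argue independence of good/bad status across the disjoint $T_i$'s and apply Chernoff, or, as the paper does, treat the count as a Lipschitz, certifiable function of the bucket choices and apply Talagrand's inequality (\cref{lem:talagrand}). Either route works, but it must be stated — the w.h.p.\ bound on $|R|$ is exactly what licenses the Many-to-All Broadcast call.
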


 A key ingredient in the improved version of our algorithm is strengthening the properties satisfied by buckets. We will aim for the random subsets of almost-cliques formed to themselves have the properties almost-cliques. Let a \emph{$k$-bucketing} $t$ of a set of nodes $K$ be an assignment of a value $t(v) \in [k]$ to each node $v \in S$, defining sets $T_i := \set{v \in K : t(v) = i}$ for each $i \in [k]$. In our improved $O(1)$ algorithm, we also perform a second $k'$-bucketing $t'$ of each set $T_i$, defining sets $T_{i,i'} := \set{v \in T_i: t'(v) = i'}$ for each $(i,i') \in [k] \times [k']$.

\begin{definition}[Almost-clique-like, almost-clique-preserved] For a set of nodes $K$,

\begin{itemize}
    \item For $\eps \in (0,1/2)$, $K$ is said to be \emph{$\eps$-almost-clique-like} ($\eps$-AC-like) if $\forall v \in K, \card{N(v) \cap K} \geq (1-\eps)\card{K}$.

    \item For integers $i$, $k$ with $i\leq k$, a $k$-bucketing of $K$ is said to \emph{$\eps'$-almost-clique-preserve} ($\eps'$-AC-preserve) its $i$th bucket $T_i$ iff $\forall i\in [k], \forall v \in K$, $\card{N(v) \cap T_i} \in (1 \pm \eps') \card{N(v)\cap K}/k$. The bucketing is said to be \emph{$\eps'$-almost-clique-preserving} ($\eps'$-AC-preserving) if it $\eps'$-AC-preserves its $k$ buckets.
\end{itemize}
\end{definition}

\begin{lemma}
    \label{lem:preserving-ac-like}
    Let $\eps,\eps'$ be two positive constants s.t.\ $\eps+\eps'<1/2$. Let $K$ be $\eps$-AC-like, and let $T$ be an $\eps'$-AC-preserved bucket of a $k$-bucketing of $K$. Then, $T$ has size $\card{T} \in (1\pm2(\eps+\eps'))\card{K}/k$ and is $2(\eps+\eps')$-AC-like.
\end{lemma}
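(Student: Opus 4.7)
The plan is to first bound $\card{T}$ via a double-counting argument, and then deduce the AC-likeness of $T$ from that size bound together with the hypothesis.

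For the size, I would count the ordered pairs $(u,v)\in T\times K$ with $uv\in E$ in two ways, using that $T\subseteq K$. Grouping by $u \in T$, the count lies in $[(1-\eps)\card{K},\card{K}]\cdot\card{T}$ by the $\eps$-AC-likeness of $K$. Grouping by $v \in K$, the count equals $\sum_{v\in K}\card{N(v)\cap T}$; applying the $\eps'$-AC-preserving hypothesis to each term and the $\eps$-AC-likeness of $K$ to $\sum_v\card{N(v)\cap K}$, this total lies in $[(1-\eps')(1-\eps)\card{K}^2/k,\,(1+\eps')\card{K}^2/k]$. Matching the two expressions yields
\[ (1-\eps')(1-\eps)\,\frac{\card{K}}{k}\ \le\ \card{T}\ \le\ \frac{1+\eps'}{1-\eps}\cdot\frac{\card{K}}{k}\ . \]
Since $\eps+\eps'<1/2$ gives $1/(1-\eps)<2$, a short computation shows both ends lie within $(1\pm 2(\eps+\eps'))\card{K}/k$.

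For the AC-likeness, fix any $v\in T\subseteq K$. The $\eps'$-AC-preserving property combined with $K$ being $\eps$-AC-like gives $\card{N(v)\cap T}\ge (1-\eps')(1-\eps)\card{K}/k$, while the upper bound from the first step gives $\card{T}\le (1+\eps')\card{K}/((1-\eps)k)$. Dividing,
\[ \frac{\card{N(v)\cap T}}{\card{T}}\ \ge\ \frac{(1-\eps')(1-\eps)^2}{1+\eps'}\ . \]
It then suffices to show that the right-hand side is at least $1-2(\eps+\eps')$.

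The only real step is this last algebraic check, which I expect to be the main (but still mild) obstacle because the naive estimate $\card{T}\le (1+2(\eps+\eps'))\card{K}/k$ is too weak here and one must instead plug in the tighter form $(1+\eps')/(1-\eps)$. Using $(1-\eps')/(1+\eps')\ge 1-2\eps'$ (since $2\eps'/(1+\eps')\le 2\eps'$) together with $(1-\eps)^2\ge 1-2\eps$, the product is at least $(1-2\eps)(1-2\eps')\ge 1-2(\eps+\eps')$, which closes the argument.
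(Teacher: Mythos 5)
Your proof takes essentially the same route as the paper: double-count edges between $T$ and $K$, combine the $\eps'$-AC-preserving bound on $\card{N(v)\cap T}$ with the $\eps$-AC-like bound on $\card{N(v)\cap K}$ to sandwich $\card{T}$ between $(1-\eps')(1-\eps)\card{K}/k$ and $\frac{1+\eps'}{1-\eps}\cdot\frac{\card{K}}{k}$, then divide to obtain the ratio $\card{N(v)\cap T}/\card{T}\ge\frac{(1-\eps')(1-\eps)^2}{1+\eps'}$. The closing algebra ($(1-\eps')/(1+\eps')\ge1-2\eps'$, $(1-\eps)^2\ge1-2\eps$) is a minor rearrangement of the paper's identical chain $\frac{(1-\eps')^2(1-\eps)^2}{1-\eps'^2}\ge1-2\eps-2\eps'$; both are correct.
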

\begin{proof}
    For each $v \in K$, the bounds on $\card{K}$, $\card{N(v)\cap T}$, and $\card{N(v)\cap K}$ from $K$ being $\eps$-AC-like and $T$ being $\eps'$-AC-preserved yield:
    
    \[
    (1-\eps')(1-\eps)\frac{\card{K}}{k} 
    \leq (1-\eps')\frac{\card{N(v)\cap K}}{k}
    \leq \card*{N(v) \cap T}
    \leq (1+\eps')\frac{\card{N(v)\cap K}}{k}
    \leq (1+\eps')\frac{\card{K}}{k}
    \ .\]
    
    Counting edges between $T$ and $K$ two ways gives:
    
    \[
    \sum_{v \in K} \card{N(v) \cap T} 
    = \sum_{v \in T} \card{N(v) \cap K}
    \ .\]
    
    The combination of $(1-\eps)\card{T}\cdot \card{K} \leq \sum_{v \in T} \card{N(v) \cap K} \leq \card{T}\cdot \card{K}$ (from $K$ being $\eps$-AC-like) with the previous bounds on $\card{N(v)\cap T}$ gives as bounds on $\card{T}$:
    
    \[
    (1-\eps-\eps') \frac{\card{K}}{k}
    \leq (1-\eps')(1-\eps) \frac{\card{K}}{k}
    \leq \card{T}
    \leq \frac{1+\eps'}{1-\eps} \cdot \frac{\card{K}}{k}
    \leq (1+2\eps+\eps')\frac{\card{K}}{k}
    \ ,\]
    
    where $\eps+\eps'<1/2$ was used in the last inequality. Thus, for each $v \in K$,
    
    \[
    \card{N(v) \cap T}
    \geq (1-\eps')(1-\eps) \frac{\card{K}}{k}
    \geq \frac{(1-\eps')(1-\eps)^2}{1+\eps'}\card{T}
    = \frac{(1-\eps')^2(1-\eps)^2}{1-\eps'^2}\card{T} 
    \geq (1-2\eps-2\eps')\card{T}
    \ .\qedhere\]
\end{proof}

Note that if $\eps+\eps'<1/4$ the sets $T_i$ defined by an $\eps'$-AC-preserving bucketing within an $\eps$-AC-like set $K$ $2$-hop connect $K$, i.e., $\forall i \in [k], \forall \set{u,v} \subseteq K, \card{N(u)\cap N(v) \cap T_i} \geq (1-4\eps-4\eps')\card{T_i} > 0$.

\begin{lemma}
    \label{lem:rand-ac-preserve}
    Let $\eps,\eps'$ be two positive constants s.t.\ $\eps+\eps'<1/2$. Let $K$ be $\eps$-AC-like and $k$ an integer. Consider a $k$-bucketing of $K$ picked uniformly at random. For each $i \in [k]$, the probability that the $i$th bucket fails to be $\eps'$-AC-preserved is at most $2\card{K}\exp(-\eps'^2 \card{K}/(6k))$.
\end{lemma}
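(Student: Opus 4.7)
The plan is to fix a bucket index $i \in [k]$ and a vertex $v \in K$, and apply a Chernoff bound on the random variable $X_v := |N(v) \cap T_i|$. Since the bucketing is uniform, each neighbor $u \in N(v) \cap K$ lands in bucket $i$ independently with probability $1/k$, so $X_v$ is a sum of $|N(v)\cap K|$ independent Bernoulli$(1/k)$ variables, with mean $\mu_v := |N(v)\cap K|/k$.

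Using the $\eps$-AC-like property of $K$, we have $\mu_v \geq (1-\eps)|K|/k \geq |K|/(2k)$, since $\eps < 1/2$. The standard multiplicative Chernoff bound (for example \cref{lem:chernoff} applied with independent $X_i$) gives
\[
\Pr\parens[\big]{|X_v - \mu_v| > \eps' \mu_v} \leq 2\exp\parens[\big]{-\eps'^2 \mu_v / 3} \leq 2\exp\parens[\big]{-\eps'^2 (1-\eps) |K|/(3k)} \leq 2\exp\parens[\big]{-\eps'^2 |K|/(6k)}\ ,
\]
using once again $1-\eps > 1/2$ in the last step. This says precisely that the $\eps'$-AC-preservation condition fails at $v$ with the claimed probability per vertex.

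The second step is a union bound over the $|K|$ vertices $v \in K$ defining the condition for $T_i$ to be $\eps'$-AC-preserved. This multiplies the probability by $|K|$, yielding the stated bound $2|K|\exp(-\eps'^2|K|/(6k))$.

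The argument is essentially routine Chernoff plus a union bound; there is no real obstacle. The only subtlety is making sure the lower bound on the mean $\mu_v$ is strong enough after the $(1-\eps)$ loss from the AC-like property, which is handled by the $\eps < 1/2$ assumption and absorbed into the constant $6$ in the exponent.
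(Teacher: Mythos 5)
Your proof is correct and matches the paper's approach, which simply states that the lemma follows from a Chernoff bound at each $v\in K$; you have correctly filled in the details: independence of bucket assignments, $\mu_v = |N(v)\cap K|/k \ge (1-\eps)|K|/k > |K|/(2k)$, a two-sided Chernoff bound with deviation $\eps'$ giving $2\exp(-\eps'^2\mu_v/3)$, and a union bound over the $|K|$ vertices.
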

\begin{proof}
    The lemma follows from applying a Chernoff bound (\cref{lem:chernoff}) at each $v\in K$.
\end{proof}

In one of the last steps of our $O(1)$-round \permute algorithm, we perform a second bucketing within previously formed buckets and argue that only a few buckets from this second bucketing are not $\eps''$-AC-preserved.

\begin{Algorithm}
\label{alg:permute-constant}Procedure \permute, in almost-clique $K$, on subset $S \subseteq K$ of the nodes.
\begin{flushleft}
\textbf{Parameters:} Let $C=O(1)$ be a large enough constant, 
\[ \eps':=1/24-\eps,\quad \eps'':=1/12,\quad k := \floor{\Delta/(C\log n)},\quad\text{and} \quad k' := \ceil{C\log \log n} \ . \]
\end{flushleft}

\begin{enumerate}[leftmargin=*]
    \item\label[step]{step:rough-permute-cst} \textbf{Rough bucketing.} Each $v \in K$ independently picks a random $t(v) \in [k]$ u.a.r.

    For each $i\in [k]$, let $T_i:=\set{v \in K : t(v) = i}$ and $S_i := T_i \cap S$.

    \item\label[step]{step:count-rough} \textbf{Counting rough buckets.} For each $i\in [k]$, the nodes in $T_i$ compute and broadcast $\card{T_i}$ and $\card{S_i}$.
    
    \item\label[step]{step:relabel-cst} \textbf{Relabeling.}
    Within each $T_i$, $i\in [k]$, use \relabel on $S_i$.

    \item\label[step]{step:medium-permute-cst} Within each $T_i$, $i\in [k]$,
    \begin{enumerate}
        \item\label[step]{step:second-bucket-cst} \textbf{Fine bucketing.} Each $v\in[T_i]$ picks a random bucket $t'(v)\in[k']$.

        For each $(i,i')\in [k]\times[k']$, let $T_{i,i'}:=\set{v \in T_i : t'(v) = i'}$ and $S_{i,i'} := T_{i,i'} \cap S$.

        \item\label[step]{step:second-bucket-bfs} \textbf{Counting fine buckets.} Compute and broadcast all $\card{T_{i,i'}}$ and $\card{S_{i,i'}}$ for $i' \in [k']$.

        \item\label[step]{step:fine-permute-cst} For each $i' \in [k']$, \textbf{if} $S_{i,i'}$ is $\eps''$-AC-preserved in $T_i$, 

        \begin{description}
            \item[\textbf{then}] \textbf{Permute within fine bucket.} The maximum \ID node of $T_{i,i'}$ aggregates the $O(\log \log n)$-bit labels of $S_{i,i'}$, picks u.a.r.\ a permutation $\rho_{i,i'}$ of $S_{i,i'}$, sends it to $T_{i,i'}$.
            \item[\textbf{else}] each $v \in S_{i,i'}$ joins the set $R$, to be permuted in the next step.
        \end{description}
    \end{enumerate}

    \item\label[step]{step:finish-cst} \textbf{Permuting leftover fine buckets.} 
     \begin{enumerate}        
        \item\label[step]{step:echo-init} Each $v \in R$ picks a random $C \log n$-bit $r(v)$, broadcasts the tuple $(\ID_v,t(v),t'(v),r(v))$.
        
        \item\label[step]{step:echo-broadcast}
        Nodes in $K$ use Many-to-All Broadcast to disseminate the tuples from $R$ to all of $K$.
        
        \item\label[step]{local-perm} For each $(i,i') \in [k]\times [k']$ s.t.\ $S_{i,i'}\subseteq R$, nodes in $S_{i,i'}$ order themselves according to their $r(v)$'s. Let $\rho_{i,i'}$ be the resulting permutation of $S_{i,i'}$.
    \end{enumerate}
    
    \item\label[step]{step:permute-output-cst} \textbf{Output.} $\forall i,i'$, $v \in S_{i,i'}$ takes index $\pi(v) := \rho_{i,i'}(v) + \sum_{j < i} \card{S_j} + \sum_{j' < i'} \card{S_{i,j'}}$ in output.
\end{enumerate}
\end{Algorithm}

\begin{proof}[Proof of \cref{lem:sample-perm-constant}]
    First, our $O(1)$ \permute procedure has an output distribution close to uniform follows from the same argument that showed this property for our $O(\log \log n)$ \permute procedure.

    By \cref{lem:rand-ac-preserve}, \cref{step:rough-permute-cst} (rough bucketing) produces an $\eps'$-AC-preserving bucketing with probability at least $2\card{K}\exp(-\eps'^2\card{K}/(6k)) \leq n^{-\Omega(\eps'^2 C)}$, i.e., w.h.p. We condition on this high-probability event. 
    
    The rough bucketing being $\eps'$-AC-preserving, by \cref{lem:preserving-ac-like}, each $T_i$ is $2(\eps+\eps')$-AC-like, with $2(\eps+\eps') = (1/12)$. Each $T_i$ thus has diameter $2$ and can efficiently count itself and its subset $S_i$ in $O(1)$ rounds during \cref{step:count-rough}. Since every node in $K$ is adjacent to a node in $T_i$, all of $K$ learns all $\card{S_i}$, $i \in [k]$.

    Relabeling works as in the previous $O(\log \log n)$-round algorithm.
    Consider now the second bucketing of \cref{step:second-bucket-cst}. 
    As each $T_{i,i'}$ and $S_{i,i'}$ are of size at most $O(\log n)$, and there are $k' \in O(\log \log n)$ values to count in \cref{step:second-bucket-bfs}, describing all those values only requires $O(\log^2 \log n)$ bits. Counting all of them within $T_i$ by aggregation along a BFS tree can be done in $O(1)$ rounds, and disseminating all values back to $T_i$ is similarly fast.
    
    Within each $T_i$ in which the second bucketing succeed, \cref{step:fine-permute-cst} finishes to permute its elements in $O(1)$ rounds, since the maximum \ID node within each $T_{i,i'}$ only has to send $O(\log n / \log \log n)$ labels of size $O(\log \log n)$ in an $1/3$-AC-like, and thus low diameter, set $T_{i,i'}$.
    
    We finish by arguing that permuting the elements in $R$ within their $S_{i,i'}$ groups can be done in $O(1)$ rounds, w.h.p. By \cref{lem:echo}, if $R$ contains at most $O(\Delta/\log n)$ nodes, Many-to-All-Broadcast succeeds in sharing all of $R$'s tuples in $O(1)$ rounds, w.h.p. The rest of the proof is devoted to showing that $R$ contains $O(\Delta/\log n)$ nodes, w.h.p.
    
    For each $i,i' \in [k] \times [k']$, let $X_{i,i'}$ be the indicator random variable for the $i'$th bucket in $T_i$ not being $\eps'$-AC-preserved. For each $i \in [\card{K}]$, let $Y_i$ be the random variable for the bucket choice of the $i$th node in $K$. Finally, let $f(Y_1,\ldots,Y_{\card{K}}) = \sum_{i =1}^k \sum_{i'=1}^{k'} X_{i,i'}$ be the total number of buckets which are not $\eps'$-AC-preserved.

    From \cref{lem:rand-ac-preserve}, we obtain a bound on each $\Exp[X_{i,i'}]$. Each $T_i$ has size $\card{T_i} \in (1 \pm 1/2) C\log n$, yielding for the aggregate $f$:
    
    \begin{align*}
    \Exp[f]
    = \sum_{i =1}^k \sum_{i'=1}^{k'} \Exp[X_{i,i'}] 
    & \leq k \cdot k' \cdot 4 (C \log n) \cdot e^{-\eps''^2 C \log n/(12k')} \\
    & = 4 \Delta \cdot \log \log n \cdot e^{-C \log n / (12^3 \log \log n)}
    \ .\end{align*}

    For $n$ large enough, or $C$ set to a sufficiently large constant, this yields $\Exp[f] \le \Delta / (2\cdot30^2c^2 \log^4 n)$ where $c:=2C\log n$. Changing the value of each random variable $Y_i$ affects at most two buckets. Therefore $f$ is $2$-Lipschitz. Furthermore, $f$ is $c$-certifiable, as it suffices to reveal the set $T_i$ to show that one of its buckets is not $\eps'$-AC-preserved. Applying Talagrand's inequality (\cref{lem:talagrand}) with a deviation of $t=C\Delta/\log^2 n$, we get that:

    \begin{align*}
    \Pr(f > 4C\Delta/\log^2 n)
    &\leq \Pr\parens*{f > \Exp[f] + t + 30c\sqrt{r \cdot \Exp[f]}} \tag{because $t \ge \Exp[f], 30c\sqrt{2\Exp[f]}$}\\
    &\leq 4 \cdot \exp\parens*{- \frac{t^2}{16 c^2 \Exp[f]}}\\
    &\leq 4 \cdot \exp\parens*{-\frac{C^2\Delta^2/\log^4 n}{16\cdot c^2\cdot \Delta/c^2\log^4 n}} \\
    &\leq 4 \cdot \exp\parens*{- \Delta/16} \ll 1/\poly(n) \tag{since $\Delta\in\Omega(\log^3 n)$}
    \ .\end{align*}

Therefore, with high probability, at most $O(\Delta/\log^2 n)$ buckets join $R$. Each has size $O(\log n)$, so $R$ contains at most $O(\Delta/\log n)$ nodes, w.h.p.
\end{proof}

\section{Coloring in Streaming-Congest}
\label{sec:streaming}

\begin{definition}
\label{def:streaming}
We define \bcstream to be the \bcongest model in which, per round, each node receives the messages from its neighbors in a streaming fashion, using $O(\log^c n)$ memory for some fixed $c > 0$.
\end{definition}

Note that results in \bcstream constrain the size of the messages more than equivalent results in \congest or \bcongest. In the latter models,
the size of the messages can be freely changed between $c \log n$ and $c'\log n$ for two positive constants $c$ and $c'$ without changing $\omega(1)$ asymptotic complexities. This is because, without a memory constraint, for $c > c' > 0$, nodes can simulate an algorithm using $c \log n$-bit messages by buffering the $c' \log n$-bit messages received from each neighbor over $\ceil{c/c'}$ rounds. Such buffering uses $\Theta(\Delta \log n)$ memory and is impossible in \bcstream. In \bcstream, having a $T$-round algorithm for a given problem means that there exist constants $c > 0$ s.t.\ given that nodes can send messages of size $c \log n$, they can solve the problem in $T$ rounds.

Running a randomized color trial remains feasible under \bcstream constraints. As this consists of the core of our algorithm, most steps carry over to this model. The technical difficulties to overcome are: (1) (high-degree) nodes cannot store all colors used in their neighborhood, in order to know their palette; and (2) dense nodes cannot learn the full clique palette nor the full permutation $\pi$ during the synchronized color trial.

Dealing with the first issue is fairly straightforward since in order to overcome the broadcast constraint, nodes sample colors in publicly known sets of colors (e.g., $[\Delta+1]$ or $[x(v)]$). After sampling colors in such a set, a node can learn which sampled colors belong to its palette in one communication round (where each colored node broadcasts its color).

The synchronized color trial (\cref{step:sct} of \cref{alg:high-level}) requires more care. Note that a node $v$ merely needs to know its index in the permutation $\pi(v)$ and the $\pi(v)$-th color in the clique-palette. \Cref{lem:learn-clique-palette,lem:sample-perm-loglog} are both based on the idea of ``random bucketing''. Let us focus on the permutation and consider \cref{alg:permute-loglog}. As each bucket contains $O(\log n)$ nodes, \relabel requires only $\poly\log n$ memory (\cref{alg:relabel}). What remains, then, is to compute the prefix sum $\sum_{j < i} \card{S_j}$ counting the number of elements in buckets of lower indices (\cref{step:permute-output} of \cref{alg:permute-loglog}). Compared to \bcongest, the challenge is to avoid double counting. Indeed, in \cref{step:prefix} of \permute, nodes receive $\Theta(\log n)$ times each term $|S_j|$ of the sum.

Computing prefix sums $\sum_{j < i} |S_j|$ can be done in $O(\log\log n)$ rounds of \bcstream. To achieve this, we progressively merge together the $S_i$'s into larger groups, keeping track of the groups' sizes as they merge. Say groups have size $z$, the main idea is to merge $z^{1/2}$ groups together. Computing the size of the result of this merge involves summing $z^{1/2}$ group sizes. In each group, nodes choose a term to learn in the sum at random (among the $z^{1/2}$ terms). In expectation, $z^{1/2}$ nodes are assigned to each term. Because of the highly connected structure of almost-cliques, we can elect a \emph{unique} node for each term, allowing us to aggregate all values without double counting. Since the sizes of the groups grow polynomially, after $O(\log\log n)$ rounds, all sums have been computed. 

\begin{restatable}{lemma}{prefixsumsLemma}
\label{lem:prefixsums}
Let $T_i$ be a family of sets such as described in \cref{lem:dom-groups}. Suppose nodes of each group $T_i$ knows some value $y_i \le \poly(n)$. There is a $O(\log\log n)$-round \bcstream algorithm such that w.h.p.\ all nodes in $T_i$ learn $\sum_{j < i} y_j$.
\end{restatable}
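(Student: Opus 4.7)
The plan is to organize a hierarchy of super-groups over the $T_i$'s, with sizes growing doubly-exponentially, so that $O(\log\log n)$ phases suffice for a single super-group to cover all of $K$. Let $z_0 := |T_i| = \Theta(C \log n)$ and recursively set $z_{t+1} := z_t^{3/2}$. At phase $t$, super-groups $\mathcal{G}^{(t)}_\alpha$ partition the $T_i$'s into contiguous blocks totalling $\Theta(z_t)$ nodes each. After $T = O(\log \log n)$ phases, the top-level super-group covers all of $K$. The invariant maintained throughout the merging is that each super-group's nodes collectively know the partial sum $Y^{(t)}_\alpha := \sum_{i : T_i \subseteq \mathcal{G}^{(t)}_\alpha} y_i$ and the group's rank among its siblings.

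For phase $t$, consider a level-$(t+1)$ super-group $\mathcal{G} = \mathcal{G}^{(t+1)}_\beta$ composed of $r := z_t^{1/2}$ sub-super-groups $\mathcal{G}_1, \ldots, \mathcal{G}_r$ with known sums $Y_1, \ldots, Y_r$. I would merge as follows: each node $v \in \mathcal{G}$ independently samples $\tau(v) \in [r]$ uniformly at random, inducing a random partition of $\mathcal{G}$ into buckets $B_1, \ldots, B_r$. By \cref{lem:chernoff}, each $B_j$ has size $\Theta(z_t) \gg C\log n$ w.h.p., so $B_j$ itself 2-hop connects $K$ and $B_j \cap \mathcal{G}_j$ is nonempty. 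Within $B_j \cap \mathcal{G}_j$, a unique broadcaster $v^*_j$ is elected as the min-\ID node, using a \relabel-style $O(1)$-round procedure made possible by $\mathcal{G}_j$ having constant diameter. Each $v^*_j$ then broadcasts $(\beta, j, Y_j)$. Leveraging the 2-hop connectivity of the $T_i$'s in $K$, each $v \in \mathcal{G}$ receives $v^*_j$'s message (directly or via a one-step relay) within $O(1)$ rounds, and accumulates the values tagged with its own $\beta$ into a single running sum. Because each $(\beta, j)$ is broadcast by a unique elected node, no deduplication is required---the running sum correctly computes $Y^{(t+1)}_\beta$ using only $O(\log n)$ bits of memory.

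After the $O(\log \log n)$-round upward sweep, I would run a symmetric downward sweep that propagates ``prefix-within-parent'' information from the top of the hierarchy down. For each sub-super-group $\mathcal{G}^{(t)}_{\alpha_j} \subseteq \mathcal{G}^{(t+1)}_\beta$, its nodes learn the partial sum $\sum_{j' < j} Y^{(t)}_{\alpha_{j'}}$ computed within its parent. Combined telescopically with the inherited prefixes from higher levels, this yields, at level $0$, the desired quantity $\sum_{j < i} y_j$ at each $T_i$. The downward sweep is also implementable in $O(1)$ rounds per level using the same election-plus-broadcast primitive.

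The main obstacle is the $\poly\log n$ memory constraint of \bcstream: the number $r = z_t^{1/2}$ of sub-super-groups merged at each phase grows beyond $\poly\log n$ in late phases, so a naive scheme in which each receiver buffers all $r$ incoming values is infeasible. The central trick making the algorithm work is the combination of (i) unique representative election per term, which ensures that the streaming aggregation can be performed via a single running-sum variable with no need to track which terms have already been counted, and (ii) the high connectivity of the almost-clique, which ensures complete reception in $O(1)$ rounds. Subsidiary details include arguing that random bucketing stays balanced via concentration, that election within constant-diameter clusters uses only small bandwidth, and that the downward sweep's prefix-within-parent step can also be reduced to per-level running sums.
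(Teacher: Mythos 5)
Your proposed algorithm follows essentially the same architecture as the paper's: a hierarchy of super-groups whose sizes grow as $z_{t+1} = z_t^{3/2}$, giving $O(\log\log n)$ merge phases; at each phase, a random bucketing assigns to each term a small set of nodes so that a unique representative per term can be elected; and a telescoping identity recovers the global prefix sum for each $T_i$. The two minor structural differences --- the paper computes the ``prefix-within-parent'' information during the upward merges (via \cref{lem:streaming-aggregation-large}, which explicitly outputs both the partial prefix and the full range sum), whereas you propose a separate downward sweep; and the paper buckets within each $S_i$ into sets $R_{i,j}$ holding $y_j$ whereas you bucket all of $\mathcal{G}$ globally --- do not change the asymptotics.

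There is, however, a concrete gap in your aggregation step. You claim that once a unique $v^*_j$ broadcasts $(\beta,j,Y_j)$, every node in $\mathcal{G}$ can simply add received tagged values to a running sum ``with no deduplication required.'' That holds only if every $v \in \mathcal{G}$ hears each $(\beta,j)$ \emph{exactly once}. But $K$ is merely an almost-clique: $v^*_j$ has up to $\epsilon\Delta$ anti-neighbors in $K$, so some $v$ need a one-step relay, and you explicitly allow for this. Relays destroy uniqueness: a node $v$ that is a direct neighbor of $v^*_j$ and also has a neighbor relaying $(\beta,j,Y_j)$ will receive that term twice and double-count it. Tracking which $(\beta,j)$ pairs have been seen is not an option under \bcstream memory constraints once $r = z_t^{1/2}$ exceeds $\poly\log n$. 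The paper avoids this by routing aggregation over an explicit depth-$2$ tree inside each $S_i$: a leader, elected chiefs $R_{i,j}$, and one elected common-neighbor relay per chief, so that each $y_j$ contributes to the leader's sum along exactly one path. You would need to replace your unstructured ``running sum over received broadcasts'' with an analogous structured aggregation (or another explicit mechanism guaranteeing exactly-once delivery) for the argument to go through.
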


The same idea allows nodes to find the $i$-th color in the clique palette. When the only remaining nodes are from the put-aside sets, the algorithm only requires $\poly\log n$ memory. Indeed, we can assume the clique palette has size $O(\log^3 n)$ and we sample $O(\log^3 n)$ colors at each step of the process. Observe that the communication procedure described in \cref{lem:echo} works in \bcstream if nodes know in advance which messages they need to store (e.g., the $i$-th color in the clique palette) or if the total number of messages is $\poly\log n$ (e.g., when coloring the put-aside sets).

\subsection{Computing Prefix Sums}
We focus our attention on a clique $K$. We call a \emph{spanning group} a subset $T\subseteq K$ of size $O(\log n)$ and such that for any pair of vertices $u,w\in K$ we have $|T\cap N(u)\cap N(w)| \ge C\log n$. Note that the sets $T_i$ produced by each $v$ sampling a random index $i\in [\Delta/(4C\log n)]$ are a family of disjoint spanning groups, w.h.p.\ (see \cref{lem:dom-groups}).

We begin by dividing the $\set{T_i}_{i\in[k]}$ in ranges of $z_0=C\log n$ groups. Groups in the same range \emph{merge}: they learn their prefix sum \emph{inside the range} as well as the sum of all $y_j$'s in the group. At this point of the algorithm, there are no issues of double counting as each node only learns $O(\log n)$ values determined in advance by its spanning group (\cref{lem:streaming-aggregation-small}).

We then run $O(\log\log n)$ iterations in which we recursively merge groups. At iteration $i$, we merge ranges of $z_i^{1/2}$ groups, where $z_i$ is a lower bound on the size of each group. The size of newly formed groups is at least $z_{i+1} = z_i^{3/2}$. To compute the prefix sums in \cref{lem:prefixsums}, nodes learn $\sum_j y_j$ over groups of smaller index within their range, as well as the sum over all values for its range. Since each range merges $z_i^{1/2} \ll z_i$ groups, we can randomly assign each term of $\sum_j y_j$ to a unique node in each group. Since groups are union of spanning groups $T_i$, they are well connected and allow for simple aggregation. This process is formalized in \cref{lem:streaming-aggregation-large}.

\begin{lemma}
\label{lem:streaming-aggregation-small}
Let $T_1, \ldots, T_k$ be a family of spanning groups and let $z_0=C\log n$. Furthermore, fix some $y_i$ for each $i\in[k]$ and suppose each $v\in T_i$ knows $y_i$. There is a $O(1)$-round \bcstream algorithm such that nodes of $T_i$ learn all $y_j$ for 
$1+\floor*{\frac{i-1}{z_0}}z_0 \le j \le \floor*{\frac{i}{z_0}}z_0$.
\end{lemma}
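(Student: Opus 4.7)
The approach is to have each node broadcast its $y$-value once, tagged with its group index, and to have each recipient filter the incoming stream to retain only the at most $z_0 = C\log n$ values whose indices fall in its range. The whole procedure fits in a single round of \bcstream, well below the $O(1)$ target.

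Concretely, in one round, every $v \in K$ broadcasts the triple $(\ID(K),\, t(v),\, y_{t(v)})$, which fits in a single $O(\log n)$-bit message because $t(v) \le k \le \Delta$ and $y_{t(v)} \le \poly(n)$. A receiver $v \in T_i$ knows its own index $i$ and pre-computes its range $[a,b]$ where $a = 1+\lfloor(i-1)/z_0\rfloor z_0$ and $b = \lfloor i/z_0\rfloor z_0$. As messages arrive, $v$ discards any triple whose first field differs from $\ID(K)$ (eliminating broadcasts from neighbors outside $v$'s almost-clique) and inserts any remaining triple $(\ID(K),j,y)$ with $a \le j \le b$ into a local dictionary keyed by $j$, ignoring duplicates (all members of $T_j$ hold the same $y_j$, so collisions are harmless).

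Correctness follows from the 2-hop connectivity of spanning groups: applying \cref{lem:dom-groups} with $u=w=v$ gives $\card{T_j \cap N(v)} \ge (C/4)\log n \ge 1$ for every $v \in K$ and every $j \in [k]$. Hence, for each $j \in [a,b]$, at least one neighbor of $v$ broadcasts a triple carrying $y_j$, and $v$ stores it; the value $y_i$ itself is known by hypothesis. The dictionary holds at most $z_0 = C\log n$ entries of $O(\log n)$ bits each, for a total memory footprint of $O(\log^2 n) = \poly\log n$, within the \bcstream budget.

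The only real obstacle, modest as it is, lies in staying within the streaming memory bound while a high-degree node processes up to $\Theta(\Delta \log n)$ bits of incoming traffic in the round. This is handled by the fact that the admissible range $[a,b]$ is determined from $v$'s locally known index $i$ before communication starts, so the filtering decision on each incoming triple is a constant-time predicate and the dictionary never exceeds $z_0$ entries. Crucially, this lemma only requires \emph{delivery} of the individual $y_j$'s (not any aggregation across distinct $j$'s), which is why a plain broadcast-and-filter scheme suffices here and the more involved recursive merging is postponed to \cref{lem:streaming-aggregation-large}.
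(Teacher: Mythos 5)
Your proof is correct and follows essentially the same approach as the paper's (which is only two sentences long): each node broadcasts its $y$-value tagged by its group index, and each receiver exploits the fact that a spanning group definition guarantees $|T_j \cap N(v)| \ge C\log n \ge 1$ for every $j$ in its range, so it can stream-filter the at most $z_0$ relevant values into $O(\log^2 n)$ memory. The only cosmetic nit is that you invoke \cref{lem:dom-groups} to get $|T_j\cap N(v)|\ge 1$; the lemma hypothesis already assumes the $T_j$ are spanning groups, so the definition of a spanning group (with $u=w=v$) gives this directly without appealing to the random-bucketing lemma.
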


\begin{proof}
Each node must learn $z_0 < |T_i|$ values.
If each node in $T_i$ broadcasts $y_i$, then a node $v$ can receive (and store) its $z_0=O(\log n)$ values because it has $z_0$ neighbors in each $T_i$. 
\end{proof}

\begin{lemma}
\label{lem:streaming-aggregation-large}
Let $K$ be an almost-clique and $S_1, \ldots, S_m$ be $m$ disjoint subsets of $K$ that are union of disjoint spanning groups, and each of size at least $z \ge C^2\log^2 n$. Furthermore, fix some $y_i$ for each $i\in[m]$ and suppose each $v\in S_i$ know $y_i$. Then, in $O(1)$ rounds of \bcstream, w.h.p.\ nodes of $S_i$ can learn the sums
\begin{itemize}
\item $\sum_j y_j$ where $1+\floor*{\frac{(i-1)}{z^{1/2}}}z^{1/2} \le j < i$, and
\item $\sum_j y_j$ where $1+\floor*{\frac{(i-1)}{z^{1/2}}}z^{1/2} \le j \le \floor*{\frac{i}{z^{1/2}}}z^{1/2}$.
\end{itemize}
\end{lemma}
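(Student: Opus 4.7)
The plan is to follow the sketch preceding the lemma, decomposing the task into three phases: random assignment of the $z^{1/2}$ terms of the range to unique owners inside each $S_i$; dissemination of every $y_j$ to its owners via a spanning group of $S_j$; and aggregation of the owner-held values into the two target prefix sums, with subsequent dissemination to all of $S_i$.

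For the owner election, each $v \in S_i$ picks independently and uniformly at random a term $j_v \in \text{range}_i$, forming cohorts $A_{i,j} := \set{v \in S_i : j_v = j}$. Since $|S_i| \ge z \ge C^2 \log^2 n$ and $|\text{range}_i| = z^{1/2}$, a Chernoff bound (as in the proof of \cref{lem:dom-groups}) gives $|A_{i,j}| = \Theta(z^{1/2}) \ge C \log n$ simultaneously for all $i,j$ w.h.p. Within each cohort, the max-\ID\ node $c_{i,j}$ is elected as the unique owner via a one-round streaming protocol: every $v \in S_i$ broadcasts $(j_v, \ID_v)$, and each $v$ maintains a single $O(\log n)$-bit flag indicating whether it has seen a larger \ID\ with the same picked term. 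For the teaching step, fix within each $S_j$ a spanning group $T_j \subseteq S_j$ (which exists since $|S_j| \ge z \gg C \log n$ and $S_j$ is a union of disjoint spanning groups), and have all of $T_j$ simultaneously broadcast $(j, y_j)$. By \cref{lem:dom-groups}, every $v \in K$---in particular every owner $c_{i,j}$---has at least $C \log n$ neighbors in $T_j$ and therefore receives $y_j$; each owner keeps only the value matching its own $j_v$, so memory stays at $O(\log n)$ per node.

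The third phase is where I expect the main obstacle: aggregating the $z^{1/2}$ owner-held values into the two target sums and delivering both to every $v \in S_i$, all within $O(1)$ rounds and $\poly(\log n)$ memory. The difficulty is that a direct round of broadcasts by the owners forces each receiver in $S_i$ to deduplicate across $z^{1/2}$ sender identities, and a per-term bitmap can exceed the $\poly(\log n)$ memory budget once $z$ grows through the iterations of \cref{lem:prefixsums}. The plan is to leverage the union-of-spanning-groups structure again, this time on the receiver side: since $|S_i| \ge z \ge z^{1/2} \cdot C \log n$, at least $z^{1/2}$ disjoint spanning groups $U_i^{(l)} \subseteq S_i$ fit inside $S_i$, and we publicly assign term $j_l$ to $U_i^{(l)}$. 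The whole $U_i^{(l)}$ learns $y_{j_l}$ directly from its $\ge C \log n$ neighbors in $T_{j_l}$, and then a canonical representative of $U_i^{(l)}$---identified by the smallest \relabel-label---rebroadcasts the contribution. Each $v \in S_i$ identifies the term $l$ from the sender's small label alone, accumulates the two running sums from the stream without maintaining a per-term bitmap, and a final $O(1)$-round propagation over the diameter-$2$ structure of $K$ covers any remaining nodes of $S_i$. Verifying this last phase---in particular, that canonical senders can be chosen so every $v \in S_i$ both sees each term exactly once and avoids double counting under the $\poly(\log n)$-memory constraint---is the delicate part of the argument.
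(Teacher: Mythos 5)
Your phases 1 and 2 --- random assignment of terms to cohorts and owner election, then dissemination of each $y_j$ through a spanning group of $S_j$ --- track the paper's proof closely: the paper forms cohorts $R_{i,j} = \set{v \in S_i : r(v) = j}$ for a u.a.r.\ $r(v) \in \set{1,\ldots,z^{1/2}}$, shows via Chernoff that w.h.p.\ every $\abs{R_{i,j}}\ge z^{1/2}/2$, and uses Many-to-All Broadcast (\cref{lem:echo}) to disseminate, with each $v \in R_{i,j}$ storing only $y_j$. The differences at this stage (max-ID election vs.\ an arbitrary chief; random cohorts vs.\ deterministically chosen spanning subgroups) are unimportant.

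The genuine gap is exactly where you flag it: phase 3. If each $v \in S_i$ accumulates a running sum from whichever representatives it happens to hear, the resulting partial sums are not mergeable --- a ``final $O(1)$-round propagation'' in which nodes re-exchange partial sums will double-count any term contributing to two overlapping partial sums, and you have no mechanism (and under the $\poly(\log n)$-memory constraint, no room for a per-term bitmap once $z^{1/2}$ exceeds $\poly(\log n)$) to dedupe. Your proposal also quietly abandons the phase-1 owners in favor of the re-assigned spanning groups $U_i^{(l)}$, so the two phases do not compose into a single argument. What closes the gap in the paper is a structural trick: it exploits that each $R_{i,j}$ $2$-hop connects $K$ and therefore that any chief of $R_{i,j}$ shares a neighbor with any fixed leader of $S_i$. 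Each chief announces an ID of one such common neighbor, which assembles a depth-$2$ tree rooted at the leader with the $z^{1/2}$ chiefs as leaves. Convergecast on this tree gives each intermediate node a known, disjoint set of chief children to sum over, the leader aggregates those partial sums, and the total is re-diffused over $K$'s diameter-$2$ structure. Because each chief is the unique node responsible for its term $y_j$ and the tree edges partition them, double counting is impossible by construction, rather than by filtering. I would encourage you to adopt this tree-based convergecast in place of the representative-rebroadcast you sketch; the rest of your argument will then go through with only cosmetic changes.
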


\begin{proof}
To avoid cumbersome notations, we focus on groups $S_1, \ldots, S_{{z^{1/2}}}$. To prove the lemma, it suffices to repeat the same process in parallel for each contiguous sub-range of $z^{1/2}$ indices in $[m]$.
In each set $S_i$, nodes sample a random value $r(v)\in \range{z^{1/2}}$. We form subsets $R_{i,j}=\set{v\in S_i: r(v)=j}$. In expectation $\Exp[|R_{i,j}|]=|S_i|/z^{1/2} \ge z^{1/2}$ and by Chernoff Bound, w.p. $1-e^{-\Theta(\sqrt{z})}\ge 1-n^{-\Theta(C)}$, all $|R_{i,j}|$ have size at least $z^{1/2}/2$. Furthermore, $R_{i,j}$ has strong diameter 2. Indeed, for any $u,w\in R_{i,j}$, they have $C\log n$ neighbors in each spanning group $T\subset S_i$. Since a spanning group $T\subseteq S_i$ has size $O(\log n)$, there must be at least $z/O(\log n)$ such groups. Counting $C\log n$ shared neighbors in $N(u)\cap N(w)$ for each of the $z/O(\log n)$ spanning group contained in $S_i$, we get that $u$ and $w$ have $\Omega(z)$ common neighbors. Therefore, by Chernoff, w.p. $1-e^{-\Omega(\sqrt{z})}\ge 1-n^{-\Theta(C)}$, $u$ and $w$ have at least $\Omega(z^{1/2})$ common neighbor in $R_{i,j}$.

If nodes $v\in S_i$ broadcast $y_i$ for each $i\in[m]$, because $|S_i| \ge z$ for each $i\in[m]$, we must have $m \le |K|/z \le \Delta/(C\log n)$ different messages. Therefore, they are disseminated in $O(1)$ rounds (by \cref{lem:echo}). Node $v\in R_{i,j}$ for $i,j \le z^{1/2}$ stores only the value $y_j$.

We now explain how to aggregate these values to compute the sums in each $S_i$. Elect an arbitrary \emph{leader} in $S_i$ and arbitrary \emph{chiefs} $R_{i,j}$ for each $j\le z^{1/2}$. Each chief broadcast the \ID of one shared neighbor with the leader. This yields a depth-2 tree, with the leader as root and chiefs as leaves. We aggregate the desired sums on the tree. Note that the chief in group $R_{i,j}$ is the only node in $S_i$ to broadcast $y_j$. This avoids double counting. Once the leader has computed the sums, two rounds of BFS diffuse their values to all nodes in $S_i$.
\end{proof}

\begin{proof}[Proof of \cref{lem:prefixsums}]
We repeatedly aggregate values of larger and larger groups of nodes. We define the following sequence:
\[
    z_0 = C\log n \ , \quad z_1 = z_0^2 \quad\text{and}\quad z_{i+1} = z_i^{3/2} \ .
\]
Our algorithm starts with spanning groups $S_{0,j}:=T_j$ and merges $T_{1+(j-1)z_0}, \ldots, T_{j\cdot z_0}$ together in $S_{1,j}$. It then merges $z_i^{1/2}$ groups $S_{i,1+(j-1)z_i^{1/2}}, \ldots, S_{i,j \cdot z_i^{1/2}}$ into $S_{i+1,j}$ at each iteration. It maintains the invariant $|S_{i,j}|\ge z_i$ for all iterations $i$ and sets $j$. By \cref{lem:streaming-aggregation-small,lem:streaming-aggregation-large}, each iterations takes $O(1)$ rounds. After $O(\log\log\Delta)$ iterations, we merged all groups. Although we describe the process computing $\sum_{j=1}^k y_j$, it is not hard to see that group $T_i$ can also compute the truncated sum $\sum_{j\le i} y_j$.
\end{proof}

\paragraph{Acknowledgements.}
This work was supported by the Icelandic Research Fund grants 217965 and 2310015-051.

\bibliographystyle{alpha}
\bibliography{arxiv-version.bbl}
\appendix

\section{Colorful Matching}
\label{sec:appendix-colorful-matching}

Our context differs from \cite{ACK19,FGHKN22} in two ways: some nodes were already colored by the slack generation step and we must reserve a small set of $O(\epsilon \Delta)$ colors (\cref{eq:clique-palette}). This turns out not to be an issue as arguments from \cite{ACK19,FGHKN22} only need enough anti-edges and colors.

Concretely, they define a potential function $\avail_D(F)$ as such. Fix an almost-clique $K$ and a (possibly adversarial) coloring outside $K$. For a set of colors $D$, and some anti-edge $e$ in $K$, define $\avail_D(e)$ the number of colors that anti-edge $e$ can adopt in $D$ (without conflicting with colored neighbors inside or outside $K$). By extension, for a set $F$ of anti-edges, define $\avail_D(F)=\sum_{e\in F} \avail_D(e)$. %
\Cref{thm:colorful-matching} of \cite{FGHKN22} computes a colorful matching as long as enough anti-edges have enough available colors:
\begin{lemma}[Reformulation of \cref{thm:colorful-matching}]
Let $\beta < 1/(18\epsilon)$ be a constant, $D_K \subseteq [\Delta+1]$ and $F_K$ the set of anti-edges in $K$ with both endpoints uncolored. Suppose that for all $K$, we have $\avganti_K \ge C\log n$ and $\avail_{D_K}(F_K) \ge \avganti_K\Delta/3$ for any coloring of $V\setminus K$.
Then, there exists a $O(\beta)$-round algorithm called \matching that computes a colorful matching of size $\beta\cdot\avganti_K$ with probability $1-n^{-\Theta(C)}$ in each almost-clique $K$.
Furthermore, at most $2\beta\cdot\avganti_K$ nodes are colored in each almost-clique during this step.
\end{lemma}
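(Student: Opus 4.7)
The plan is to verify that the colorful-matching analysis of \cite{FGHKN22} is insensitive to which colors and anti-edges are considered, and therefore adapts almost verbatim once its invariants are rephrased in terms of the abstract potential $\avail_{D_K}(F_K)$. At a high level, \matching proceeds in $O(\beta)$ iterations, each designed to enlarge the current matching by $\Omega(\avganti_K)$ anti-edges. In one iteration, each uncolored endpoint of an anti-edge in $F_K$ samples a color from $D_K$ with constant activation probability; an anti-edge $e$ whose two endpoints propose a common color $c \in \avail_{D_K}(e)$ becomes a candidate, and an $O(1)$-round local conflict-resolution step retains a constant fraction of the candidates as valid new matching edges.

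First, I would verify that one iteration produces $\Omega(\avganti_K)$ new matched anti-edges with high probability. Reusing the expected-growth calculation of \cite{FGHKN22}, the expected number of candidate edges is a constant factor times $\avail_{D_K}(F_K)/\Delta$, and substituting the hypothesis $\avail_{D_K}(F_K) \ge \avganti_K \Delta/3$ produces expected growth $\Omega(\avganti_K)$. Concentration then follows by applying Talagrand's inequality (\cref{lem:talagrand}) to the size of the new matching viewed as a function of the independent per-node color choices in $K$; this function is $O(1)$-Lipschitz and $O(1)$-certifiable, and the assumption $\avganti_K \ge C\log n$ yields concentration with probability $1 - n^{-\Theta(C)}$.

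Next, I would argue that the hypothesis $\avail_{D_K}(F_K) \ge \avganti_K \Delta/3$ is preserved across all $O(\beta)$ iterations. Each newly matched anti-edge colors two nodes and consumes one color of $D_K$, and each such change decreases $\avail_{D_K}(F_K)$ by at most $O(\Delta)$ (the anti-edges incident to each newly colored node, plus the contribution of the used color). Summing over $\beta\cdot\avganti_K$ matched edges yields a cumulative shrinkage of $O(\beta\avganti_K\Delta)$, which for $\beta < 1/(18\epsilon)$ leaves a constant fraction of the initial $\avganti_K\Delta/3$ budget intact throughout the run, so the one-iteration growth estimate re-applies at every step. The bound $2\beta\avganti_K$ on nodes colored by \matching is then immediate, since each matched edge colors exactly two nodes.

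The main obstacle I expect is this last step: carefully tracking how the availability potential $\avail_{D_K}(F_K)$ decays across iterations while still satisfying the hypothesis needed for each subsequent iteration. The remaining adaptations from \cite{FGHKN22}---replacing the full palette $[\Delta+1]$ with $D_K$ and all anti-edges of $K$ with the surviving set $F_K$---are essentially notational, since every probability computation in their proof is expressed per-color and per-anti-edge and simply sums over the relevant index sets, so the reserved colors of \cref{eq:clique-palette} and the nodes already colored by slack generation are absorbed transparently into $D_K$ and $F_K$.
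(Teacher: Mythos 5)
The paper does not prove this lemma at all: it is a direct restatement of the colorful-matching theorem of \cite{FGHKN22}, rephrased in terms of the potential $\avail_{D_K}(F_K)$, and is cited without proof. The paper's actual contribution in \cref{sec:appendix-colorful-matching} is \cref{lem:init-avail}, which verifies that the \emph{hypothesis} of the restated theorem holds after slack generation and with the reserved colors $[x(K)]$ removed; the conclusion is delegated entirely to the cited work. Your proposal therefore goes beyond what the paper does — it attempts to re-derive the FGHKN22 result from scratch. That is a legitimate project, but it is not what the paper's proof consists of.

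Two of your quantitative steps as written do not hold up. First, the per-iteration expected candidate count should scale as $\avail_{D_K}(F_K)/\Delta^2$, not $\avail_{D_K}(F_K)/\Delta$: each endpoint of an anti-edge samples among $\Theta(\Delta)$ colors, so a collision on a specific color $c\in\avail_{D_K}(e)$ occurs with probability $\Theta(1/\Delta^2)$, giving $\Theta(\avail_{D_K}(e)/\Delta^2)$ per edge. For this to yield $\Omega(\avganti_K)$ you in fact need the hypothesis $\avail_{D_K}(F_K)\ge\avganti_K\Delta^2/3$ (which is what \cref{lem:init-avail} proves, and which the in-line statement appears to state with a typo as $\avganti_K\Delta/3$; you inherited the typo, and your $\avail/\Delta$ division happens to cancel it). Second, your accounting of the potential decay is too optimistic: coloring a node $v$ removes up to $a_v\le 2\epsilon\Delta$ anti-edges from $F_K$, each contributing up to $\Delta+1$ to the potential, so one matched edge can cost $\Theta(\epsilon\Delta^2)$, not $O(\Delta)$. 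The argument still closes, but precisely because $\beta<1/(18\epsilon)$ makes $2\beta\avganti_K\cdot 2\epsilon\Delta$ a strict minority of the $\Theta(\avganti_K\Delta)$ anti-edges in $K$; this is where the constant in the $\beta$ bound is actually used and cannot be hand-waved as ``cumulative shrinkage $O(\beta\avganti_K\Delta)$.''
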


The following lemma shows that almost-cliques with $\avganti_K \ge C\log n$ have a large number of available colors with high probability.

\begin{lemma}
\label{lem:init-avail}
For any almost-clique $K$, let $D=[\Delta+1]\setminus [x(K)]$ and $F$ be the set of anti-edges with both endpoints uncolored after slack generation. With high probability, $\avail_D(F) \ge \avganti_K\Delta^2/3$.
\end{lemma}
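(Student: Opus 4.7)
My plan is to lower bound each factor in $\avail_D(F) = \sum_{e \in F} \avail_D(e)$: $|D|$ is nearly $\Delta$, most anti-edges survive slack generation and remain in $F$, and each $e \in F$ has nearly $\Delta$ colors available in $D$. First, I would verify $|D| = \Delta + 1 - x(K) \geq 0.99\Delta$ by case analysis of \cref{eq:clique-palette}: for full cliques $x(K) = 200\ell \in O(\log^{1.1} n) \ll \Delta$; for closed cliques $x(K) = 400\avganti_K \leq 800\epsilon\Delta$, using $\avganti_K \leq 2\epsilon\Delta$ which holds in every almost-clique; for open cliques $x(K) \leq (\csp\epsilon/8)\Delta$. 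With $\epsilon = 10^{-5}$, every case yields $x(K) \leq \Delta/100$.

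Next, I would lower bound $|F|$. The total number of anti-edges in $K$ is $M_K = |K|\avganti_K/2 \geq (1-\epsilon)\Delta\avganti_K/2$. An anti-edge $e=uv$ leaves $F$ only if $u$ or $v$ tries a color during slack generation, so the count of departing anti-edges is dominated by $\sum_{v \in K} a_v \mathbf{1}[v\text{ tries}]$, a weighted sum of independent Bernoulli$(\ps)$ indicators with expectation $2 \ps M_K$. A Chernoff bound (\cref{lem:chernoff}) then gives $|F| \geq (1 - 4\ps) M_K \geq 0.98 M_K$ w.h.p.

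For each $e = uv \in F$, the colors in $D$ unavailable to $e$ are those used by colored neighbors of $u$ or $v$, inside or outside $K$. External (adversarial) neighbors block at most $e_u + e_v \leq 2\epsilon\Delta$ colors. The number of internally colored neighbors of $u$ or $v$ is at most the number that tried a color, a sum of $\leq 2\Delta$ independent Bernoulli$(\ps)$ variables with expectation $\leq \Delta/100$. Chernoff, combined with a union bound over the $\leq M_K \leq \Delta^2$ anti-edges, yields that each $e \in F$ has at most $\Delta/50$ colored internal neighbors w.h.p., hence $\avail_D(e) \geq 0.99\Delta - 2\epsilon\Delta - \Delta/50 \geq 0.95\Delta$. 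Summing,
\[
\avail_D(F) \geq |F| \cdot 0.95\Delta \geq 0.98 \cdot \frac{(1-\epsilon)\Delta\avganti_K}{2} \cdot 0.95\Delta \geq 0.45\, \avganti_K \Delta^2 > \frac{\avganti_K \Delta^2}{3}.
\]

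The main obstacle is ensuring the per-anti-edge concentration holds uniformly. Union-bounding over $\Theta(\Delta^2)$ anti-edges requires per-event failure probability $\ll 1/\Delta^2$, which in turn requires $\ps\Delta = \Omega(\log n)$. In the regime where this lemma is invoked alongside \matching, the precondition $\avganti_K \geq C\log n$ of \cref{thm:colorful-matching} combined with $\avganti_K \leq 2\epsilon\Delta$ forces $\Delta \geq C\log n / (2\epsilon) = \Omega(\log n)$, so the required concentration goes through; the same bound also suffices for the concentration of $|F|$ since $\ps M_K \geq \ps\Delta\avganti_K/2 = \Omega(\log n)$.
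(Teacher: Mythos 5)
Your overall route — lower-bound $|D|$, survival of anti-edges by concentration, and per-edge available colors — matches the paper's proof, with the same final accounting of $\sim 0.49\avganti_K\Delta\cdot(1-O(\eps))\Delta \geq \avganti_K\Delta^2/3$. There are, however, two points worth flagging, one of which is a genuine technical gap.

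The gap is in your concentration step for $|F|$. You write ``the count of departing anti-edges is dominated by $\sum_{v\in K} a_v\mathbf{1}[v\text{ tries}]$, a weighted sum of independent Bernoulli$(\ps)$ indicators with expectation $2\ps M_K$'' and then invoke \cref{lem:chernoff}. But \cref{lem:chernoff} as stated requires \emph{binary} random variables $X_i$, not a weighted sum $\sum a_v\mathbf{1}_v$ whose weights $a_v$ range up to $\Theta(\eps\Delta)$. You cannot apply it directly. The paper addresses exactly this by switching to \emph{Hoeffding's inequality}, and controls the failure probability through the second moment $\sum_{v\in K} a_v^2 \leq 2\eps\avganti_K|K|\Delta$ (using $a_v \leq 2\eps\Delta$), which yields $\Pr(X > 2\Exp X) \leq \exp(-\Theta(\avganti_K))$. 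If you want a Chernoff-style bound, you must first normalize by the maximum weight, i.e.\ work with $X_v = \frac{a_v}{2\eps\Delta}\mathbf{1}[v\text{ tries}] \in [0,1]$, and use a Chernoff variant valid for $[0,1]$-bounded (not merely $\{0,1\}$-valued) variables; the effective exponent then becomes $\Omega(\ps\avganti_K/\eps)$, which still works given $\avganti_K \geq C\log n$, but that step needs to be spelled out.

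The second difference is not a gap, just a choice: for the per-anti-edge blocking count, you use a per-edge Chernoff bound plus a union bound over all $\Theta(\Delta^2)$ anti-edges, requiring $\Delta = \Omega(\log n)$. The paper instead reuses the single clique-wide high-probability event from \cref{claim:inliers} — that at most a $1/100$ fraction of $K$ gets colored during slack generation — and deterministically bounds the internal blockers of \emph{any} anti-edge by $|K|/100 \leq \Delta/100$. This avoids the union bound over anti-edges entirely and is a bit cleaner. Both work; yours is slightly tighter (only neighbors of the endpoints matter), the paper's is simpler because one event covers every edge. You correctly identify that the lemma is only applied in the regime $\avganti_K \geq C\log n$, which (together with $\avganti_K \leq 2\eps\Delta$) supplies the $\Delta = \Omega(\log n)$ your union bound needs.
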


\begin{proof}
A node gets colored during slack generation w.p. at most $\ps$. Each time some node $v$ gets colored, $a_v$ anti-edges are removed from $F$. Let $X_v$ be the random variable equal to $a_v$ if $v$ gets colored and zero otherwise. Notice that $X=\sum_{v\in K} X_v$ is an upper bound on the number of edges removed from $F$: for each edge removed, $X$ is charged by at least one of its endpoints. We have $\Exp[X_v] = \ps a_v$ and $X_v \le a_v$. Moreover, $\sum_{v\in K} a_v = \avganti_K|K|$; hence, by a convexity argument, $\sum_{v\in K} a_v^2 \le \frac{\avganti_K|K|}{\epsilon\Delta}\cdot(\epsilon\Delta)^2 = 2\epsilon\avganti_K|K|\Delta$. By Hoeffding inequality \cite[Theorem 2]{Hoeffding},

\begin{align*} 
\Pr\parens{X > 2\Exp[X]} 
    &\le \exp\parens*{-\frac{2\Exp[X]^2}{\sum_{v\in K} a_v^2}} \\
    &\le \exp\parens*{-\frac{2\ps^2\avganti_K^2|K|^2}{2\epsilon\avganti_K|K|\Delta}} \\
    &= \exp(-\Theta(\avganti_K)) = n^{-\Theta(C)}\ .
\end{align*}

Therefore, w.h.p.\ at most $X < 2\Exp[X] = 2\ps\avganti_K|K|$ anti-edges are removed from $F$ by slack generation. This means that $F$ contains at least $(1/2-2\ps)\avganti_K|K| \ge 0.49\avganti_K|K|$ anti-edges. For each edge, at most $2\epsilon\Delta$ colors are blocked from the outside, at most $\Delta/100$ are blocked by nodes in $K$, and at most $10^3\epsilon\Delta$ are blocked by $x(K)$ (see \cref{eq:clique-palette}). Therefore, for $\epsilon\le 10^{-5}$,
\[ \avail_D(F) \ge 0.49\avganti_K|K|\cdot (1-2\epsilon-1/100-10^3\epsilon)\Delta \ge \avganti_K \Delta^2/3 \ .\qedhere\]
\end{proof}

\section{Reducing Put-Aside Sets}
\label{sec:reduce-put-aside}

\begin{Algorithm}
\label{alg:comptry}
Procedure \comptry, in almost-clique $K \in \Kfull$, on uncolored subset $S \subseteq \hK$ of size $O(\Delta / \log n)$.
\begin{flushleft}
\textbf{Parameters:} Let $C=O(1)$ be a large enough constant, $k:= \ceil*{C\log n / \log^2 \log n}$.

Each node $v \in S$ has a publicly known list of colors $L(v)$ of size $\poly(\log n)$ and an $O(\log \log n)$-bit identifier unique within $S$.
For each $v\in S$, let $S^-_v:= \set{u \in S: \ID(u)<\ID(v)}$.
\end{flushleft}
\begin{enumerate}[leftmargin=*]
    \item\label[step]{step:comptry-sample} Each $v \in S$ samples $k$ colors $c_1(v),\ldots, c_k(v)$ in $L(v) \cap \pal{v}$, independently and u.a.r., 
    and disseminates them to $S$ by Many-to-All Broadcast (\cref{lem:echo}).
    \item\label[step]{step:comptry-take} For each $v \in S$, processed in increasing \ID order:

    \textbf{If} $X_v := \set{i \in [k]: c_i(v) \in \pal{v} \setminus \col(S_v^-)} \neq \emptyset$
    \begin{description}
            \item[\textbf{then}] $v$ colors itself with $c_i(v)$, where $i = \min\set{X_v}$. ($\col(v) \gets c_i(v)$)
            \item[\textbf{else}] $v$ stays uncolored. ($\col(v) = \bot$)
        \end{description}
\end{enumerate}
\end{Algorithm}

\reducePutAside*

\begin{proof}
Let us first argue about the bandwidth of \comptry (\cref{alg:comptry}). The procedure has each node in $S$ send $k=\ceil*{\frac{C\log n}{\log^2\log n}}$ colors from publicly known lists of $\poly(\log n)$ colors together its $O(\log \log n)$ \ID. Many-to-All broadcast (\cref{lem:echo}) disseminates these messages to all of $S$ w.h.p.\ in only $O(1)$ rounds, given that $\card{S} \leq O(\Delta / \log n)$. Each disseminated message is of size $O(k \cdot \log \log n + \log \log n) = O(\log n / \log \log n)$, giving the claimed bandwidth.

We now argue the success probability of the procedure. \Cref{alg:comptry} essentially simulates the following sequential algorithm: nodes of $S$, in the order of their {\ID}s, each perform $k$ \trycolor, coloring themselves with the first successful one. They act as if they were connected, never adopting a color already taken by a node of smaller \ID. Colors tried by a node $v$ are sampled independently in a set which does not depend on any other colors tried, so can all be sampled in advance.

This is easily simulated in a distributed setting by \comptry. Once each node $v \in S$ knows the colors and {\ID}s of all other nodes in $S$, it can compute the behavior of all the nodes of smaller \ID $S_v^-$ as they each pick the first of their tried colors not taken by an earlier node, if it exists. Once $v$ has computed the colors adopted by nodes in $S_v^-$, it knows whether it can adopt any of its own colors and potentially color itself.

We now bound the probability that more than $z$ nodes in $S$ fail to get colored. For each $v$, regardless of the colors adopted and tried by the nodes of smaller \ID $S_v^-$, we have 
\[
\card{L(v) \cap \pal{v} \setminus \col(S_v^-)} \geq z\ .
\]
This means that as an uncolored node $v$ tries its $i$th color in the sequential process, regardless of previous \trycolor attempts by $v$ or nodes of smaller \ID, it always succeeds with probability at least
\begin{equation}
\label{eq:prob}
\Pr\parens*{c_i(v) \in L(v)\cap \pal{v} \setminus \col(S^-_v) \mid S^{-}_v} \ge \frac{z}{|L(v)|} \ge \frac{1}{\log\log n \cdot \log^{0.1} n} \ .
\end{equation}

Let $X_v$ be the random variable indicating if $v$ failed to adopt any color by the end of the process. By the chain rule, \cref{eq:prob} implies
\begin{align*} 
\Pr\parens*{X_v=1~|~N^{-}(v)} 
    &= \parens*{1-\frac{1}{\log\log n\cdot\log^{0.1} n}}^k \\
    &\le \exp\parens*{-\frac{C\log^{0.9} n}{\log^3\log n}} \tag{by definition of $k$}\\
    &\le \exp\parens*{-\frac{C\log^{0.1} n}{10}} := p\ .\tag{for $n > 2$}
\end{align*}
The expected number of uncolored nodes is $\Exp[\sum_v X_v] \le p|S| \le z/4$ for large enough $C$. We get concentration by the martingale inequality (\cref{lem:chernoff}). The probability that more than $z$ nodes fail to adopt a color is at most $\exp(-z)$ (by \cref{eq:chernoffmore}). 
\end{proof}

\end{document}